\documentclass[11pt,a4paper]{article}

\usepackage{amsthm,amsmath,amsfonts,amssymb}
\usepackage[margin=2cm]{geometry}
\usepackage{colonequals}
\usepackage{graphicx}
\usepackage{ifthen,comment}

\usepackage{xcolor}
\usepackage{tikz}
\usetikzlibrary{decorations.pathreplacing}
\usepackage{quantikz}
\usepackage{authblk}

\usepackage{algorithm}
\usepackage{float}
\usepackage{algpseudocode}

\usetikzlibrary{shapes.geometric}
\newboolean{ElectronicVersion}
\setboolean{ElectronicVersion}{true}

\ifthenelse{\boolean{ElectronicVersion}}{
    \usepackage[pdftex,bookmarks,pagebackref,
	plainpages=false, 
        pdfpagelabels=true 
        ]{hyperref}}{}

\makeatletter
\newtheorem*{rep@theorem}{\rep@title}
\newcommand{\newreptheorem}[2]{%
\newenvironment{rep#1}[1]{%
 \def\rep@title{#2 \ref*{##1}}%
 \begin{rep@theorem}}%
 {\end{rep@theorem}}}
\makeatother

\usepackage{hyperref}
\hypersetup{
    bookmarksnumbered=true, 
    unicode=false, 
    pdfstartview={FitH}, 
    pdftitle={A Quantum Algorithm for st-Transport on Flat Connection Graphs}, 
    pdfauthor={Stacey Jeffery, Tobias J. Osborne and Galina Pass}, 
    pdfsubject={}, 
    pdfcreator={}, 
    pdfproducer={}, 
    pdfkeywords={}, 
    pdfnewwindow=true, 
    colorlinks=true, 
    linkcolor=blue, 
    citecolor=blue, 
    filecolor=blue, 
    urlcolor=blue 
}

\usepackage{aliascnt}
\usepackage[nameinlink]{cleveref}
\Crefname{section}{Section}{Sections}

\newtheorem{theorem}{Theorem}[section]

\newaliascnt{definition}{theorem}
\newtheorem{definition}[definition]{Definition}
\aliascntresetthe{definition}
\Crefname{definition}{Definition}{Definitions}

\newaliascnt{lemma}{theorem}
\newtheorem{lemma}[lemma]{Lemma}
\aliascntresetthe{lemma}
\Crefname{lemma}{Lemma}{Lemmas}

\newaliascnt{proposition}{theorem}
\newtheorem{proposition}[proposition]{Proposition}
\aliascntresetthe{proposition}
\Crefname{proposition}{Proposition}{Propositions}

\newaliascnt{corollary}{theorem}
\newtheorem{corollary}[corollary]{Corollary}
\aliascntresetthe{corollary}
\Crefname{corollary}{Corollary}{Corollaries}

\newaliascnt{claim}{theorem}
\newtheorem{claim}[claim]{Claim}
\aliascntresetthe{claim}
\Crefname{claim}{Claim}{Claims}

\newaliascnt{example}{theorem}

\aliascntresetthe{example}
\Crefname{example}{Example}{Examples}

\newaliascnt{conjecture}{theorem}

\aliascntresetthe{conjecture}
\Crefname{conjecture}{Conjecture}{Conjectures}

\newaliascnt{aside}{theorem}

\aliascntresetthe{aside}
\Crefname{aside}{Aside}{Asides}

\newaliascnt{remark}{theorem}
\newtheorem{remark}[remark]{Remark}
\aliascntresetthe{remark}
\Crefname{remark}{Remark}{Remarks}

\newaliascnt{fact}{theorem}

\aliascntresetthe{fact}
\Crefname{fact}{Fact}{Facts}

\newaliascnt{problem}{theorem}
\newtheorem{problem}[problem]{Problem}
\aliascntresetthe{problem}
\Crefname{problem}{Problem}{Problems}

\newcommand{\eq}[1]{\hyperref[eq:#1]{(\ref*{eq:#1})}}
\renewcommand{\sec}[1]{\hyperref[sec:#1]{Section~\ref*{sec:#1}}}
\newcommand{\thm}[1]{\hyperref[thm:#1]{Theorem~\ref*{thm:#1}}}
\newcommand{\lem}[1]{\hyperref[lem:#1]{Lemma~\ref*{lem:#1}}}
\newcommand{\cor}[1]{\hyperref[cor:#1]{Corollary~\ref*{cor:#1}}}
\newcommand{\app}[1]{\hyperref[app:#1]{Appendix~\ref*{app:#1}}}
\newcommand{\tab}[1]{\hyperref[tab:#1]{Table~\ref*{tab:#1}}}
\newcommand{\defin}[1]{\hyperref[def:#1]{Definition~\ref*{def:#1}}}
\newcommand{\fig}[1]{\hyperref[fig:#1]{Figure~\ref*{fig:#1}}}
\newcommand{\clm}[1]{\hyperref[claim:#1]{Claim~\ref*{claim:#1}}}
\newcommand{\conj}[1]{\hyperref[conj:#1]{Conjecture~\ref*{conj:#1}}}
\newcommand{\rem}[1]{\hyperref[rem:#1]{Remark~\ref*{rem:#1}}}
\newcommand{\algo}[1]{\hyperref[algo:#1]{Algorithm~\ref*{algo:#1}}}
\newcommand{\probl}[1]{\hyperref[prob:#1]{Problem~\ref*{prob:#1}}}

\newcommand{\thmthm}[2]{\hyperref[thm:#1]{Theorem~\ref*{thm:#1}} and~\hyperref[thm:#2]{\ref*{thm:#2}}}
\newcommand{\lemlem}[2]{\hyperref[lem:#1]{Lemma~\ref*{lem:#1}} and~\hyperref[lem:#2]{\ref*{lem:#2}}}

\usepackage{color}
\definecolor{darkgreen}{rgb}{0,.5,0}
\definecolor{darkred}{rgb}{.7,.3,.3}
\definecolor{deepblue}{rgb}{0,.1,.7}

\def\ket#1{{\lvert}#1\rangle}
\def\bra#1{{\langle}#1\rvert}

\def\braket#1#2{{{\langle}#1\vert}#2\rangle}
\def\abs#1{\left| #1 \right|}

\def\norm#1{\left\| #1 \right\|}

\newcommand{\eps}{\varepsilon}

\def\tO{{\widetilde{O}}}

\def\tilt{{\hat t}}
\def\tils{{\hat s}}

\newcommand\restr[2]{{
  \left.\kern-\nulldelimiterspace 
  #1 
  \vphantom{\big|} 
  \right|_{#2} 
  }}

\title{A Quantum Algorithm for $st$-Transport on Flat Connection Graphs}
\author[1,2]{Stacey Jeffery}
\author[3]{Tobias J. Osborne}
\author[1]{Galina Pass}

\affil[1]{QuSoft \& CWI, Amsterdam}
\affil[2]{QLever \& University of Amsterdam}
\affil[3]{Institut für Theoretische Physik and L3S Research Center, Leibniz Universität Hannover}
\date{}

\begin{document}
\vskip10pt

\maketitle
\begin{abstract}
We study a generalization of undirected $st$-connectivity to graphs whose edges carry quantum operations. Let $G=(V,E)$ be an undirected graph on $n$ vertices in which each edge $\{u,v\}$ is labeled by a unitary $U_{uv}\in\mathbb{C}^{k\times k}$, with $U_{vu}=U_{uv}^\dagger$. We assume the labels form a \emph{flat} connection: the ordered product of labels along any path between a pair of vertices $u$ and $v$ is independent of the path. Equivalently, the connection is pure gauge, i.e., gauge-equivalent to the trivial connection; such graphs are exactly the consistent connection graphs of spectral graph theory and the noiseless instances of group synchronization. Consequently, whenever $s$ and $t$ are connected, transporting a state from $s$ to $t$ defines a unique unitary $U_s(t)$.
Given states $\ket{\psi_s},\ket{\psi_t}\in\mathbb{C}^k$ and an oracle that returns the neighbours of a vertex while coherently applying the corresponding edge unitaries, the \emph{$st$-transport problem} is to decide whether $s$ and $t$ are connected and, if so, to estimate the squared overlap between $U_s(t)\ket{\psi_s}$ and $\ket{\psi_t}$ to additive error $\eps$. When $k=1$ and all labels are trivial, this is exactly undirected $st$-connectivity. We give a bounded-error quantum algorithm for $st$-transport that runs in time $\widetilde{O}(n/\eps)$ and uses $O(\log n+\log k+\log(1/\eps))$ space. We do this by designing a transducer and applying a Metropolis-Hastings reweighting to the input graph. We also prove an $\Omega(n)$ quantum query lower bound that holds even when $s$ and $t$ are promised to be connected, so for constant $\eps$ our algorithm is optimal up to polylogarithmic factors.
\end{abstract}

\section{Introduction}
A \emph{unitary-labeled graph}, also referred to as a \emph{connection graph}~\cite{singer2012vector,bandeira2013cheeger}, is a graph in which each edge is equipped with a unitary operator acting on a fixed internal Hilbert space. Such graphs naturally encode how quantum states transform when transported between adjacent vertices.

The idea of decorating the edges of a graph with unitaries originates in
fundamental physics. In order to explain the confinement of quarks, Wilson
proposed regulating quantum chromodynamics on a discrete spacetime
lattice~\cite{wilson1974confinement}: here the gauge field is nothing other
than an assignment of a unitary parallel transporter to each edge, and all
gauge-invariant physical information is carried by the ordered products of
these unitaries along paths, the eponymous \emph{Wilson loops}. (For discrete
gauge groups this construction was foreshadowed by Wegner's generalised Ising
models~\cite{wegner1971duality}.) Thus was born \emph{lattice gauge
theory}~\cite{kogut1979introduction,creutz1983quarks}, which, particularly in
the Hamiltonian formulation of Kogut and
Susskind~\cite{kogut1975hamiltonian}, is precisely the study of quantum
states transported along the edges of a unitary-labeled graph.

A most successful tool in the study of lattice gauge theory has been the
computer. Monte Carlo sampling of the euclidean path integral, pioneered by
Creutz~\cite{creutz1980montecarlo}, supplied the first convincing numerical
evidence for the coexistence of confinement and asymptotic freedom, and three
decades of refinement culminated in the ab initio determination of the light
hadron spectrum~\cite{durr2008abinitio}, a landmark quantitative confirmation
of the standard model. This success is not, however, unqualified: the sign
problem afflicting such sampling is NP-hard~\cite{troyer2005computational},
and the regimes where importance sampling fails --- notably real-time
dynamics and finite fermion density --- remain largely beyond the reach of
classical computation.

These obstructions have made gauge theory a natural proving ground for
quantum computation. Quantum algorithms for simulating Kogut--Susskind
Hamiltonians were proposed early on~\cite{byrnes2006simulating}, the
real-time dynamics of a simple gauge theory has since been demonstrated on a
trapped-ion quantum computer~\cite{martinez2016realtime}, and tensor-network
and quantum-simulation approaches to lattice gauge theory now constitute a
thriving research programme~\cite{banuls2020simulating}. The connection to
complexity theory runs deeper than simulation: computing scattering
amplitudes in quantum field theory admits an efficient quantum
algorithm~\cite{jordan2012quantum} and is, indeed,
BQP-complete~\cite{jordan2018bqp}, so that quantum field theories capture the
full power of quantum computation. Unitary-labeled graphs distill the
structure underlying these developments --- unitary parallel transport along
the edges of a graph --- into its combinatorial core, where the tools of
quantum query and time complexity can be brought directly to bear.

Since its introduction, the idea of associating unitary operators with the edges of a graph has been further developed in a variety of settings. Such unitary-labeled graphs appear in work across physics, mathematics, and computer science as a way of augmenting classical graphs with additional local structure; see, e.g.,~\cite{robertson2025generalization} for a representative recent example and further references. From a modeling perspective, unitary-labeled graphs can be viewed as a natural extension of classical graphs, in which edges carry additional algebraic structure.
Related structures have also appeared in the quantum complexity literature, notably in~\cite{bausch2017complexity}.

\begin{remark}[Related notions] Flatness has been studied under other names. In spectral graph theory, graphs whose edges carry unitary (or orthogonal) labels are called \emph{connection graphs}, and flatness is called \emph{consistency}~\cite{singer2012vector,bandeira2013cheeger}. A unitary-labeled graph is flat exactly when one can assign a unitary $g_u$ to each vertex $u$ so that $U_{uv}=g_vg_u^\dagger$ for every edge $\{u,v\}$ (fix a vertex $r$ in each connected component and take $g_u=U_r(u)$); in physics terminology, such a connection is \emph{pure gauge}. Recovering the $g_u$ from the edge labels is the noiseless case of \emph{group synchronization}~\cite{singer2011angular}. If the labels were given as explicit $k\times k$ matrices, $st$-transport would be classically easy: find any $st$-path and multiply the labels along it. The interest of our setting is that the labels can only be applied as black boxes, the internal dimension $k$ may be exponentially large, and we want small space.
\end{remark}

\paragraph{The $st$-transport problem.} In this paper, we study a promise problem on an undirected graph in which each edge $\{u,v\}$ is equipped with a unitary $U_{uv}\in \mathbb{C}^{k\times k}$ satisfying $U_{vu}=U_{uv}^\dagger$. Under a \emph{flatness} promise, meaning that the ordered product of edge unitaries along any path depends only on its endpoints (equivalently, the product of unitaries around any cycle is the identity), any two vertices $s,t\in V(G)$ in the same connected component define a unitary $U_s(t)$. Given boundary states $\ket{\psi_s},\ket{\psi_t}\in\mathbb{C}^k$, the task is to estimate the overlap $\abs{\bra{\psi_t} U_s(t) \ket{\psi_s}}^2$ with additive error $\varepsilon$ if $s$ and $t$ are connected, or output that $s$ and $t$ are disconnected otherwise. We call this the \emph{$st$-transport problem} (see \probl{gauge_problem_estimate} for a formal definition). To approach this problem, we study a variant (\probl{gauge_problem_state_gen}) in which the task is to map $\ket{\psi_s}$ to $U_s(t)\ket{\psi_s}$.
These problems can be viewed as a generalization of the well-studied undirected $st$-connectivity problem, in which the task is to decide whether there exists a path between two vertices $s$ and $t$ in a graph, usually by starting from $s$ and finding $t$ (or failing to do so). In the general setting considered here, the flatness promise guarantees that any path from $s$ to $t$ yields the same global transformation. The algorithmic challenge is therefore to access the action of the induced map $U_s(t)$ on $\ket{\psi_s}$ using only local oracle access to the graph and the edge unitaries.

Quantum algorithms with simultaneously optimal time and space complexity have been found for undirected $st$-connectivity, both in the adjacency array and adjacency matrix models~\cite{belovs2012stcon, apers_et_al:LIPIcs.ESA.2023.10}. In both models, these algorithms use $O(\log n)$ space; the running time is $\widetilde{O}(n)$ in the adjacency array model and $\widetilde{O}(n^{3/2})$ in the adjacency matrix model.
Our model (see \Cref{prob:gauge_problem_estimate}) generalizes the adjacency array model.

\paragraph{Our contribution.} We show that the $st$-transport problem can be solved by a bounded-error quantum algorithm running in $\widetilde{O} \left( n / \varepsilon \right)$ time\footnote{Throughout this work, $\tO(\cdot)$ hides factors polylogarithmic in $n$, $k$ and $\frac{1}{\eps}$.} and using $O(\log n+\log k+\log(1/\eps))$ space on graphs with $n$ vertices, where $k$ is the dimension of the edge unitaries and $\varepsilon$ is the additive error. This shows that adding a $k$-dimensional internal degree of freedom under a flatness promise still admits a polynomial-time quantum algorithm using logarithmic space in the graph size and local dimension, for constant $\eps$. In fact, in the constant $\eps$ regime, our time and space upper bounds are both optimal -- we prove an $\Omega(n)$ query lower bound, and $\Omega(\log n+\log k)$ space is necessary for the oracle -- and match the complexity of the quantum algorithm for $st$-connectivity from~\cite{apers_et_al:LIPIcs.ESA.2023.10}.

On the way, we design a quantum algorithm that approximates the state $U_s(t)\ket{\psi_s}$ using similar resources (see \Cref{prob:gauge_problem_state_gen} for the precise problem definition).
We additionally prove an $\Omega(n)$ lower bound for $st$-transport, even under the promise that $s$ and $t$ are connected.

Throughout the paper, when we refer to the running time of an algorithm, we count the following resources:
(1) elementary quantum gates, i.e., unitary operations acting on at most a constant number of qubits;
(2) queries to the graph oracle $O_G$ through which we access the graph and the associated unitaries, and, for weighted graphs, to the weighted-neighbour oracle $O_w$ (see \Cref{prob:gauge_problem_estimate}); and
(3) the state-preparation unitaries $O_s$ and $O_t$ that prepare the boundary states $\ket{\psi_s}$ and $\ket{\psi_t}$ from $\ket{0}$.

\paragraph{Our techniques.} Our algorithms use the framework of transducers \cite{belovs2023LasVegasTime}. In particular, it was known that a quantum walk operator is a transducer, and can even implement the state transformation $\ket{s}\mapsto\ket{t}$ for classical labels $s$ and $t$ associated to a pair of vertices \cite{belovs2024weldedTrees}. We extend this quantum walk transducer to flat unitary-labeled graphs, in a manner similar to multidimensional quantum walks~\cite{jeffery2022kDist}. The resulting construction can be viewed as a generalization of the line graphs associated with quantum algorithms in~\cite{jeffery2022kDist}, \cite{jeffery2022subroutines} and~\cite{jeffery_et_al:LIPIcs.STACS.2025.54}. This gives a transducer for the action $\ket{s}\ket{\psi_s}\mapsto \ket{t}U_s(t)\ket{\psi_s}$, up to a known sign, whenever $s$ and $t$ are connected --- otherwise $\ket{s}\ket{\psi_s}$ is mapped to itself. This allows us to detect connectedness. 
To estimate the overlap of $\ket{\psi_t}$ with $U_s(t)\ket{\psi_s}$ in the connected case, we compose this transducer with an amplitude estimation transducer. Composing at the level of transducers avoids the additional precision costs of repeatedly invoking an approximate state-generation algorithm inside the estimation procedure. The composition has exactly the desired transduction action, and we convert it into a quantum algorithm only at the end, with constant conversion error. This yields overlap estimation to additive error $\varepsilon$ in time $\widetilde O (\sqrt{\mathbf{W}\mathbf{R}} / \varepsilon)$, where {$\mathbf{W}$} is an upper bound on the total weight of the graph, and {$\mathbf{R}$} an upper bound on the effective resistance between $s$ and $t$.

Finally, we extend the Metropolis-Hastings graph construction, used in \cite{apers_et_al:LIPIcs.ESA.2023.10} in the setting of a quantum walk for undirected $st$-connectivity, to preserve both flatness and the endpoint transport. The resulting graph has essentially the same edges (each now subdivided into two), but modified weights, and satisfies $W'R_{s,t}' = O (n^2)$, where $W'$ is the total weight, and $R'_{s,t}$ the effective resistance. This gives the claimed $\widetilde O(n / \varepsilon)$ running time. 

Our lower bound follows by encoding a parity instance in the unitary labels along a path, establishing an $\Omega(n)$ query lower bound.

\bigskip

The thesis~\cite{pass2026quantum} presents a preliminary version of this work using the framework of subspace graphs~\cite{jeffery_et_al:LIPIcs.STACS.2025.54}, giving an algorithm for deciding whether $\ket{\psi_t}=U_s(t)\ket{\psi_s}$ or the two are orthogonal.

\paragraph{Organization.}
The remainder of this paper is organized as follows. In~\Cref{sec:prelims} we give preliminaries: graph-theoretic preliminaries and problem definitions (\Cref{sec:gauge-problem}) including the Metropolis-Hastings reweighting (\Cref{sec:MH-graph}); and the transducer framework (\Cref{sec:transducers}). In \Cref{sec:gauge-state-preparation-transducer}, we give transducers and algorithms for generating the state $U_s(t)\ket{\psi_s}$, and estimating its overlap with $\ket{\psi_t}$ (the $st$-transport problem), in any weighted unitary-labeled graph, with complexities depending
on the total weight and effective resistance, with an upper bound in general of $\tO(n^{3/2}/\eps)$ for unweighted graphs. In \Cref{sec:MH-connection} we show how to implement a Metropolis-Hastings reweighting on a unitary-labeled graph, and in \Cref{sec:linear-alg} we use this to improve the upper bound to $\tO(n/\eps)$. Finally, in \Cref{sec:gauge-lower-bound} we prove the lower bound.

\subsection{Acknowledgments and AI use}

We thank Maris Ozols and Michael Walter for helpful comments and discussions on these ideas.

This work is co-funded by the European Union (ERC, ASC-Q, 101040624); Divide \& Quantum  (with project number 1389.20.241) of the research programme NWA-ORC, which is (partly) financed by the Dutch Research Council (NWO); and the Dutch National Growth Fund (NGF), as part of the QDNL programme.

LLMs were used in this work for some computations, proofs, and figures, as well as proofreading. All high-level ideas came from the authors. We have checked all proof details, and rewritten LLM-generated proofs as needed. We take full responsibility for the correctness and quality of writing.

\section{Preliminaries}\label{sec:prelims}

\subsection{\texorpdfstring{Graph notation and the $st$-transport problem}{Graph notation and the st-transport problem}}\label{sec:gauge-problem}

\paragraph{Graphs.} Let $G=(V,E)$ be an undirected graph on $n$ vertices with no self-loops. For each vertex $u\in V$, let $N(u)$ denote the set of neighbors of $u$, and let
$d_u=|N(u)|$ denote its degree.

We can augment the edges of $G$ with positive real weights, $w_{uv}=w_{vu}$, to get a weighted graph. We can extend $w$ to all pairs of vertices by letting $w_{uv}=0$ whenever $\{u,v\}\not\in E$. We let
$$w_u:=\sum_{v\in N(u)}w_{uv}$$
denote the total weight of any $u\in V$. Then the total weight of $G$ is defined:
$$W(G):=\sum_{u,v \in V} w_{uv} = \sum_{u \in V} w_u.$$
Note that we are using the convention that the total weight counts every edge weight \emph{twice}.

\paragraph{Graph access.} In problems in which the input is a graph, there are multiple possible ways of accessing the graph. The most natural for the setting of random walks is the \emph{adjacency array model}, where it is assumed that for each vertex $u$, there is a bijection
\[
    f_u:\{0,\ldots,d_u-1\}\to N(u)
\]
accessed via an oracle.
We say that $f_u(i)$ is the \emph{$i$-th neighbour of $u$}. Accessing a graph through oracles of this kind lends itself naturally to the implementation of a random walk, a step of which can be implemented by sampling an index $i\in\{0,\dots,d_u-1\}$, where $u$ is the current vertex, and then letting $f_u(i)$ be the next vertex.
By this setup, for every edge $\{u,v\}\in E$, there exist unique indices
$i_{u,v}\in\{0,\ldots,d_u-1\}$ and
$i_{v,u}\in\{0,\ldots,d_v-1\}$ such that
\[
f_u(i_{u,v})=v
\qquad\text{and}\qquad
f_v(i_{v,u})=u.
\]

In a \emph{quantum random walk}, this way of accessing an input graph is not quite sufficient, as the action $\ket{u,i}\mapsto \ket{f_u(i)}$ is not unitary, so it is standard to assume stronger access to the reversible map that acts, for any $\{u,v\}\in E$, as $\ket{u,i_{u,v}}\mapsto\ket{v,i_{v,u}}$, called a \emph{rotation map}. \Cref{fig:neighbour_ordering} shows how $\ket{u,i_{u,v}}$ and $\ket{v,i_{v,u}}$ are two different encodings of the same edge, and thus the rotation map is reversible.

\paragraph{Unitary-labeled graphs and $st$-transport.} A unitary-labeled graph is simply a graph where we decorate each edge with a unitary, as we make precise in the following definition.
\begin{definition}\label{def:labeled-graph}
A \emph{unitary-labeled graph}, or \emph{connection graph}, is an undirected weighted graph $G=(V,E,w)$ equipped with a unitary operator $U_{uv}\in\mathbb{C}^{k\times k}$ for each edge $\{u,v\}\in E$, satisfying $U_{vu}=U_{uv}^{\dagger}$. We say that a unitary-labeled graph is \emph{flat} (or \emph{consistent}) if for any $u, v \in V$ and any two paths
\[
u = u_0, u_1, \ldots, u_L = v
\quad \text{and} \quad
u = u'_0, u'_1, \ldots, u'_{L'} = v
\]
it holds that
\[
U_{u_{L-1}u_L} \cdots U_{u_0u_1}
= U_{u'_{L'-1}u'_{L'}} \cdots U_{u'_0u'_1} =: U_u(v).
\]
\end{definition}

When a unitary-labeled graph is given as input, we can access it via a generalization of the rotation map described above, that also encodes the unitaries, as black boxes that are applied to an arbitrary state while simultaneously rotating the edge, denoted by $O_G$ in the following problem definition.

\begin{problem}[$st$-transport]\label{prob:gauge_problem_estimate}
Let $G=(V,E,w)$ be a flat unitary-labeled graph (\defin{labeled-graph}) with two distinguished vertices $s, t \in V$.
Let $\ket{\psi_s}, \ket{\psi_t} \in \mathbb{C}^k$ be states associated with $s$ and $t$, respectively.
Given access to $G$ via an oracle $O_G$ acting, for all $u\in V$, $i\in \{0,\dots,d_u-1\}$, and $\ket{\psi}\in\mathbb{C}^k$ as
\[
O_G: \ket{u}\ket{i}\ket{\psi} \mapsto \ket{v}\ket{j}\,U_{uv}\ket{\psi},
\]
where $v=f_u(i)$, and $f_v(j)=u$, and acting as the identity on all other basis states, and oracle $O_w$ acting as
\[
O_w: \ket{u}\ket{0} \mapsto \frac{1}{\sqrt{w_u}} \sum_{i = 0}^{d_u - 1} \sqrt{w_{u, f_u (i)}} \ket{u}\ket{i},
\]
and the unitaries $O_s$ and $O_t$ that map $\ket{0}$ to $\ket{\psi_s}$ and $\ket{\psi_t}$ respectively, the task is to estimate the overlap
\[
\abs{\bra{\psi_t}U_s(t)\ket{\psi_s}}^2
\]
to additive error $\eps$ if $s$ and $t$ are connected, and output that $s$ and $t$ are disconnected otherwise.
\end{problem}
In our cost model, each call to the oracles $O_G$, $O_{w}$ and to the unitaries $O_s$ and $O_t$ is counted as unit time. We additionally assume that the weighted degrees $w_s, w_t$ of $s$ and $t$, an upper bound $\mathbf{W}$ on the total weight $W(G)$, and an upper bound $\mathbf{R}$ on the effective resistance ${\cal R}_{st}(G)$ (\Cref{def:resistance}) in the connected case are known in advance.

In order to solve \Cref{prob:gauge_problem_estimate}, we will first solve the following state generation problem, which may be of independent interest.
\begin{problem}[Transport state generation]\label{prob:gauge_problem_state_gen}
Let $G=(V,E,w)$ be a flat unitary-labeled graph (\defin{labeled-graph}) with two distinguished vertices $s, t \in V$, which are promised to be connected.
Let $\ket{\psi_s} \in \mathbb{C}^k$ be a state associated with $s$.
Given access to $G$ via oracles $O_G$ and $O_w$ as in \Cref{prob:gauge_problem_estimate},
and a unitary $O_s$ that maps $\ket{0}$ to $\ket{\psi_s}$, the task is to generate the state $U_s(t)\ket{\psi_s}$.
\end{problem}

\paragraph{Flows and effective resistance.} In order to analyze our quantum walk inspired transducer, we need to define the \emph{effective resistance}, introduced to the study of quantum walks in~\cite{belovs2013ElectricWalks} (published in~\cite{belovs2013TimeEfficientQW3Distintness}), for which we need the concept of a \emph{flow}:

\begin{definition}\label{def:flow}
Let $G=(V,E,w)$ be a weighted undirected graph. An oriented edge of $G$ is an ordered
pair $(u,v)$ such that $\{u,v\}\in E$. A flow on $G$ is a real-valued
function $\theta$ on oriented edges satisfying $\theta_{uv}=-\theta_{vu}$ for every $\{u,v\}\in E$. For $u\in V$, define the net flow out of $u$ by
\[
    \theta(u):=\sum_{v\sim u}\theta_{uv}.
\]
Let $s,t\in V$. We call $\theta$ a unit $st$-flow if
\[
    \theta(s)=1,\qquad \theta(t)=-1,\qquad \theta(u)=0
    \quad\text{for all }u\in V\setminus\{s,t\}.
\]
We call $\theta$ an optimal unit $st$-flow if it minimizes
\[
    \sum_{\{u,v\}\in E}\frac{\theta_{uv}^2}{w_{uv}},
\]
where either orientation of each edge may be used in the summand, since $\theta_{uv}^2=\theta_{vu}^2$.
\end{definition}

\begin{definition}\label{def:resistance}
    On a weighted undirected graph $G$  with $s,t \in V(G)$ that are connected, the \emph{effective resistance} is defined 
    $${\cal R}_{st}(G)\colonequals\min_{\mbox{unit $st$-flows }\theta}\sum_{\{u,v\} \in E}\frac{\theta_{uv}^2}{w_{uv}}.$$
\end{definition}

\subsection{Metropolis-Hastings graph}\label{sec:MH-graph}
In this subsection, we recall the construction of the Metropolis-Hastings graph used in~\cite{apers_et_al:LIPIcs.ESA.2023.10}, based on the Metropolis-type walk analyzed in~\cite{kosowski2013faster}. The construction replaces every edge of an unweighted graph by a path of length two and assigns weights depending on the degrees of its endpoints. This produces a weighted random walk with a uniformly quadratic bound on the hitting time. Later, we will equip the resulting weighted graph with unitary edge labels.
\begin{definition}\label{def:MH}
Let $G = (V, E)$ be an unweighted graph. The corresponding \emph{Metropolis-Hastings graph} is the weighted graph $G'=(V',E',w)$ defined as follows. For every $u \in V$, we include a corresponding vertex $x_u$ in $V'$. In addition, for every edge $\{u,v\}\in E$, we add a new vertex $x_{u,v}$ that splits the edge into two new edges. Formally:
\begin{align*}
    V' &= \{x_u:u\in V\}\cup\{x_{u,v}:\{u,v\}\in E,u<v\}\\
    E' &= \{\{x_u,x_{u,v}\}, \{x_{u,v},x_v\}: \{u,v\}\in E,u<v\}.
\end{align*}
For each edge $\{u,v\} \in E$ with $u < v$, we define the weights
\[
w_{x_u,x_{u,v}}=\frac{1}{d_u},
\qquad
w_{x_{u,v},x_v}=\frac{1}{d_v}.
\]
\end{definition}
The main property of this construction that we need is the following quadratic upper bound on its hitting time.
\begin{lemma}[\cite{kosowski2013faster}, Lemma 2 of {arXiv} v2]\label{lem:MH}
Let $G=(V, E)$ be any unweighted graph with $\abs{V} = n$, and $G'$ the corresponding weighted Metropolis-Hastings graph as in \defin{MH}.
For any $s,t \in V$ connected by a path, the hitting time ${\cal H}_{x_s x_t} (G')$ (the expected number of steps needed to reach $x_t$ in a random walk starting from $x_s$) is at most $18 n^2$.
\end{lemma}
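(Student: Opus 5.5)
The plan is to bound the hitting time by the commute time and to use the commute-time identity for weighted graphs. For a random walk on $G'$ (transition probabilities proportional to the weights $w$), we have
\[
{\cal H}_{x_s x_t}(G') \le {\cal H}_{x_s x_t}(G') + {\cal H}_{x_t x_s}(G') = W(G')\,{\cal R}_{x_s x_t}(G').
\]
This uses the convention of the paper that $W$ counts every edge weight twice, which is exactly the normalisation in the identity $C = 2mR$ for unweighted graphs. So it suffices to bound $W(G')$ and ${\cal R}_{x_s x_t}(G')$ separately and check that their product is at most $18n^2$.

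\textbf{Total weight.} In $G'$, each original vertex $u$ is incident to exactly $d_u$ edges $\{x_u,x_{u,v}\}$, and each has weight $1/d_u$. Every edge of $E'$ has exactly one endpoint of the form $x_u$. Hence the sum of the edge weights is $\sum_{u\in V} d_u\cdot\frac{1}{d_u}=n$, and therefore $W(G')=2n$.

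\textbf{Effective resistance.} Fix a shortest path $s=u_0,u_1,\dots,u_L=t$ in $G$. Send the unit flow along the corresponding path $x_{u_0},x_{u_0,u_1},x_{u_1},\dots,x_{u_L}$ in $G'$ and zero elsewhere. The edge $\{x_{u_i},x_{u_i,u_{i+1}}\}$ has resistance $d_{u_i}$ and the edge $\{x_{u_i,u_{i+1}},x_{u_{i+1}}\}$ has resistance $d_{u_{i+1}}$. So by \Cref{def:resistance},
\[
{\cal R}_{x_s x_t}(G')\le \sum_{i=0}^{L-1}\bigl(d_{u_i}+d_{u_{i+1}}\bigr)\le 2\sum_{i=0}^{L} d_{u_i}.
\]

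\textbf{Key combinatorial step.} The main step is to show that $\sum_{i=0}^{L} d_{u_i}\le 3n$ for a shortest path. Write this sum as the number of pairs $(w,i)$ with $w\in V$ adjacent to $u_i$. If some $w$ were adjacent to $u_i$ and $u_j$ with $j\ge i+3$, then $u_i,w,u_j$ would give a route of length $2<j-i$. This contradicts minimality of the path. Hence every $w$ is adjacent to at most $3$ path vertices, all lying in an index window of width $3$, which gives the bound $3n$.

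\textbf{Conclusion.} Therefore ${\cal R}_{x_s x_t}(G')\le 6n$, and
\[
{\cal H}_{x_s x_t}(G')\le 2n\cdot 6n=12n^2\le 18n^2.
\]
The only delicate point is the shortest-path degree-sum bound; everything else is bookkeeping of the weight normalisation in the commute-time identity.
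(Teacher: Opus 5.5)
Your proof is correct. It is also more than the paper provides: the paper gives no proof of this lemma and imports the $18n^2$ bound from \cite{kosowski2013faster} as a black box.

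\textbf{Comparison of routes.} The paper then runs the commute-time identity (\lem{commute_time}) in the reverse direction in \Cref{cor:MH_WR}. That converts the cited hitting-time bound into $W(G'){\cal R}_{x_sx_t}(G')\le 36n^2$ and ${\cal R}_{x_sx_t}(G')\le 18n$. You instead give a self-contained electrical argument. The unit flow along the image in $G'$ of a shortest $st$-path has energy $\sum_i (d_{u_i}+d_{u_{i+1}})$. The classical fact that every vertex is adjacent to at most three consecutive vertices of a shortest path bounds this energy by $6n$.

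\textbf{What your route buys.} You get ${\cal R}_{x_sx_t}(G')\le 6n$ and $W(G'){\cal R}_{x_sx_t}(G')\le 12n^2$ directly. The paper only ever uses the product $W\cdot{\cal R}$, never the hitting time itself. So with your argument, \Cref{cor:MH_WR} needs neither the citation nor the commute-time identity, and its constants improve from $36n^2$ and $18n$ to $12n^2$ and $6n$. The paper's route buys only brevity, by outsourcing the combinatorics.

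\textbf{Two small bookkeeping repairs.}
\begin{enumerate}
\item If $G$ has isolated vertices, then $W(G')=2\abs{\{u\in V: d_u>0\}}\le 2n$, not $=2n$. The term $d_u\cdot\frac{1}{d_u}$ is undefined when $d_u=0$.
\item The equality ${\cal H}_{x_sx_t}+{\cal H}_{x_tx_s}=W\,{\cal R}_{x_sx_t}$ holds on the connected component $G'_s$ containing $x_s,x_t$, not on $G'$ as a whole. Apply it to $G'_s$: the walk never leaves that component, and $W(G'_s)\le W(G')\le 2n$.
\end{enumerate}
Both repairs only turn your equalities into inequalities in the right direction, so the bound stands. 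This is exactly how the paper handles the same issue in the proof of \Cref{cor:MH_WR}.
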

To translate this hitting-time bound into a bound involving effective resistance, we recall the following commute-time identity.
\begin{lemma}[\cite{chandra1989electrical,belovs2013ElectricWalks}]\label{lem:commute_time}
Let $G=(V, E, w)$ be a weighted graph of total weight $W(G)$. Let $s, t \in V$ be two special vertices connected by a path and ${\cal R}_{st} (G)$ be the effective resistance. Then $W(G) {\cal R}_{st} (G) = {\cal H}_{st} (G) + {\cal H}_{ts} (G)$.
\end{lemma}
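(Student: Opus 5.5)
The statement is the commute-time identity $W(G){\cal R}_{st}(G)={\cal H}_{st}(G)+{\cal H}_{ts}(G)$. I would prove it by the standard electrical-network argument. Here the random walk on $G$ moves from $u$ to $v$ with probability $w_{uv}/w_u$, and $W(G)=\sum_u w_u$ counts each edge twice.

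\emph{Step 1: hitting times as a harmonic-type system.} Fix $t$ and let $h(u)={\cal H}_{ut}(G)$. Then $h(t)=0$, and for $u\neq t$,
$$h(u)=1+\sum_{v}\frac{w_{uv}}{w_u}h(v),$$
which rearranges to
$$\sum_v w_{uv}\bigl(h(u)-h(v)\bigr)=w_u.$$
Interpreting $h$ as a potential, this says that the current $\sum_v w_{uv}(h(u)-h(v))$ leaving $u$ equals $w_u$ at every $u\neq t$. Since the total current out of all vertices is $0$ by antisymmetry, the current into $t$ is $W(G)-w_t$. The system has a unique solution, by uniqueness of solutions to the Dirichlet problem on the connected component.

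\emph{Step 2: superpose with the reverse walk.} Define $g(u)={\cal H}_{us}(G)$, so that $g$ has net current $w_u$ out of every $u\neq s$. Set $\phi=h-g$. At a vertex $u\notin\{s,t\}$ the two contributions $w_u$ cancel, so the net current out of $u$ is $0$.
\begin{itemize}
\item At $s$: $h$ contributes $w_s$, and $g$ contributes $-(W-w_s)$, so the net current out of $s$ is $w_s+W-w_s=W$.
\item At $t$: the net current out of $t$ is $-W$ by the same computation.
\end{itemize}
Hence $\theta_{uv}=w_{uv}(\phi(u)-\phi(v))/W$ is a unit $st$-flow in the sense of \defin{flow}. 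Its potential difference is
$$(\phi(s)-\phi(t))/W=\bigl({\cal H}_{st}+{\cal H}_{ts}\bigr)/W.$$

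\emph{Step 3: identify with the effective resistance.} I would show that the flow of Step 2, which is a potential (Thomson) flow, is the minimizer in \Cref{def:resistance}. The key observation is that any other unit flow differs from $\theta$ by a circulation $c$. The cross term $\sum_{\{u,v\}} \theta_{uv}c_{uv}/w_{uv}$ is proportional to $\sum_{\{u,v\}}(\phi(u)-\phi(v))c_{uv}$, and this vanishes because $c$ has zero divergence at every vertex. Its energy equals $W^{-2}\sum_{\{u,v\}} w_{uv}(\phi(u)-\phi(v))^2$. Summation by parts turns this into $W^{-2}\sum_u\phi(u)\cdot(\text{net current of }\phi\text{ out of }u)$, which equals $W^{-2}\cdot W(\phi(s)-\phi(t))$. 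That gives
$${\cal R}_{st}=({\cal H}_{st}+{\cal H}_{ts})/W,$$
which is the identity. Restricting to the connected component of $s,t$ is harmless, since other components contribute neither to the flows nor to the hitting times.

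\emph{Main obstacle.} The only delicate point is the factor-of-two bookkeeping. Because $W(G)$ counts each edge twice and the energy sum is over unordered edges, I would carefully check the summation-by-parts step, i.e.\ that $\sum_{\{u,v\}}w_{uv}(\phi(u)-\phi(v))^2=\sum_u \phi(u)\sum_v w_{uv}(\phi(u)-\phi(v))$ holds with sums over unordered and ordered pairs respectively. I would also check that the boundary currents at $s$ and $t$ come out to exactly $\pm W$ under this convention.
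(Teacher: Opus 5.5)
Your proof is correct. It is the standard argument of \cite{chandra1989electrical}, which the paper cites rather than reproves: superpose the two hitting-time potentials, obtaining a current of $W$ from $s$ to $t$ with potential drop ${\cal H}_{st}+{\cal H}_{ts}$. Your Step~3 supplies the one extra ingredient the paper's setting needs, because \Cref{def:resistance} defines ${\cal R}_{st}$ as a minimum energy rather than via Ohm's law. Your argument that circulations are orthogonal to potential flows (Thomson's principle), together with your summation by parts, handles this correctly, including the factor-of-two bookkeeping between ordered and unordered sums.

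Your last sentence needs correcting, though. Other components of $G$ indeed contribute nothing to the flows, the resistance, or the hitting times, but they do contribute to $W(G)$. So restricting to the component of $s$ and $t$ is not harmless for the left-hand side. In Step~1, the conclusion that the current into $t$ is $W-w_t$ only holds when the sum of $w_u$ runs over vertices from which $t$ is reachable, since $h(u)=\infty$ elsewhere. What you have actually proved is the identity with $W$ equal to the weight of the component containing $s$ and $t$. As literally stated, the lemma is false for disconnected $G$. For example, take an edge $\{s,t\}$ of weight $1$ plus a disjoint edge of weight $1$: then $W(G){\cal R}_{st}(G)=4$, but ${\cal H}_{st}+{\cal H}_{ts}=2$. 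The lemma should therefore be read with $G$ connected, or with $W$ replaced by the component's weight. This is how the paper uses it: it applies the lemma to the component $G'_s$ in the proof of \Cref{cor:MH_WR}.
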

Combining the two lemmas above gives the bound on the resistance-weight product that we will use in the complexity of our algorithm.

\begin{corollary}\label{cor:MH_WR}
Let $G=(V,E)$ be an unweighted graph with $\abs{V} = n$, and let $G'$ be the corresponding weighted Metropolis-Hastings graph from \defin{MH}. Let $s,t \in V$ be two distinct vertices connected by a path. Then $$W(G')\mathcal{R}_{x_s x_t}(G')\leq 36n^2.$$ Moreover,
$W(G') \leq 2n$ and $\mathcal{R}_{x_s x_t}(G')\leq 18n.$
\end{corollary}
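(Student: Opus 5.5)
The plan is to read the first inequality directly off the two lemmas just stated, and then handle the two separate bounds on $W(G')$ and $\mathcal{R}_{x_s x_t}(G')$ by short direct arguments.

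\textbf{Product bound.} Since $s$ and $t$ are connected in $G$, the vertices $x_s$ and $x_t$ are connected in $G'$, because every edge of $G$ becomes a path of length two in $G'$. So \Cref{lem:commute_time} applies to $G'$ and gives
\[
W(G')\mathcal{R}_{x_sx_t}(G')=\mathcal{H}_{x_sx_t}(G')+\mathcal{H}_{x_tx_s}(G').
\]
\Cref{lem:MH} is symmetric in $s$ and $t$, so each hitting time on the right is at most $18n^2$. This yields $36n^2$.

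\textbf{Total weight.} I would bound $W(G')$ by direct counting. By convention each edge weight is counted twice, so
\[
W(G')=2\sum_{\{u,v\}\in E}\Big(\tfrac1{d_u}+\tfrac1{d_v}\Big).
\]
Regrouping by vertex, each vertex $u$ contributes $1/d_u$ once for each of its $d_u$ incident edges. Hence $\sum_{\{u,v\}\in E}(1/d_u+1/d_v)=\sum_{u:d_u>0}1\le n$, and therefore $W(G')\le 2n$.

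\textbf{Resistance bound.} I would not divide $36n^2$ by $W(G')$, since that needs a lower bound on the weight. Instead I would bound the resistance directly.

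First I would take a shortest $st$-path in $G$, say $s=u_0,\dots,u_L=t$. I would route a unit flow along the corresponding path in $G'$. This gives
\[
\mathcal{R}\le\sum_{i}(d_{u_i}+d_{u_{i+1}})\le 2\sum_{i=0}^{L} d_{u_i}.
\]

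Next I would use the standard fact that on a shortest path the degrees sum to $O(n)$. A vertex $w$ can be adjacent to at most three vertices of a shortest path: if it were adjacent to $u_i$ and $u_j$ with $j\ge i+3$, the path could be shortcut through $w$. So $\sum_i d_{u_i}\le 3n$, which gives $\mathcal{R}\le 6n$, inside the claimed $18n$.

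Alternatively, the resistance bound follows from commute time: $\mathcal{R}_{x_sx_t}(G')=\mathcal{H}_{x_sx_t}(G')+\mathcal{H}_{x_tx_s}(G')$ divided by $W(G')$. The catch is that $W(G')$ can be as small as $2$ when $s$ and $t$ lie in a small component. This is exactly why I prefer the direct path argument.

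The main obstacle is this resistance step, specifically the neighbour-counting lemma for shortest paths. That lemma is routine, so the whole proof is short.
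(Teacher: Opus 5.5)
Your three bounds are all true, and your resistance argument takes a genuinely different route from the paper's. However, your first paragraph does not prove the product bound when $G$ is disconnected, and the corollary allows that case. The identity in \Cref{lem:commute_time} holds with the total weight of the connected component of $G'$ containing $x_s,x_t$, because the walk never leaves that component. $W(G')$ also counts the weight of other components. So if $G$ has edges outside the component of $s$ and $t$, your equality becomes $W(G')\mathcal{R}_{x_sx_t}(G')>\mathcal{H}_{x_sx_t}(G')+\mathcal{H}_{x_tx_s}(G')$, which points the wrong way. For example, take a single edge $\{s,t\}$ plus a disjoint clique on the other $n-2\geq 2$ vertices: then $W(G')\mathcal{R}_{x_sx_t}(G')=2n\cdot 2=4n$, while the commute time is $8$. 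This is why the paper applies the identity only to the component $G'_s$. The repair costs you nothing: drop the first paragraph and multiply your two separate bounds, which gives $W(G')\mathcal{R}_{x_sx_t}(G')\leq 2n\cdot 6n=12n^2\leq 36n^2$.

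Your weight computation is the same as the paper's. For the resistance, the paper restricts to the component $V_s$ with $n_s=|V_s|$. There $W(G'_s)=2n_s$ exactly, and \Cref{lem:MH} applied to the component bounds the commute time by $36n_s^2$. Dividing gives $\mathcal{R}_{x_sx_t}(G')\leq 18n_s\leq 18n$. So the obstacle you identified, the lack of a lower bound on the global $W(G')$, goes away once one works on the component, where the weight and $n_s$ match.

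You instead route a unit flow along a shortest $st$-path and use the classical fact that every vertex has at most three neighbours on a shortest path. This gives $\mathcal{R}_{x_sx_t}(G')\leq 2\sum_i d_{u_i}\leq 6n$. Your argument is self-contained and elementary, and it improves the constants to $6n$ and $12n^2$, against the paper's $18n$ and $36n^2$. It also makes \Cref{lem:MH} and \Cref{lem:commute_time} unnecessary for this corollary. The paper's argument is shorter given those lemmas, but it uses the cited hitting-time bound as a black box.
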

\begin{proof}
Let $V_s$ be the vertex set of the connected component of $G$ containing $s$ and $t$, let $n_s = \abs{V_s}$, and let $G'_s$ be the corresponding connected component of $G'$ containing $x_s$ and $x_t$. Since every vertex in $V_s$ is non-isolated, the definition of the edge weights gives
$$W(G'_s) = 2 \sum_{u \in V_s} \sum_{v \in N(u)} \frac{1}{d_u} = 2n_s.$$
Applying \lem{MH} to the component induced by $V_s$ gives
$$\mathcal{H}_{x_s x_t}(G'_s) + \mathcal{H}_{x_t x_s}(G'_s) \leq 36n_s^2.$$
Therefore, by \lem{commute_time},
$$W(G'_s) \mathcal{R}_{x_s x_t} (G'_s) \leq 36 n_s^2.$$
That is
$$2 n_s \mathcal{R}_{x_s x_t}(G'_s) \leq 36 n_s^2,$$
and hence
$$\mathcal{R}_{x_s x_t}(G') = \mathcal{R}_{x_s x_t} (G'_s) \leq 18n_s \leq 18n.$$
For the entire graph, isolated vertices contribute no weight, and thus
$$W(G') = 2 \sum_{\substack{u \in V\\d_u>0}}\sum_{v\in N(u)}\frac{1}{d_u}=2\abs{\{u\in V:d_u>0\}}\leq 2n.$$
Combining these two bounds yields
\begin{equation*}
    W(G') \mathcal{R}_{x_s\,x_t} (G') \leq (2n)(18n) = 36 n^2.
\end{equation*}
\end{proof}

\subsection{Transducers}\label{sec:transducers}

A transducer is a unitary $S$ acting on a space ${\cal H}\oplus {\cal L}$ decomposed into a \emph{public space} (or \emph{boundary space}) ${\cal H}$, and \emph{private space} ${\cal L}$. Such an object is interesting, first of all because of the following, which is from \cite[Theorem~3.1]{belovs2023LasVegasTime}.

\begin{theorem}\label{thm:transduction-action}
    Let $S$ be a unitary on ${\cal H}\oplus {\cal L}$. There is a unique unitary $U$ on ${\cal H}$, such that for each $\ket{\xi}\in {\cal H}$, there exists $\ket{\nu}\in {\cal L}$ such that $S(\ket{\xi}+\ket{\nu})=U\ket{\xi}+\ket{\nu}$. Moreover, for every $\ket{\xi}\in {\cal H}$ and $\ket{\nu}\in {\cal L}$, if $S(\ket{\xi}+\ket{\nu})-\ket{\nu}\in {\cal H}$, then $S(\ket{\xi}+\ket{\nu})=U\ket{\xi}+\ket{\nu}$.
\end{theorem}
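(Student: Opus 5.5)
Write $S$ in block form with respect to ${\cal H}\oplus{\cal L}$:
\[
S=\begin{pmatrix} A & B\\ C & D\end{pmatrix},\qquad A:{\cal H}\to{\cal H},\ B:{\cal L}\to{\cal H},\ C:{\cal H}\to{\cal L},\ D:{\cal L}\to{\cal L}.
\]
The condition $S(\ket{\xi}+\ket{\nu})=\ket{\eta}+\ket{\nu}$ with $\ket{\eta}\in{\cal H}$ splits into two equations:
\[
C\ket{\xi}=(I-D)\ket{\nu},\qquad \ket{\eta}=A\ket{\xi}+B\ket{\nu}.
\]
So the task reduces to three things: solving for $\ket{\nu}$ for every $\ket{\xi}$; showing that $\ket{\eta}$ does not depend on which solution $\ket{\nu}$ is chosen; and showing that the resulting map $\ket{\xi}\mapsto\ket{\eta}$ is unitary. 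Once $\ket{\eta}$ is shown to be independent of $\ket{\nu}$, the \emph{moreover} clause and uniqueness follow immediately. Any such $U$ must send $\ket{\xi}$ to this $\ket{\eta}$, and any pair $(\ket{\xi},\ket{\nu})$ with $S(\ket{\xi}+\ket{\nu})-\ket{\nu}\in{\cal H}$ solves the system.

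\textbf{Existence of $\ket{\nu}$.} In finite dimension it suffices to show $\operatorname{im} C\subseteq\operatorname{im}(I-D)=\ker(I-D^\dagger)^\perp$. Take $\ket{\mu}\in{\cal L}$ with $D^\dagger\ket{\mu}=\ket{\mu}$. Unitarity of $S^\dagger$ gives
\[
\|S^\dagger\ket{\mu}\|^2=\|B^\dagger\ket{\mu}\|^2+\|D^\dagger\ket{\mu}\|^2=\|\ket{\mu}\|^2,
\]
so $B^\dagger\ket{\mu}=0$. Hence $S^\dagger\ket{\mu}=\ket{\mu}$, and therefore $S\ket{\mu}=\ket{\mu}$, which gives $C\ket{\mu}=0$ and $D\ket{\mu}=\ket{\mu}$. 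Then
\[
\bra{\mu}C\ket{\xi}=\bra{\xi}C^\dagger\ket{\mu}^*=0,
\]
where $C^\dagger$ is the ${\cal H}$-block of $S^\dagger$ applied to $\ket{\mu}$, which vanishes because $S^\dagger\ket{\mu}=\ket{\mu}\in{\cal L}$. So $\operatorname{im} C\perp\ker(I-D^\dagger)$, as required.

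\textbf{Independence of $\ket{\nu}$.} The solutions $\ket{\nu}$ differ by elements $\ket{\mu}\in\ker(I-D)$. The same norm argument applied to $S$ shows $D\ket{\mu}=\ket{\mu}$ forces $B\ket{\mu}=0$. Hence $\ket{\eta}$ is well defined, and the map $U:\ket{\xi}\mapsto\ket{\eta}$ is linear: choose $\ket{\nu}$ via a fixed linear right inverse, e.g.\ $\ket{\nu}=(I-D)^+C\ket{\xi}$.

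\textbf{Unitarity of $U$.} Choose $\ket{\nu}\perp\ker(I-D)$. Norm preservation gives
\[
\|U\ket{\xi}\|^2+\|\ket{\nu}\|^2=\|\ket{\xi}\|^2+\|\ket{\nu}\|^2,
\]
so $U$ is an isometry on ${\cal H}$. Since ${\cal H}$ is finite-dimensional, $U$ is unitary.

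I expect the main obstacle to be the existence step, specifically the range inclusion for $C$. Its key input is that a unit-eigenvalue vector of the contraction $D$ (or of $D^\dagger$) is automatically a fixed point of the whole unitary $S$. If infinite-dimensional ${\cal L}$ were intended, this step would require a limiting argument such as $\ket{\nu}=\lim_{r\to1}(I-rD)^{-1}C\ket{\xi}$, and I would restrict to finite dimension here as the paper's setting allows.
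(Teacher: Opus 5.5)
The paper gives no proof of this statement. It imports it verbatim from \cite[Theorem~3.1]{belovs2023LasVegasTime}, so your argument is a self-contained substitute rather than a variant of something in the text, and it is correct.

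Your argument has three steps. You reduce the condition to the pair $C\ket{\xi}=(I-D)\ket{\nu}$, $\ket{\eta}=A\ket{\xi}+B\ket{\nu}$. You observe that a $1$-eigenvector of the compression $D$ (or $D^\dagger$) must be a fixed point of the whole unitary, which gives both $\mathrm{im}\,C\subseteq\mathrm{im}(I-D)$ and $B\ket{\mu}=0$ on $\ker(I-D)$. You then use norm preservation. Together these establish existence, uniqueness, and the ``moreover'' clause.

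Your route also yields something the paper later quotes without proof. The vector $\ket{\nu}=(I-D)^+C\ket{\xi}$ is the minimum-norm solution of the catalyst equation, so it is exactly the smallest catalyst. Linearity of $\ket{\xi}\mapsto\ket{\nu(\xi)}$ therefore comes for free, together with the explicit formula $U=A+B(I-D)^+C$. That linearity is the fact cited as \cite[Theorem~5.1(c)]{belovs2023LasVegasTime} and used to bound $W(U_{\cal AB})$.

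There is one bookkeeping slip in the existence step. With your convention $B:{\cal L}\to{\cal H}$ and $C:{\cal H}\to{\cal L}$, the ${\cal H}$-component of $S^\dagger\ket{\mu}$ for $\ket{\mu}\in{\cal L}$ is $C^\dagger\ket{\mu}$, not $B^\dagger\ket{\mu}$; the latter is not even well-typed, since $B^\dagger$ maps ${\cal H}\to{\cal L}$. So the identity should read $\norm{S^\dagger\ket{\mu}}^2=\norm{C^\dagger\ket{\mu}}^2+\norm{D^\dagger\ket{\mu}}^2$. Likewise, $S\ket{\mu}=\ket{\mu}$ gives $B\ket{\mu}=0$, not $C\ket{\mu}=0$. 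Once the labels are corrected, the norm identity gives $C^\dagger\ket{\mu}=0$ directly, which is all you need for $\mathrm{im}\,C\perp\ker(I-D^\dagger)$, so the detour through $S\ket{\mu}=\ket{\mu}$ can be dropped.

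Similarly, choosing $\ket{\nu}\perp\ker(I-D)$ in the unitarity step is harmless but unnecessary. The equality $\norm{U\ket{\xi}+\ket{\nu}}^2=\norm{U\ket{\xi}}^2+\norm{\ket{\nu}}^2$ holds for any catalyst. Restricting to finite dimensions is appropriate, since every space in the paper is finite-dimensional.
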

We call $U$ the \emph{transduction action} of $S$ (on ${\cal H}$), and we write $S:\ket{\xi}\rightsquigarrow\ket{\tau}$, or say that $S$ \emph{transduces} $\ket{\xi}$ to $\ket{\tau}$, when $U\ket{\xi}=\ket{\tau}$.
We call the vector $\ket{\nu}$ in the above theorem, which depends on $\ket{\xi}$, a \emph{catalyst} for $\ket{\xi}$. This catalyst is not unique, though there is a unique smallest one, and the map $\ket{\xi}\mapsto\ket{\nu(\xi)}$ taking a state to its smallest catalyst is linear~\cite[Theorem~5.1(c)]{belovs2023LasVegasTime}. Note that a catalyst will not generally be a unit vector, even if $\ket{\xi}$ is.
{We write $W(S,\ket{\xi})$ for the squared norm of the smallest catalyst for $\ket{\xi}$, and $W(S)$ for the supremum of $W(S,\ket{\xi})$ over unit vectors $\ket{\xi}\in{\cal H}$, and call these the \emph{transduction complexity} of $S$ (on $\ket{\xi}$).}

When we implement transducers, we will need to apply operations to the ${\cal H}$ part or the ${\cal L}$ part of a state only. To this end, as in \cite{belovs2023LasVegasTime} we assume that membership in ${\cal H}$ is maintained in a specific qubit, called the \emph{privacy qubit}. 

We should think of $S$ as something we can easily implement, and $U$ as something we \emph{want} to be able to implement. It is not obvious, but it turns out that it is always possible to implement $U$ using sufficiently many calls to $S$, where sufficiently many is governed by $\norm{\ket{\nu}}^2$. The following is from \cite[Theorem~3.2~and~5.5]{belovs2023LasVegasTime}.

\begin{theorem}[\cite{belovs2023LasVegasTime}]\label{thm:transducer-implementation}
    Fix any integer $K$, and error parameter $\eps\in (0,1)$. There is a quantum algorithm ${\cal A}_K(S)$ that makes $K$ controlled calls to a black-box unitary $S$, and $O(K)$ other gates, such that if $S$ is a transducer as above, for any $\ket{\xi}\in {\cal H}$ with catalyst $\ket{\nu}$, if $K\geq \frac{4\norm{\ket{\nu}}^2}{\eps}$, then $\norm{{\cal A}_K(S)\ket{\xi}-U\ket{\xi}}^2\leq \eps$.
\end{theorem}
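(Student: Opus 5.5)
The plan is to build ${\cal A}_K(S)$ as $K$ uses of $S$ with the catalyst ``spread out'' over a $K$-dimensional counter register. Take a register with basis $\ket{0},\dots,\ket{K}$. One iteration applies $S$ (controlled on the counter) to ${\cal H}\oplus{\cal L}$. It then increments the counter controlled on the privacy qubit indicating that the state lies in ${\cal H}$, so the public output of one round is fed to the next round with the counter advanced. The work at the end is an exact ``ideal'' computation plus a perturbation argument bounding its deviation.

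\textbf{Exact ideal run.} Run the same circuit on a modified input in which, besides $\ket{\xi}$, each counter value carries a copy of the catalyst scaled by $1/K$ (in the spirit of $\frac{1}{K}\sum_j\ket{\nu}\ket{j}$). Using the defining identity $S(\ket{\xi}+\ket{\nu})=U\ket{\xi}+\ket{\nu}$ from \thm{transduction-action} and linearity, the private parts reproduce themselves round after round. The public part is therefore carried forward exactly as $U\ket{\xi}$. This gives an \emph{exact} statement: on the padded input, the output is $U\ket{\xi}$ plus the same padding, possibly up to a shift of the counter, which I would uncompute or fold into the definition of ${\cal A}_K$.

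\textbf{Perturbation.} The real algorithm starts from $\ket{\xi}$ without padding. Since the whole procedure is unitary, the real output differs from the ideal one by at most twice the norm of the padding: once at the input and once when the padding is discarded at the output. The padding has squared norm $K\cdot\norm{\ket{\nu}}^2/K^2=\norm{\ket{\nu}}^2/K$. The triangle inequality then gives
\[
\norm{{\cal A}_K(S)\ket{\xi}-U\ket{\xi}}\le \frac{2\norm{\ket{\nu}}}{\sqrt K},
\]
which squares to $4\norm{\ket{\nu}}^2/K\le\eps$ when $K\ge 4\norm{\ket{\nu}}^2/\eps$. Gate count is $K$ controlled calls to $S$ plus $O(K)$ gates for the counter increments; the counter can be encoded so that incrementing costs $O(1)$ amortized, as in \cite{belovs2023LasVegasTime}.

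\textbf{Main obstacle.} The delicate step is choosing the padding and the counter bookkeeping so that the ideal run is genuinely exact. The catalyst copies at the different counter values must interact consistently with the shift; in particular, the ${\cal L}$-component should not move while the ${\cal H}$-component advances. The end-of-run state must also isolate $U\ket{\xi}$ in a fixed counter position rather than leaving it spread over $j$. My first attempt would use the ``linear ramp'' padding $\sum_j \frac{j}{K}\ket{\nu}\ket{j}$-type telescoping and check directly that consecutive rounds cancel.

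If this bookkeeping does not close cleanly, I would fall back on the eigenvalue-based formulation of \cite[Theorem~5.5]{belovs2023LasVegasTime}. There one bounds the overlap of $\ket{\xi}+\ket{\nu}$ with eigenvectors of $S$ near phase $1$, and gets the same $4\norm{\ket{\nu}}^2/K$ bound from a spectral-gap-style estimate.
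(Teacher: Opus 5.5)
\paragraph{There is a genuine gap: the construction does not give an exact ideal run.}
Your perturbation step is the right skeleton. An exact ideal run on a padded input, plus unitarity, costs at most twice the padding norm. But the construction you describe does not give an exact ideal run, and the obstacle you flag is exactly the missing idea.

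The identity $S(\ket{\xi}+\ket{\nu})=U\ket{\xi}+\ket{\nu}$ is only usable with matched amplitudes:
\begin{itemize}
\item On the counter-$0$ branch you have $\ket{\xi}+\frac1K\ket{\nu}$, about which the transduction identity says nothing.
\item On the other branches, a lone $\frac1K\ket{\nu}$ is not reproduced by $S$. Indeed $S\ket{\nu}=U\ket{\xi}+\ket{\nu}-S\ket{\xi}$, whose private part differs from $\ket{\nu}$ unless $S\ket{\xi}\in{\cal H}$. So the private parts do not reproduce themselves.
\item Feeding the public output of one round into the next would compute $U^K\ket{\xi}$, not $U\ket{\xi}$. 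This holds even with correct catalysts, which would then have to be catalysts for $U\ket{\xi},U^2\ket{\xi},\dots$.
\end{itemize}
More generally, any call to $S$ that sees $\ket{\xi}$ at full amplitude needs the full $\ket{\nu}$ beside it. Keeping the input localized therefore leaves no room for a $1/K$ saving; your count $K\cdot\norm{\ket{\nu}}^2/K^2$ lands on the right number for the wrong reason. The linear-ramp padding has squared norm $\Theta(K\norm{\ket{\nu}}^2)$, which goes the wrong way. The spectral fallback is not yet an argument.

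\paragraph{The missing idea: spread the input, not the catalyst.}
The construction behind the cited result in \cite{belovs2023LasVegasTime} swaps the roles: spread the \emph{input} over $K$ time slots and keep a \emph{single} shared catalyst.
\begin{itemize}
\item Start from $\frac{1}{\sqrt K}\sum_{j=0}^{K-1}\ket{j}\ket{\xi}$, with the counter attached only to the public part, and one private register.
\item In round $j$, apply $S$ to public slot $j$ together with the shared private register. Operationally, this is $S$ controlled on counter value $0$, followed by a cyclic shift of the counter on public states only. That is your privacy-controlled increment, except that each output is moved out of the active slot and never acted on again.
\item On the padded input, add $\frac{1}{\sqrt K}\ket{\nu}$ in the private register. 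Every round then maps $\frac{1}{\sqrt K}(\ket{\xi}+\ket{\nu})\mapsto\frac{1}{\sqrt K}(U\ket{\xi}+\ket{\nu})$ exactly, reusing the same catalyst $K$ times.
\item After $K$ rounds the state is $\frac{1}{\sqrt K}\sum_j\ket{j}U\ket{\xi}+\frac{1}{\sqrt K}\ket{\nu}$. The counter is now in a product state with $U\ket{\xi}$, so un-preparing the uniform superposition puts $U\ket{\xi}$ at counter $0$. This settles your fixed-position concern.
\end{itemize}
The padding has squared norm $\norm{\ket{\nu}}^2/K$ precisely because the catalyst must match the $1/\sqrt K$ amplitude of each input slot. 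Your two-sided perturbation bound then gives $\norm{{\cal A}_K(S)\ket{\xi}-U\ket{\xi}}\le 2\norm{\ket{\nu}}/\sqrt K$, as required.
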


As an example, if $S$ is a quantum walk operator on a graph with connected boundaries $s$ and $t$ (see \cite{belovs2024weldedTrees}, or \Cref{sec:gauge-state-preparation-transducer}), it has transduction action
$\ket{s}\rightsquigarrow \ket{t}$.
In \Cref{sec:gauge-state-preparation-transducer} we generalize this to the setting of unitary-labeled graphs.

It is also possible to turn any quantum algorithm into a transducer. The reason we might want to do this is that in many cases, a transducer has no error, even while the corresponding algorithm does, and so transducers can be composed without worrying about errors. In the lemma below, we work out one explicit example of transducer composition that we will use. It is based on a transducer for amplitude estimation, which we explicitly construct based on the standard amplitude estimation algorithm~\cite{brassard2000amplitudeAmpEst}. In an independent work, an optimal transducer for amplitude amplification is designed from scratch~\cite{dubus2026optimalTransducers}.

\begin{lemma}\label{lem:amp-amp-comp}
    Let $\ket{\pi}\in {\cal H}$ be a unit vector, and let $O=2\ket{\pi}\bra{\pi}-I$. Let $\Pi$ be an orthogonal projector on ${\cal H}$, and let $p=\norm{\Pi\ket{\pi}}^2$. Suppose $S_O$ is a transducer on ${\cal H}\oplus {\cal L}$ with transduction action $O$ on its public space ${\cal H}$. For every integer $M\geq 1$, there is a transducer $S_M$ on ${\cal S}\otimes{\cal T}\otimes {({\cal H}\oplus{\cal L})}$, where ${\cal S}$ and ${\cal T}$ are $M$-dimensional registers, with public space $\mathrm{span}\{\ket{0}_{\cal S}\}\otimes{\cal T}\otimes{\cal H}$, such that
    $$S_M:\ket{0}_{\cal S}\ket{0}_{\cal T}\ket{\pi}_{\cal H}\rightsquigarrow \ket{0}_{\cal S}\ket{\eta}_{\cal TH},$$
    for a state $\ket{\eta}$ such that measuring the ${\cal T}$ register of $\ket{\eta}$ yields a value $z$ such that if $\tilde p = \sin^2(\pi z/M)$,
    $$\Pr\left[|\tilde p - p|\leq 6\pi \frac{\sqrt{p}}{M}+\frac{9\pi^2}{M^2}\right]\geq \frac{3}{4}.$$
    The transduction complexity is $W(S_M,\ket{0}\ket{0}\ket{\pi})\leq (M-1)(1+W(S_O))$,
    and $S_M$ can be implemented using one controlled call to each of $S_O$ and $(I-2\Pi)$, two Fourier transforms $F_M^{\pm 1}$ on ${\cal T}$, and $O(\log M)$ other elementary gates.
\end{lemma}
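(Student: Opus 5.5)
We build $S_M$ as the transducer analogue of the standard phase-estimation-based amplitude estimation circuit of~\cite{brassard2000amplitudeAmpEst}. Let $G=OR_\Pi$ with $R_\Pi=I-2\Pi$ be the Grover iterate. On $\mathrm{span}\{\ket{\pi},\Pi\ket{\pi}\}$ it acts as a rotation with eigenvalues $e^{\pm 2i\theta}$, where $\sin^2\theta=p$, and $\ket{\pi}$ is a balanced combination of the two eigenvectors. The amplitude estimation algorithm prepares $F_M\ket{0}_{\cal T}=\frac{1}{\sqrt M}\sum_{j}\ket{j}$, applies $\sum_j\ket{j}\bra{j}\otimes G^j$, and then $F_M^{-1}$. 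By the standard analysis, measuring ${\cal T}$ gives $z$ with $\tilde p=\sin^2(\pi z/M)$ satisfying $|\tilde p-p|\le 2\pi\sqrt{p(1-p)}/M+\pi^2/M^2$ with probability at least $8/\pi^2$. The stated constants ($6\pi$, $9\pi^2$, probability $3/4$) are looser, and we expect them to follow from the Brassard et al.\ theorem with $k=3$, which gives probability $1-1/(2(k-1))=3/4$ and error $2\pi k\sqrt{p}/M+k^2\pi^2/M^2$. We take $\ket{\eta}$ to be the output of this ideal algorithm, with $O$ in place of $S_O$.

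The transducer must implement controlled-$G^j$ with a single call to $S_O$ per step. We do this with the counter ${\cal S}$. The public space is $\ket{0}_{\cal S}\otimes{\cal T}\otimes{\cal H}$, and $S_M$ is a cyclic clock on ${\cal S}$. At clock value $0$, if the privacy qubit indicates ${\cal H}$, we apply $F_M$ to ${\cal T}$ (when entering) or $F_M^{-1}$ (when leaving). At each clock value $c$, we apply controlled-$R_\Pi$ and then controlled-$S_O$, both controlled on $c<j$ (the ${\cal T}$ value), and increment $c$ modulo $M$. Concretely, one application of $S_M$ does: a controlled increment of ${\cal S}$, one controlled call to $(I-2\Pi)$, one controlled call to $S_O$, and the Fourier transforms controlled on ${\cal S}=0$ and on the privacy status. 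This matches the stated gate count: comparing $c<j$ and incrementing cost $O(\log M)$ gates.

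For correctness and the catalyst, we use the second part of \thm{transduction-action}. We guess the catalyst explicitly and check that $S_M(\ket{\xi}+\ket{\nu})-\ket{\nu}$ lies in the public space. Take the intermediate states $\ket{c}_{\cal S}\otimes\frac1{\sqrt M}\sum_j\ket{j}\,G^{\min(c,j)}\ket{\pi}$ for $c=1,\dots,M-1$. To these we add, for each step, the catalyst of $S_O$ tensored appropriately. By linearity of the smallest catalyst, this is $\ket{c}\otimes\frac{1}{\sqrt M}\sum_{j>c-1}\ket{j}\ket{\nu_O(R_\Pi G^{c-1}\ket{\pi})}$ placed in ${\cal L}$. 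The state after the final wrap-around to $c=0$ lands back in the public space as $\ket{0}\ket{\eta}$. Each of the $M-1$ nonzero clock values contributes squared norm at most $1$ from the ${\cal H}$ part and at most $W(S_O)$ from the ${\cal L}$ part, since every $G^{c}\ket{\pi}$ is a unit vector. This gives $W(S_M,\ket{0}\ket{0}\ket{\pi})\le (M-1)(1+W(S_O))$.

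The main obstacle is the bookkeeping that makes $S_O$ act as $O$ inside the clock. The private parts of $S_O$ must be carried along at each clock value without the clock advancing incorrectly on ${\cal L}$-components. For example, the increment should be conditioned so that the catalyst at clock $c$ maps to the catalyst at $c+1$ after exactly one $S_O$ application. To get this right, we first write $S_M$ as a product of a step unitary and a clock shift. We then verify the telescoping identity term by term, following the sequential-composition argument of~\cite{belovs2023LasVegasTime}.
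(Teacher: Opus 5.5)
Your architecture is the paper's. You use a clock-register version of the Brassard et al.\ estimation circuit, with catalyst made of the intermediate states $\ket{c}_{\cal S}\frac{1}{\sqrt M}\sum_j\ket{j}G^{\min(c,j)}\ket{\pi}$ for $c=1,\dots,M-1$ plus $S_O$-catalysts, and you get the accuracy from Theorem~12 of~\cite{brassard2000amplitudeAmpEst} with $k=3$. However, the one nontrivial design decision is what the increment does to ${\cal L}$-components, and that is exactly what you defer as bookkeeping. The catalyst you write down also does not fit your circuit.

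With your stated order (at clock $c$, apply $I-2\Pi$ and $S_O$ controlled on $c<j$, then increment), the step at clock $c$ consumes the $S_O$-catalyst $\ket{\nu_c}$ of $(I-2\Pi)G^c\ket{\pi}$. So clock $c$ must hold $\frac{1}{\sqrt M}\sum_{j>c}\ket{c,j}\ket{\nu_c}$ for $c=0,\dots,M-2$. This includes clock $0$, where the first step happens. You instead put $\ket{\nu_{c-1}}$ at clock $c$ and nothing at clock $0$. That is the indexing you get if the increment carries the ${\cal L}$-output of the step at $c-1$ along to clock $c$.

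That choice fails structurally, not just by an off-by-one. If $\mathsf{Inc}$ also moves ${\cal L}$, then for fixed $j$ a single private register is threaded through all $j$ calls. The closure condition becomes $G(S_O)^j(\ket{\xi}+\ket{\lambda})=\ket{\xi'}+\ket{\lambda}$, so you obtain the transduction action of $\bigl(S_O(I-2\Pi)\bigr)^j$ viewed as one transducer on ${\cal H}\oplus{\cal L}$. That is generally not $G(O)^j$. For example, a rotation $S$ of $\mathbb{C}^2$ by an angle $\phi\notin\pi\mathbb{Z}$, with one-dimensional public space, has transduction action $-1$, and so does $S^2$. Sequential composition of transducers needs fresh private space for each call.

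The fix, and the key point of the paper's proof, is to let $\mathsf{Inc}$ (and $I-2\Pi$) act as the identity on ${\cal S}\otimes{\cal T}\otimes{\cal L}$, i.e.\ to control the increment on the privacy qubit of $S_O$. Then $\ket{c,j}\ket{\nu_c}$ never moves. $S_O$ maps $(I-2\Pi)G^c\ket{\pi}+\ket{\nu_c}$ to $G^{c+1}\ket{\pi}+\ket{\nu_c}$, and only the ${\cal H}$-part advances to clock $c+1$. In effect the clock value gives each call its own copy $\ket{c}\otimes{\cal L}$ of the private space, just as \Cref{claim:above} uses fresh copies ${\cal L}_0,{\cal L}_1$.

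Alternatively, you can increment the ${\cal H}$-part first and then step at clocks $c\geq 1$ on $j\geq c$. Your $\ket{\nu_{c-1}}$-at-clock-$c$ formula is then correct, but the increment must still skip ${\cal L}$. With the catalyst $\sum_{c\geq 1}\frac{1}{\sqrt M}\sum_j\ket{c,j}G^{\min(c,j)}\ket{\pi}+\sum_{c\geq 0}\frac{1}{\sqrt M}\sum_{j>c}\ket{c,j}\ket{\nu_c}$, your telescoping check and your bound $(M-1)(1+W(S_O))$ go through.

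Controlling $F_M^{\pm1}$ on the public space, i.e.\ applying $F_M$ to the public part at the start of each application and $F_M^{-1}$ at the end, is a valid alternative to the paper's unconditional conjugation $F_M^{-1}\tilde S_M F_M$. The paper's version has catalyst $F_M^{-1}\ket{\nu'}$ instead.
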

\begin{proof}
Let
$$G(O)=O(I-2\Pi)$$
be the Grover iterate, which acts on ${\cal H}$, for $O$ the reflection around the initial state $\ket{\pi}$.
Note that in the usual setting of quantum search, and related problems, the reflection $2\Pi-I$ around the ``marked'' space is the oracle. In our setting of interest, the difficult thing, for which we will need a subroutine, is reflecting around $\ket{\pi}$. This also happens in the setting of search if you let $\ket{\pi}$ be the initial state with an extra qubit in which the marked subspace is already indicated with a $\ket{1}$ -- in that case, $2\Pi-I$ is easy, and all the work is in implementing $2\ket{\pi}\bra{\pi}-I$.

We first define a transducer $\tilde{S}_M(O)$ on 
$${\cal S}\otimes {\cal T}\otimes {\cal H}=\mathrm{span}\{\ket{s,t}:s,t\in\{0,\dots,M-1\}\}\otimes {\cal H}.$$
Let ${\sf Inc}$ be the unitary that increments the first register, $\cal S$, modulo $M$, and define
$$\tilde{S}_M(O)={\sf Inc}\sum_{s=0}^{M-1}\sum_{t=s+1}^{M-1}\ket{s,t}\bra{s,t}\otimes G(O)+{\sf Inc}\sum_{s=0}^{M-1}\sum_{t=0}^{s}\ket{s,t}\bra{s,t}\otimes I.$$
This is just the transducer obtained from applying the algorithm-to-transducer theorem, {\cite[Theorem~10.3]{belovs2023LasVegasTime}}, to the quantum amplitude estimation algorithm of \cite{brassard2000amplitudeAmpEst}, but we construct it explicitly for completeness, and to show that it does not require quantum random access gates. Its public space is:
$${\cal H}_M=\mathrm{span}\{\ket{0}_{\cal S}\}\otimes {\cal T}\otimes {\cal H}.$$
By assumption, $S_O$ is a transducer with transduction action $O$, meaning that for every state $\ket{\phi}\in {\cal H}$, there exists some $\ket{{\nu}}\in{\cal L}$, the private space of $S_O$, such that $S_O(\ket{\phi}+\ket{{\nu}})=O\ket{\phi}+\ket{{\nu}}$. Thus, for every $s\in\{0,\dots,M-1\}$, let $\ket{{\nu}_s}\in {\cal L}$ be the smallest catalyst for the unit vector $(I-2\Pi)G(O)^s\ket{\pi}$, so that $\norm{\ket{\nu_s}}^2=W(S_O,(I-2\Pi)G(O)^s\ket{\pi})\leq W(S_O)$ and
{\begin{equation}\label{eq:vs-catalyst}
S_O((I-2\Pi)G(O)^s\ket{\pi}+\ket{{\nu}_s})=O(I-2\Pi)G(O)^s\ket{\pi}+\ket{{\nu}_s}.
\end{equation}}

Consider the transducer $\tilde{S}_M(S_O)$. This is not completely well defined, even as a unitary, since $S_O$ acts on a space larger than ${\cal H}$, the space acted on by $O$, so let us define precisely what we mean by $\tilde{S}_M(S_O)$. It is a unitary on the space ${\cal S}\otimes{\cal T}\otimes ({\cal H}\oplus {\cal L})$. Let $G(S_O)$ be the unitary acting on ${\cal H}\oplus {\cal L}$ defined:
$$G(S_O)=S_O \left((I-2\Pi)_{\cal H}\oplus I_{\cal L}\right).$$
Then we can define 
$$\tilde{S}_M=\tilde{S}_M(S_O)=\left({\sf Inc}_{\cal STH}\oplus I_{\cal STL}\right)\left(\sum_{s=0}^{M-1}\sum_{t=s+1}^{M-1}\ket{s,t}\bra{s,t}\otimes G(S_O)+\sum_{s=0}^{M-1}\sum_{t=0}^{s}\ket{s,t}\bra{s,t}\otimes I\right).$$
That is, we extend ${\sf Inc}$ and $I-2\Pi$ to ${\cal S}\otimes {\cal T}\otimes{\cal L}$ by having them act as the identity (operationally, by controlling them on the privacy qubit of $S_O$).
The public space of this transducer is also ${\cal H}_M$ (its private space is bigger though, since it acts on a bigger space overall). 

Then, using $\ket{{\nu}_s}$ from \eqref{eq:vs-catalyst}, define:
$$\ket{{\nu}'}=\sum_{s=1}^{M-1}\frac{1}{\sqrt{M}}\sum_{t=s+1}^{M-1}\ket{s,t} G(O)^s\ket{\pi}_{\cal H}+\sum_{s=1}^{M-1}\frac{1}{\sqrt{M}}\sum_{t=0}^{s}\ket{s,t} G(O)^t\ket{\pi}_{\cal H}+\sum_{s=0}^{M-1}\frac{1}{\sqrt{M}}\sum_{t=s+1}^{M-1}\ket{s,t}\ket{{\nu}_s}_{\cal L},$$
which is in the private space of $\tilde{S}_M=\tilde{S}_M(S_O)$, since it is orthogonal to ${\cal H}_M$.
We will show that this is a catalyst, specifically:
\begin{equation}\label{eq:aa-composed-action}
\tilde{S}_M(\ket{0}_{\cal S}F_M\ket{0}_{\cal T}\ket{\pi}+\ket{{\nu}'})=\ket{0}_{\cal S}\frac{1}{\sqrt{M}}\sum_{t=0}^{M-1}\ket{t}G(O)^t\ket{\pi}+\ket{{\nu}'}
\end{equation}
where $F_M$ is an $M$-dimensional Fourier transform. Note that:
\begin{align*}
    &\ket{0}F_M\ket{0}\ket{\pi}+\ket{{\nu}'}\\
    ={}& \sum_{s=0}^{M-1}\frac{1}{\sqrt{M}}\sum_{t=s+1}^{M-1}\ket{s,t} G(O)^s\ket{\pi}_{\cal H}+\sum_{s=0}^{M-1}\frac{1}{\sqrt{M}}\sum_{t=0}^{s}\ket{s,t} G(O)^t\ket{\pi}_{\cal H}+\sum_{s=0}^{M-1}\frac{1}{\sqrt{M}}\sum_{t=s+1}^{M-1}\ket{s,t}\ket{{\nu}_s}_{\cal L}\\
    ={}&\sum_{s=0}^{M-1}\frac{1}{\sqrt{M}}\sum_{t=s+1}^{M-1}\ket{s,t} (G(O)^s\ket{\pi}_{\cal H}+\ket{{\nu}_s}_{\cal L})+\sum_{s=0}^{M-1}\frac{1}{\sqrt{M}}\sum_{t=0}^{s}\ket{s,t} G(O)^t\ket{\pi}_{\cal H}
\end{align*}
and so, using ${\sf Inc}'$ as a shorthand for ${\sf Inc}_{\cal STH}\oplus I_{\cal STL}$:
\begin{align*}
    & \tilde{S}_M\left(\ket{0}F_M\ket{0}\ket{\pi}+\ket{{\nu}'}\right)\\
    ={}& {\sf Inc}'\left(\sum_{s=0}^{M-1}\frac{1}{\sqrt{M}}\sum_{t=s+1}^{M-1}\ket{s,t} G(S_O)(G(O)^s\ket{\pi}_{\cal H}+\ket{{\nu}_s}_{\cal L})+\sum_{s=0}^{M-1}\frac{1}{\sqrt{M}}\sum_{t=0}^{s}\ket{s,t} G(O)^t\ket{\pi}_{\cal H}\right)\\
    ={}& {\sf Inc}'\left(\sum_{s=0}^{M-1}\frac{1}{\sqrt{M}}\sum_{t=s+1}^{M-1}\ket{s,t} S_O((I-2\Pi)G(O)^s\ket{\pi}_{\cal H}+\ket{{\nu}_s}_{\cal L})+\sum_{s=0}^{M-1}\frac{1}{\sqrt{M}}\sum_{t=0}^{s}\ket{s,t} G(O)^t\ket{\pi}_{\cal H}\right),
\end{align*}
since $(I-2\Pi)$ extends trivially to ${\cal L}$. Then by {\Cref{eq:vs-catalyst}}, this is:
\begin{align*}
    ={}& {\sf Inc}'\left(\sum_{s=0}^{M-1}\frac{1}{\sqrt{M}}\sum_{t=s+1}^{M-1}\ket{s,t} (O(I-2\Pi)G(O)^s\ket{\pi}_{\cal H}+\ket{{\nu}_s}_{\cal L})+\sum_{s=0}^{M-1}\frac{1}{\sqrt{M}}\sum_{t=0}^{s}\ket{s,t} G(O)^t\ket{\pi}_{\cal H}\right)\\
    ={}& {\sf Inc}'\left(\sum_{s=0}^{M-1}\frac{1}{\sqrt{M}}\sum_{t=s+1}^{M-1}\ket{s,t} (G(O)^{s+1}\ket{\pi}_{\cal H}+\ket{{\nu}_s}_{\cal L})+\sum_{s=0}^{M-1}\frac{1}{\sqrt{M}}\sum_{t=0}^{s}\ket{s,t} G(O)^t\ket{\pi}_{\cal H}\right).
\end{align*}
Then since ${\sf Inc}'$ also acts trivially on ${\cal S}\otimes {\cal T}\otimes {\cal L}$, this is (writing $\ket{M}_{\cal S}$ for $\ket{0}_{\cal S}$ since ${\sf Inc}$ increments modulo $M$):
\begin{align*}
    ={}& \frac{1}{\sqrt{M}}\left(\sum_{s=0}^{M-1}\sum_{t=s+1}^{M-1}\ket{s+1,t} G(O)^{s+1}\ket{\pi}_{\cal H}
    +\sum_{s=0}^{M-1}\sum_{t=0}^{s}\ket{s+1,t} G(O)^t\ket{\pi}_{\cal H}
    +\sum_{s=0}^{M-1}\sum_{t=s+1}^{M-1}\ket{s,t}\ket{{\nu}_s}_{\cal L}\right)\\
    ={}& \frac{1}{\sqrt{M}}\left(\sum_{s=1}^{M}\sum_{t=s}^{M-1}\ket{s,t} G(O)^{s}\ket{\pi}+\sum_{s=1}^{M}\sum_{t=0}^{s-1}\ket{s,t} G(O)^t\ket{\pi}
    +\sum_{s=0}^{M-1}\sum_{t=s+1}^{M-1}\ket{s,t}\ket{{\nu}_s}\right)\\
    ={}& \frac{1}{\sqrt{M}}\Bigg(\sum_{s=1}^{M-1}\sum_{t=s}^{M-1}\ket{s,t} G(O)^{s}\ket{\pi}_{\cal H}+\sum_{s=1}^{M-1}\sum_{t=0}^{s-1}\ket{s,t} G(O)^t\ket{\pi}_{\cal H}
    +\ket{M}\sum_{t=0}^{M-1}\ket{t}G(O)^t\ket{\pi}_{\cal H}\\
    & \qquad\qquad +\sum_{s=0}^{M-1}\sum_{t=s+1}^{M-1}\ket{s,t}\ket{{\nu}_s}_{\cal L}\Bigg)
    = \frac{1}{\sqrt{M}}\ket{0}\sum_{t=0}^{M-1}\ket{t}G(O)^t\ket{\pi}+\ket{{\nu}'}.
\end{align*}
We have thus established~\Cref{eq:aa-composed-action}.

Define the final transducer as $S_M = F_M^{-1}\tilde{S}_M F_M$, where the Fourier transforms are applied to ${\cal T}$, so they map the private space to itself. Then we have:
\begin{align*}
    S_M(\ket{0}_{\cal S}\ket{0}_{\cal T}\ket{\pi}+F^{-1}_M\ket{{\nu}'}) &= F_M^{-1}\tilde{S}_M F_M(\ket{0}_{\cal S}\ket{0}_{\cal T}\ket{\pi}+F^{-1}_M\ket{{\nu}'})\\
    &=F_M^{-1}\tilde{S}_M (\ket{0}_{\cal S}F_M\ket{0}_{\cal T}\ket{\pi}+\ket{{\nu}'})\\
    &=F_M^{-1}\left(\frac{1}{\sqrt{M}}\ket{0}\sum_{t=0}^{M-1}\ket{t}G(O)^t\ket{\pi}+\ket{{\nu}'}\right) &\mbox{by \Cref{eq:aa-composed-action}}\\
    &={\ket{0}\underbrace{\frac{1}{\sqrt{M}}\sum_{t=0}^{M-1}F_M^{-1}\ket{t}G(O)^t\ket{\pi}}_{=:\ket{\eta}}}+F_M^{-1}\ket{{\nu}'}.
\end{align*}
The state $\ket{\eta}$ is the final state of phase estimation of $G(O)$, and so by \cite[Theorem~12]{brassard2000amplitudeAmpEst}, measuring the register ${\cal T}$ to obtain a random variable $z\in\{0,\dots,M-1\}$ yields an estimate $\tilde p = \sin^2\left(\frac{\pi z}{M}\right)$ such that for any $k\geq 2$,
$$\Pr\left[\abs{\tilde p - p} \leq 2\pi k\frac{\sqrt{p(1-p)}}{M}+k^2\frac{\pi^2}{M^2}\right]\geq 1-\frac{1}{2(k-1)}.$$
If we take $k=3$, we get the claim in the lemma statement.

Next, we upper bound the transduction complexity:
\begin{align*}
    W(S_M,\ket{0}\ket{0}\ket{\pi}) &\leq \norm{F_M^{-1}\ket{{\nu}'}}^2 = \norm{\ket{{\nu}'}}^2\\
    &= \sum_{s=1}^{M-1}\frac{1}{M}\sum_{t=s+1}^{M-1}\norm{G(O)^s\ket{\pi}}^2+\sum_{s=1}^{M-1}\frac{1}{M}\sum_{t=0}^s\norm{G(O)^t\ket{\pi}}^2+\sum_{s=0}^{M-1}\frac{1}{M}\sum_{t=s+1}^{M-1}\norm{\ket{{\nu}_s}}^2\\
    &\leq \sum_{s=1}^{M-1}\frac{1}{M}\sum_{t=s+1}^{M-1}1+\sum_{s=1}^{M-1}\frac{1}{M}\sum_{t=0}^s1+\sum_{s=0}^{M-1}\frac{1}{M}\sum_{t=s+1}^{M-1}W(S_O)\\
    &= \frac{1}{M}(M-1)M+\frac{W(S_O)}{M}\frac{M(M-1)}{2}\leq (M-1)(1+W(S_O)).
\end{align*}
Finally, we bound the cost of one application of $S_M=F_M^{-1}\tilde{S}_MF_M$. Besides the two Fourier transforms, $\tilde{S}_M$ consists of: computing the bit $[t>s]$ into an ancilla, which takes $O(\log M)$ gates; one application of $G(S_O)$ controlled on this ancilla, which uses a call to $S_O$ and a call to $I-2\Pi$ controlled on the privacy qubit of $S_O$; uncomputing the ancilla; and ${\sf Inc}$, controlled on the privacy qubit, which takes $O(\log M)$ gates. This gives the stated implementation cost.
\end{proof}

\section{Transducer for the state preparation problem}\label{sec:gauge-state-preparation-transducer}

In this section, we will give a quantum algorithm for \Cref{prob:gauge_problem_state_gen}, that will apply $U_s(t)$ to any state $\ket{\psi_s}$, and use it to solve \Cref{prob:gauge_problem_estimate} -- estimating $|\bra{\psi_t}U_s(t)\ket{\psi_s}|^2$ -- with bounded error. To this end, consider any unitary-labeled graph $G$, with terminals $s$ and $t$, where we assume $s\neq t$ (the case $s=t$ is trivial; see the proof of \Cref{thm:linear-alg}), and augment $G$ by adding, to each of $s$ and $t$, dangling ``boundary'' edges, each having just the single endpoint.
{We encode the new edge out of $s$ by $\ket{s,\bot}$, where $\bot$ is a reserved value of the index register that is distinct from every valid neighbour index (for instance, an extra flag qubit that is $1$ only on boundary edges); similarly, we encode the new edge out of $t$ by $\ket{t,\bot}$. Throughout the paper we write
\[
\ket{\tils}:=\ket{s}\ket{\bot}\qquad\mbox{and}\qquad \ket{\tilt}:=\ket{t}\ket{\bot}.
\]
Since $\bot$ is not a valid neighbour index, $O_G$ acts as the identity on $\ket{\tils}$ and $\ket{\tilt}$ (tensored with anything in $\mathbb{C}^k$). We assume that the states $\ket{\tils}$ and $\ket{\tilt}$ can be prepared, and reflected around, using $O(1)$ elementary gates; then the reflections around $\ket{\tils}\ket{0}$ and $\ket{\tilt}\ket{0}$, where $\ket{0}\in\mathbb{C}^k$, cost $O(\log k)$ elementary gates, for checking that the last register is $\ket{0}$.}
See \Cref{fig:neighbour_ordering}.

\begin{figure}
\centering
\includegraphics[width=0.5\textwidth]{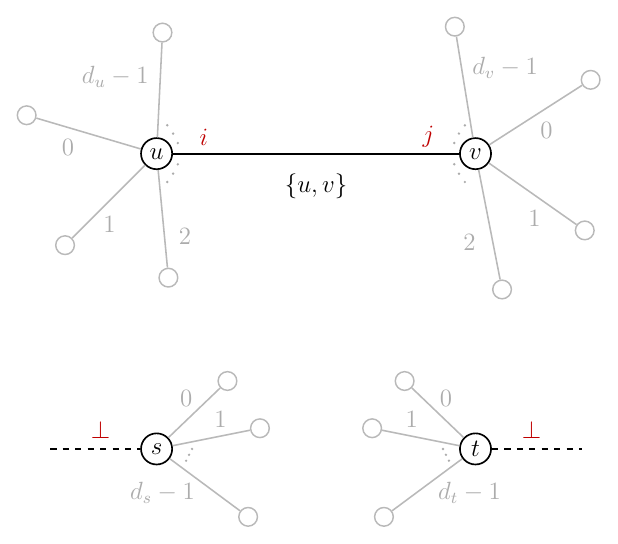}
\caption{Local indexing of neighbours in a graph. For an ordinary edge $\{u,v\}$, the same edge has two local indices: $i$ at $u$ and $j$ at $v$, meaning that $f_u(i)=v$ and $f_v(j)=u$; and $i_{u,v}=i$ and $i_{v,u}=j$. The gray edges indicate the other incident edges, whose indices range from $0$ to $d_u-1$ at $u$, and from $0$ to $d_v-1$ at $v$, excluding $i$ and $j$ correspondingly. For a boundary vertex $b \in \{s,t\}$, the ordinary graph edges are indexed by $0,\ldots,d_b-1$, while the dashed boundary edge has the special index $\bot$.}\label{fig:neighbour_ordering}
\end{figure}

In the remainder of this section, we will prove the following theorem.
\begin{theorem}\label{thm:gauge-state-preparation-transducer}
Let $\mathbf{W}\geq W(G)$ and $\mathbf{R}\geq {\cal R}_{st}(G)$ be known upper bounds. There exists a transducer $U_{\cal AB}=U_{\cal AB}(O_G,O_w)$ that depends on oracles $O_G,O_w$ as in \Cref{prob:gauge_problem_estimate}, such that:
\begin{enumerate}
\item the public space of $U_{\cal AB}$ is ${\cal H}=\mathrm{span}\{\ket{\tils},\ket{\tilt}\}\otimes\mathbb{C}^k$;

\item let
$\Sigma$ denote the transduction action of $U_{\cal AB}$ on ${\cal H}$.
If $s$ and $t$ are connected, then
\[
\Sigma \;=\; -\Big({\ket{\tilt}\bra{\tils}}\otimes U_s(t)
\;+\;{\ket{\tils}\bra{\tilt}}\otimes U_s(t)^\dagger\Big),
\]
and if $s$ and $t$ are not connected, then $\Sigma=-I_{\cal H}$. In both cases
$\Sigma=\Sigma^\dagger=\Sigma^{-1}$, and 
$W(U_{\cal AB})\leq 2\sqrt{\mathbf W\mathbf R}$;
\item $U_{\cal AB}$ can be implemented in $O(1)$ calls to $O_G$ and $O_w$, and $O(\log n)$ additional elementary operations.

\end{enumerate}
\end{theorem}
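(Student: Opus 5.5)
We will build $U_{\cal AB}$ as the product of two reflections, in the style of the quantum-walk transducer of \cite{belovs2024weldedTrees}, lifted to the unitary-labeled setting as in multidimensional quantum walks~\cite{jeffery2022kDist}.

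\textbf{The two reflections.} The ambient space is spanned by $\ket{u}\ket{i}\ket{\phi}$, where $u\in V$, $i\in\{0,\dots,d_u-1\}\cup\{\bot\}$ (with $\bot$ allowed only for $u\in\{s,t\}$), and $\ket{\phi}\in\mathbb{C}^k$.

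\begin{itemize}
\item[$\bullet$] ${\cal B}$ is the reflection $R_{\cal B}=2\Pi_{\cal B}-I$ onto the span of the vectors $\ket{\psi_u}\ket{\phi}$, where $\ket{\psi_u}=\sqrt{w_u}\,O_w\ket{u}\ket{0}$ for $u\notin\{s,t\}$. For $u\in\{s,t\}$ we use the normalized version of the vector $\sqrt{w_u}\,O_w\ket{u,0}+\alpha\ket{u,\bot}$, with the boundary weight $\alpha^2$ chosen below.
\item[$\bullet$] ${\cal A}$ is the reflection onto the span of the edge states $\ket{u,i_{u,v}}\ket{\phi}+\ket{v,i_{v,u}}U_{uv}\ket{\phi}$, i.e.\ the $+1$ eigenspace of the swap-with-transport $O_G$.
\end{itemize}

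Since $O_G$ is an involution up to relabeling (because $U_{vu}=U_{uv}^\dagger$), $R_{\cal A}$ equals $O_G$ on non-boundary edges and $I$ on boundary ones. Concretely, a controlled swap implements it in $O(1)$ calls to $O_G$. $R_{\cal B}$ is $O_w(2\ket{0}\bra{0}-I)O_w^\dagger$ plus a boundary rotation on $s,t$ costing $O(\log n)$ gates.

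Set $U_{\cal AB}=R_{\cal A}R_{\cal B}$ with public space ${\cal H}=\mathrm{span}\{\ket{\tils},\ket{\tilt}\}\otimes\mathbb{C}^k$. To avoid the public vectors leaking into the reflections, we follow \cite{belovs2024weldedTrees}: ${\cal B}$'s star states use only the private copy of the boundary edge, and the public $\ket{\tils},\ket{\tilt}$ enter through $R_{\cal A}$ acting on the pair (boundary edge) as $-I$. This yields the $-$ sign in $\Sigma$. Item~3 then follows immediately.

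\textbf{Verifying the action via Theorem~\ref{thm:transduction-action}.} By the second part of that theorem it suffices, for each $\ket{\xi}\in{\cal H}$, to exhibit $\ket{\nu}\perp{\cal H}$ with $U_{\cal AB}(\ket{\xi}+\ket{\nu})-\ket{\nu}\in{\cal H}$ equal to the claimed $\Sigma\ket{\xi}$.

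\emph{Connected case.} Take $\ket{\xi}=\ket{\tils}\ket{\phi}$. Fix gauge unitaries $g_u=U_s(u)$ on the component of $s$, so that $U_{uv}=g_vg_u^\dagger$. Take an optimal unit $st$-flow $\theta$ and define
\[
\ket{\nu}=\sum_{(u,v)}\frac{\theta_{uv}}{\sqrt{w_{uv}}}\ket{u,i_{u,v}}\,g_u\ket{\phi},
\]
with appropriate signs and rescaling. This is exactly the scalar flow catalyst tensored with the transported vector. Flatness ensures that the two endpoints of each edge carry consistent internal vectors, namely $U_{uv}g_u\ket{\phi}=g_v\ket{\phi}$. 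Hence:
\begin{itemize}
\item[$\bullet$] $\theta_{uv}=-\theta_{vu}$ makes $\ket{\nu}$ antisymmetric under $R_{\cal A}$;
\item[$\bullet$] flow conservation makes $\ket{\nu}$ orthogonal to the star states at interior vertices;
\item[$\bullet$] the net flow $\pm1$ at $s,t$ pairs with the boundary terms.
\end{itemize}
The calculation is then verbatim the scalar one with an extra tensor factor, giving the output $-\ket{\tilt}g_t\ket{\phi}=-\ket{\tilt}U_s(t)\ket{\phi}$. Symmetrically, the input $\ket{\tilt}$ is sent to $-\ket{\tils}U_s(t)^\dagger\ket{\phi}$.

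\emph{Disconnected case.} Here $\ket{\nu}=0$ or a local boundary term suffices: by the construction the boundary pair is reflected to $-\ket{\xi}$.

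\textbf{Catalyst norm and the main obstacle.} The norm of the flow catalyst is $\sum\theta_{uv}^2/w_{uv}\cdot(\text{scale})\le{\cal R}_{st}$ times the scale. Rescaling all weights by $\sqrt{\mathbf R/\mathbf W}$ (equivalently choosing $\alpha^2$ accordingly) balances the flow part against the star/boundary part, which is of size $\sim\mathbf W$ times the scale. This is the standard balance that yields the bound $2\sqrt{\mathbf W\mathbf R}$. Since $\Sigma$ is then linear and has the stated form, it is Hermitian and unitary.

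The main obstacle is getting the boundary bookkeeping exactly right: one must show that the public vectors are mapped into ${\cal H}$ with precisely the sign $-1$, and that the catalyst is orthogonal to ${\cal H}$. I would settle this first by a direct computation on the path $s$--$t$ of length one, and then lift it using the gauge transformation $\ket{u,i}\ket{\phi}\mapsto\ket{u,i}g_u^\dagger\ket{\phi}$. This transformation conjugates the whole construction to the trivial-connection walk tensored with $I_k$, which reduces every step to the known scalar result.
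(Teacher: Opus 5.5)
\paragraph{The catalyst is missing its potential-type component.} The catalyst you write down is only the flow part (the paper's $\ket{\nu_1}$), and it cannot work on its own. The flow properties you list show that $-\ket{\tils}\ket{\phi}+\ket{\tilt}U_s(t)\ket{\phi}+\ket{\nu_1}$ lies in ${\cal A}^\perp\cap{\cal B}^\perp$, so it is \emph{fixed} by $U_{\cal AB}$. That controls only the antisymmetric boundary combination. No multiple of $\ket{\nu_1}$ is a catalyst for $\ket{\tils}\ket{\phi}$. Already for a single edge $s$--$t$ with $k=1$, requiring $U_{\cal AB}(\ket{\tils}+c\ket{\nu_1})-c\ket{\nu_1}\in{\cal H}$ forces $c=0$ from the $\ket{s,0}$ coordinate and $c=-1$ from the $\ket{t,0}$ coordinate.

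The missing idea is a second catalyst, $\ket{\nu_0}=\sum_{u,i}\sqrt{w_{u,f_u(i)}/w_0}\,\ket{u,i}U_s(u)\ket{\phi}$. By flatness it lies in ${\cal A}$, and it completes the boundary terms so that $\ket{\tils}\ket{\phi}+\ket{\tilt}U_s(t)\ket{\phi}+\ket{\nu_0}\in{\cal B}$. Hence this vector is mapped to $-\ket{\tils}\ket{\phi}-\ket{\tilt}U_s(t)\ket{\phi}+\ket{\nu_0}$. Subtracting the two relations gives the catalyst $\frac12(\ket{\nu_0}-\ket{\nu_1})$. Since $\braket{\nu_0}{\nu_1}=\sum_u\theta(u)=0$, its squared norm is $\frac14\bigl(W(G)/w_0+2w_0{\cal R}_{st}(G)\bigr)$. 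The ``star/boundary part of size $\sim\mathbf W$'' in your norm estimate is exactly $\norm{\ket{\nu_0}}^2$, but nothing in your construction produces it.

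The same omission breaks your disconnected case. The star state at $s$ contains the ordinary edges at $s$, so $2\Pi_{\cal B}-I$ leaks $\ket{\tils}\ket{\phi}$ into ${\cal L}$. Neither $\ket{\nu}=0$ nor a local term is then a catalyst. You need $\ket{\nu_0}$ restricted to the entire component of $s$, whose squared norm is at most $W(G)/w_0$.

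\paragraph{Boundary setup, the gauge reduction, and the final bound.} Your boundary description is internally inconsistent. The right construction is your bullet definition of ${\cal B}$, with $\ket{u,\bot}$ inside the star states of $s$ and $t$, together with ${\cal H}\perp{\cal A}$. Then $2\Pi_{\cal A}-I=O_G(I-2\Pi_{\cal H})$, which is $-I$ on ${\cal H}$. The two variants you also state both fail:
\begin{itemize}
\item Taking $R_{\cal A}=I$ on boundary edges flips the global sign of $\Sigma$.
\item Letting ${\cal B}$ see only a ``private copy'' while $R_{\cal A}$ acts as $\pm I$ on the public vectors decouples ${\cal H}$ from ${\cal L}$. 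Then $\Sigma=I$ regardless of connectivity.
\end{itemize}

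Your closing gauge argument is sound, and it is a nicer route than direct computation. Conjugating by $\sum_u\ket{u}\bra{u}\otimes I\otimes g_u^\dagger$ (base point $s$ on its component, and $t$ on its own if different) is a unitary preserving ${\cal H}$ and ${\cal L}$. It maps ${\cal A}$ and ${\cal B}$ to their trivial-connection versions tensored with $\mathbb{C}^k$. Hence it conjugates transduction actions and preserves catalyst norms. However, it only reduces item~2 to the scalar walk transducer. You must then invoke or prove the full scalar statement with the two-part catalyst above. A length-one check does not supply this; in fact that check would expose the flow-only ansatz as insufficient.

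Finally, $W(U_{\cal AB})$ is a supremum over all unit vectors $a\ket{\tils}\ket{\psi}+b\ket{\tilt}\ket{\phi}$. Take $w_0=\sqrt{\mathbf W/\mathbf R}$, which gives per-basis bounds $\frac34\sqrt{\mathbf{WR}}$ (connected) and $\sqrt{\mathbf{WR}}$ (disconnected). Reaching $2\sqrt{\mathbf{WR}}$ from these requires linearity of the minimal catalyst and $(|a|+|b|)^2\le 2$, which you do not address.
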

An immediate corollary of this theorem, and \Cref{thm:transducer-implementation}, is the following.

\begin{corollary}\label{cor:gauge-state-preparation-algorithm}
For any $\eps \in (0,1)$, there is a quantum algorithm that solves \Cref{prob:gauge_problem_state_gen} with error $\eps$, meaning that, if $s$ and $t$ are connected, it outputs a state $\ket{\varphi}$ such that
$$\norm{\ket{\varphi}-\ket{0}\ket{0}U_s(t)\ket{\psi_s}}^2\leq \eps,$$
using $O\left(\sqrt{\mathbf{W}\mathbf{R}}/\eps\right)$ calls to $O_G$ and $O_w$, one call to $O_s$, and $\widetilde{O}\left(\sqrt{\mathbf{W}\mathbf{R}}/\eps\right)$ additional elementary operations.
\end{corollary}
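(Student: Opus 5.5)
The plan is to apply \Cref{thm:transducer-implementation} directly to the transducer $U_{\cal AB}$ from \Cref{thm:gauge-state-preparation-transducer}, feeding it the public-space input $\ket{\tils}\ket{\psi_s}$.

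First, prepare the input. Start from $\ket{0}\ket{0}\ket{0}$ and prepare $\ket{\tils}$ on the vertex and index registers with $O(1)$ gates. Then apply $O_s$ to the $\mathbb{C}^k$ register, giving $\ket{\tils}\ket{\psi_s}\in{\cal H}$. This is the only call to $O_s$.

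Second, identify the target. By item~2, in the connected case the transduction action satisfies
\[
\Sigma\ket{\tils}\ket{\psi_s}=-\ket{\tilt}\,U_s(t)\ket{\psi_s}.
\]
By item~2 and the definition of $W(S)$ as a supremum over unit vectors, the smallest catalyst $\ket{\nu}$ for this unit vector has
\[
\norm{\ket{\nu}}^2\le W(U_{\cal AB})\le 2\sqrt{\mathbf W\mathbf R}.
\]

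Third, run the implementation algorithm. Apply ${\cal A}_K(U_{\cal AB})$ from \Cref{thm:transducer-implementation} with
\[
K=\lceil 8\sqrt{\mathbf W\mathbf R}/\eps\rceil\ge 4\norm{\ket{\nu}}^2/\eps .
\]
This yields a state within squared distance $\eps$ of $-\ket{\tilt}U_s(t)\ket{\psi_s}$.

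Fourth, fix up the output form. We uncompute $\ket{\tilt}$ to $\ket{0}\ket{0}$ with $O(1)$ gates; since this map is unitary, the squared distance is unchanged. We also remove the known global sign $-1$, either by a fixed phase on the circuit or by noting that a global phase is irrelevant. The statement's norm bound requires the sign to be fixed; we handle this by composing with $-I$, which is free. The result $\ket{\varphi}$ then satisfies
\[
\norm{\ket{\varphi}-\ket{0}\ket{0}U_s(t)\ket{\psi_s}}^2\le\eps .
\]

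Finally, count the cost. ${\cal A}_K$ makes $K$ controlled calls to $U_{\cal AB}$, and by item~3 each call uses $O(1)$ calls to $O_G$ and $O_w$ plus $O(\log n)$ gates. Controlled versions of $O_G$ and $O_w$ are assumed available, as is standard in the model of \Cref{thm:transducer-implementation}. The other gates number $O(K)$. Altogether this gives $O(\sqrt{\mathbf W\mathbf R}/\eps)$ oracle calls and $O(K\log n)=\widetilde O(\sqrt{\mathbf W\mathbf R}/\eps)$ elementary gates.

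The main subtlety is not in this corollary but in the bookkeeping above: the catalyst bound must be taken from the transduction complexity of the specific input, and the sign in $\Sigma$ must be handled. Everything else is immediate from the cited results.
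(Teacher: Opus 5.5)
Your proposal is correct and follows the paper's proof essentially step for step. You prepare $\ket{\tils}\ket{\psi_s}$ with one call to $O_s$, run ${\cal A}_K(U_{\cal AB})$ with $K=\lceil 8\sqrt{\mathbf{W}\mathbf{R}}/\eps\rceil$ (justified by $W(U_{\cal AB})\le 2\sqrt{\mathbf{W}\mathbf{R}}$), and then map $\ket{\tilt}$ to $\ket{0}\ket{0}$ and remove the known sign, exactly as the paper does by outputting $-V_t\ket{\tilde\varphi}$.
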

\begin{proof}
Prepare the state $\ket{\tils} O_s\ket{0} = \ket{\tils}\ket{\psi_s}$ using one call to $O_s$ and $O(\log n)$ additional elementary gates.
Apply \Cref{thm:transducer-implementation} to $S = U_{\cal AB}$ with error parameter $\eps$ and
$K = \left\lceil \frac{8 \sqrt{\mathbf W \mathbf R}}{\eps} \right\rceil$. 
Indeed, item~2 of \Cref{thm:gauge-state-preparation-transducer} gives $W(U_{\cal AB}) \leq 2 \sqrt{\mathbf W \mathbf R}$, so $K \geq 4 W(U_{\cal AB}) / \eps$, as required. Let $\ket{\tilde\varphi}$ be the resulting state. If $s$ and $t$ are connected, then item~2 of \Cref{thm:gauge-state-preparation-transducer} and \Cref{thm:transducer-implementation} give
$\norm{ \ket{\tilde\varphi} + \ket{\tilt}U_s(t)\ket{\psi_s} }^2 \leq \eps$. Let {$V_t$} be a unitary satisfying
${V_t}\ket{\tilt}\ket{\psi}=\ket{0}\ket{0}\ket{\psi}$
for every $\ket{\psi}\in\mathbb C^k$. This unitary can be implemented using $O(\log n)$ elementary gates. Output
$\ket{\varphi}:=-{V_t}\ket{\tilde\varphi}$. Then
$$\norm{\ket{\varphi}-\ket{0}\ket{0}U_s(t)\ket{\psi_s}}^2=\norm{-V_t(\ket{\tilde\varphi}+\ket{\hat{t}}U_s(t)\ket{\psi_s})}^2=\norm{\ket{\tilde\varphi}+\ket{\tilt}U_s(t)\ket{\psi_s}}^2\leq\eps.$$
The algorithm makes $K=O\left(\sqrt{\mathbf W\mathbf R}/\eps\right)$ controlled calls to $U_{\cal AB}$. By item~3 of \Cref{thm:gauge-state-preparation-transducer}, each such call uses $O(1)$ calls to $O_G$ and $O_w$ and $O(\log n)$ additional elementary operations. This gives the claimed complexity.
\end{proof}

A further corollary is a solution to the overlap problem. Naively, we could take the algorithm from {\Cref{cor:gauge-state-preparation-algorithm}}, and do amplitude estimation to estimate the amplitude on $\ket{\tilt}\ket{\psi_t}$, to get an algorithm for \Cref{prob:gauge_problem_estimate}, but, as discussed in \Cref{rem:suboptimal-alg}, this would give complexity $\widetilde{O}(\sqrt{\mathbf{WR}}/\eps^3)$ at best, because we need to put a lot of algorithmic effort into making the errors in the algorithm from {\Cref{cor:gauge-state-preparation-algorithm}} small enough to not impact our estimation. We can do much better by composing \emph{transducers}, where we don't have error, and only turning the final transducer into an algorithm when all composition is done. This final transformation does introduce errors, but only once, at the end. This is the tactic we use in the proof of \Cref{cor:gauge-state-overlap-algorithm} below, and we take this opportunity to emphasize this as a general principle: compositions should generally be done on transducers rather than algorithms.

\begin{remark}\label{rem:suboptimal-alg}
It is instructive to see why combining the state-generation algorithm of
\Cref{cor:gauge-state-preparation-algorithm} with standard estimation techniques
in a black-box manner is suboptimal. The simplest approach is repeated prepare-and-measure: prepare an approximation of ${\ket{\tilt}}U_s(t)\ket{\psi_s}$,
apply $O_t^\dagger$ to the last register, and measure, so that the probability
of the outcome ${\ket{\tilt}}\ket{0}$ is, up to a bias controlled by the
preparation error $\eps$, $p=|\bra{\psi_t}U_s(t)\ket{\psi_s}|^2$. Estimating a
Bernoulli probability to additive error $\eps$ requires $\Theta(1/\eps^2)$
samples, and each sample must be prepared to norm error $O(\eps)$ so that the
bias does not dominate. However, the preparation error also enters
quadratically: converting the transducer $U_{\cal AB}$ into a circuit with
norm error $\eps'$ requires $K=\Theta(\norm{\ket{\nu}}^2/\eps'^2)$ calls to
$U_{\cal AB}$ (\Cref{thm:transducer-implementation}), i.e.\
$\widetilde{O}(\sqrt{\mathbf{W}\mathbf{R}}/\eps^2)$ per sample. In total, the
precision is paid for twice --- once in the conversion and once in the
sampling --- yielding $\widetilde{O}(\sqrt{\mathbf{W}\mathbf{R}}/\eps^4)$.

Amplitude estimation helps, reducing the number of uses of
the preparation circuit to $O(1/\eps)$, but these uses are coherent, and
norm errors accumulate linearly across coherent calls. For the final state to
be within constant distance of the ideal one, the preparation must therefore
be accurate to norm error $O(\eps)$, at cost
$\widetilde{O}(\sqrt{\mathbf{W}\mathbf{R}}/\eps^2)$ per call, for a total of
$\widetilde{O}(\sqrt{\mathbf{W}\mathbf{R}}/\eps^3)$. The source of the loss is
the same in both cases: the conversion from transducer to algorithm is
performed inside the estimation loop, so its error parameter is coupled to the
target precision.
\end{remark}

Composing amplitude estimation
with $U_{\cal AB}$ \emph{at the level of transducers} avoids the problem in \Cref{rem:suboptimal-alg} entirely:
the composed transducer is exact, error is introduced only once --- in the
final conversion, with a constant error parameter --- and the resulting
complexity is $\widetilde{O}(\sqrt{\mathbf{W}\mathbf{R}}/\eps)$, as the following theorem makes precise.

\begin{theorem}\label{cor:gauge-state-overlap-algorithm}
There is a quantum algorithm that solves \Cref{prob:gauge_problem_estimate}
with bounded error, meaning that, with probability at least $2/3$, it correctly
reports that $s$ and $t$ are disconnected, or outputs an estimate $\tilde p$ such that
$\abs{\,\abs{\bra{\psi_t}U_s(t)\ket{\psi_s}}^2-\tilde p\,}\leq \eps$, using
$\widetilde{O}\left(\sqrt{\mathbf{W}\mathbf{R}}/\eps\right)$ calls to $O_G$,
$O_w$, $O_s$ and $O_t$, and
$\widetilde{O}\left(\sqrt{\mathbf{W}\mathbf{R}}/\eps\right)$ additional
elementary operations.
\end{theorem}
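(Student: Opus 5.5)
We build a reflection $O=2\ket{\pi}\bra{\pi}-I$ out of the transducer $U_{\cal AB}$ of \Cref{thm:gauge-state-preparation-transducer}, feed it to the amplitude-estimation transducer of \Cref{lem:amp-amp-comp}, and convert to an algorithm only once, at the end, via \Cref{thm:transducer-implementation} with constant error.

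\textbf{Step 1: the reflection.} Take $\ket{\pi}:=\frac{1}{\sqrt2}(\ket{0}_{\cal C}\ket{\tils}+\ket{1}_{\cal C}\ket{\tilt})\ket{\psi_s}$ on a control qubit ${\cal C}$ tensored with ${\cal H}$, preparable with one call to $O_s$.

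1. Let $P$ be the transducer that first applies $U_{\cal AB}$ controlled on ${\cal C}=1$, after mapping $\ket{\tils}\mapsto$ the $\ket{1}$ branch appropriately. By \Cref{thm:gauge-state-preparation-transducer}, $\Sigma$ maps $\ket{\tils}\ket{\psi_s}$ to $-\ket{\tilt}U_s(t)\ket{\psi_s}$ in the connected case, and to $-\ket{\tils}\ket{\psi_s}$ otherwise.
2. A cleaner choice is to use $\Sigma$ directly in a Hadamard-test style: set $\ket{\pi}:=\ket{\tils}\ket{\psi_s}$ and $\ket{\pi'}:=\Sigma\ket{\pi}$. Since $\Sigma$ is a self-inverse transduction action, the reflection about $\ket{\pi'}$ is $\Sigma(2\ket{\pi}\bra{\pi}-I)\Sigma$.
3. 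Its transduction action is exact, because transducer composition (sequential composition of transducers, as in \cite{belovs2023LasVegasTime}) preserves the action exactly.
4. The reflection $2\ket{\pi}\bra{\pi}-I=O_s'(2\ket{\tils,0}\bra{\tils,0}-I)O_s'^\dagger$ is an ordinary circuit with transduction complexity $0$. It costs two calls to $O_s$ plus $O(\log n+\log k)$ gates, where $O_s'$ denotes $O_s$ on the last register.

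So $S_O:=U_{\cal AB}\cdot R_\pi\cdot U_{\cal AB}$ has action $O=2\ket{\pi'}\bra{\pi'}-I$. Sequential composition adds catalysts, so $W(S_O)\le 2W(U_{\cal AB})\le 4\sqrt{\mathbf W\mathbf R}$.

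\textbf{Step 2: the projector.} Take $\Pi:=\ket{\tilt}\bra{\tilt}\otimes O_t\ket0\bra0O_t^\dagger$, implementable as a reflection with two calls to $O_t$. Then:

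1. In the connected case, $p=\norm{\Pi\ket{\pi'}}^2=\abs{\bra{\psi_t}U_s(t)\ket{\psi_s}}^2$.
2. In the disconnected case, $\ket{\pi'}=-\ket{\tils}\ket{\psi_s}$, so $p=0$.

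To distinguish the two cases, add a second projector $\Pi_s:=\ket{\tils}\bra{\tils}\otimes I$. Running the same construction with $\Pi_s$ gives $p_s=1$ when disconnected and $p_s=0$ when connected (since $s\neq t$). A constant-precision estimate ($M=O(1)$) therefore decides connectivity with constant cost $O(\sqrt{\mathbf{WR}})$.

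\textbf{Step 3: estimation and conversion.} Apply \Cref{lem:amp-amp-comp} with $\Pi$ and $M=\lceil c/\eps\rceil$. The error bound $6\pi\sqrt p/M+9\pi^2/M^2$ is at most $\eps$ for a suitable constant $c$, since $p\le1$. The resulting $S_M$ has complexity $W(S_M,\cdot)\le (M-1)(1+4\sqrt{\mathbf W\mathbf R})=O(\sqrt{\mathbf W\mathbf R}/\eps)$.

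Convert $S_M$ via \Cref{thm:transducer-implementation} with a constant error parameter $\delta$ (say $1/100$), using $K=O(\sqrt{\mathbf W\mathbf R}/\eps)$ calls. The output is $\delta$-close in squared norm to $\ket{0}\ket{\eta}$, so the measurement distribution changes by $O(\sqrt\delta)$ in total variation. The success probability $3/4$ from \Cref{lem:amp-amp-comp} therefore drops only slightly. Boost it to $2/3$, after combining with the connectivity test, by $O(1)$ repetitions and a median, together with a union bound.

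Each call to $S_M$ costs:
1. $O(1)$ calls to $O_G,O_w,O_s,O_t$;
2. $O(\log n+\log k)$ gates;
3. Fourier transforms on $O(\log(1/\eps))$ qubits, i.e. polylogarithmically many gates.

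This yields the stated $\widetilde O(\sqrt{\mathbf{WR}}/\eps)$ bounds.

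\textbf{Main obstacle.} The delicate point is Step 1: checking that the conjugation $U_{\cal AB}R_\pi U_{\cal AB}$ is a legitimate transducer whose action is exactly $\Sigma R_\pi\Sigma$, with catalyst norm bounded as claimed. This requires the smallest catalyst of $U_{\cal AB}$ for the intermediate state $R_\pi\Sigma\ket\xi$, which is again a unit vector in ${\cal H}$ and so is covered by the bound $W(U_{\cal AB})$. It also requires that $R_\pi$ is controlled on the privacy qubit so it acts trivially on ${\cal L}$. I would write the catalyst for $S_O$ explicitly as the sum of the two $U_{\cal AB}$ catalysts, as in the proof of \Cref{lem:amp-amp-comp}, and verify the action directly.
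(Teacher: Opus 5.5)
Your overall plan is the paper's: conjugate an easy reflection by $U_{\cal AB}$, feed it to \Cref{lem:amp-amp-comp}, and convert once at the end with constant error. Absorbing $O_t$ into $\Pi$ and testing connectivity with $\Pi_s$ are harmless variations. However, two steps do not work as written.

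\textbf{Shared private space.} The product $S_O:=U_{\cal AB}R_\pi U_{\cal AB}$, with both calls sharing the private space ${\cal L}$, does not in general have action $\Sigma R_\pi\Sigma$. Your plan to use the sum of the two catalysts fails because each call of $U_{\cal AB}$ also acts on the other call's catalyst. Concretely, take a single edge $\{s,t\}$ with $k=1$, $U_{st}=1$ and $w_{st}=w_0=1$. Then $U_{\cal AB}$ maps $\ket{\tils}\mapsto\ket{t,0}$, $\ket{\tilt}\mapsto\ket{s,0}$, $\ket{s,0}\mapsto-\ket{\tils}$ and $\ket{t,0}\mapsto-\ket{\tilt}$. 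With $R_\pi=2\ket{\tils}\bra{\tils}-I$ on ${\cal H}$ (identity on ${\cal L}$), the product $U_{\cal AB}R_\pi U_{\cal AB}$ is block diagonal with respect to ${\cal H}\oplus{\cal L}$. It therefore acts on ${\cal H}$ as $\Sigma$, not as $\Sigma R_\pi\Sigma=2\ket{\tilt}\bra{\tilt}-I$. The sequential composition you cite relies on a direct sum of private spaces, which the literal product does not provide. The paper's \Cref{claim:above} does exactly this: it gives the two calls disjoint copies ${\cal L}_0,{\cal L}_1$ and controls each call on ${\cal H}\oplus{\cal L}_b$. Only then is the catalyst the orthogonal sum of the two, so that $W(S_O)\le 2W(U_{\cal AB})$.

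\textbf{Initial state.} \Cref{lem:amp-amp-comp} gives $S_M:\ket{0}\ket{0}\ket{\pi'}\rightsquigarrow\ket{0}\ket{\eta}$. Converting $S_M$ into an algorithm therefore requires the input $\ket{\pi'}=\Sigma\ket{\tils}\ket{\psi_s}=-\ket{\tilt}U_s(t)\ket{\psi_s}$. This is precisely the state you cannot prepare directly, and your proposal never says where it comes from; your $\Pi_s$ connectivity test has the same problem. The paper closes this with one more composition at the transducer level (\Cref{claim:S-fin}). It prepends $O_s$, a call to $U_{\cal AB}$ on a further copy of ${\cal L}$, and $O_t^\dagger$, so the final transducer starts from the trivially preparable $\ket{0}\ket{0}\ket{\tils,0}$. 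Alternatively, this preparation happens only once, outside the estimation loop. You could therefore prepare $\ket{\pi'}$ to constant error with \Cref{cor:gauge-state-preparation-algorithm} at cost $O(\sqrt{\mathbf{WR}})$, and the two constant errors simply add by unitarity. With such a preparation, the connectivity test also reduces to measuring the vertex register, as in the paper.
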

We prove this theorem in \Cref{sec:proof-main-theorem}.

In the next two sections, we prove \Cref{thm:gauge-state-preparation-transducer}. \Cref{sec:transducer} describes the transducer, and analyzes its action and transduction complexity, and \Cref{sec:implementation} describes how to implement the transducer in the claimed complexity.

\subsection{The transducer}\label{sec:transducer}

\paragraph{Public and private spaces.} We define the private (internal) space as the span of all edges in $G$, tensored with the extra space $\mathbb{C}^k$. Note that we have two ways of representing any edge $\{u,v\}\in E$: as $\ket{u,i_{u,v}}$ (from $u$'s perspective) or $\ket{v,i_{v,u}}$ (from $v$'s perspective):
\[
\mathcal{L} :=\left(
\bigoplus_{\{u,v\} \in E} \mathrm{span}\{ \ket{u,i_{u,v}}, \ket{v,i_{v,u}}\}\right)\otimes \mathbb{C}^k.
\]
We define the public (boundary) space as the span of the two new boundary edges we have conceptually added to $G$ at $s$ and $t$, tensored with the extra space $\mathbb{C}^k$:
\[
\mathcal{H} = \mathrm{span} \{ {\ket{\tils}}, {\ket{\tilt}}\}\otimes \mathbb{C}^k,
\]
from which we immediately see ${\ket{\tils}}\ket{\psi_s},{\ket{\tilt}}U_s(t)\ket{\psi_s}\in {\cal H}$.

\begin{remark}[Quantum walk operator]
Our transducer $U_{\cal AB}$ will be an extension of a quantum walk operator, which works roughly as follows (the expert reader may skip this remark). We can either assume $G$ is bipartite, or force it to be, by conceptually inserting a new vertex into each edge, which we do here. Let ${\cal A}$ be the half of the bipartition consisting of the vertices inserted into each edge, and let ${\cal B}$ be the original vertices. We can define a reflection for each of these:
around the span of \emph{star states} $\sum_{i=0}^{d_u-1}\sqrt{w_{u,f_u(i)}}\ket{u,i}$ for $u\in {\cal B}$; and similarly for ${\cal A}$, but since vertices in ${\cal A}$ are inserted in the middle of an edge $\{u,v\}$ of $G$, they have two incident edges of the same weight, so their star states have the form $\ket{u,i_{u,v}}+\ket{v,i_{v,u}}$. Note that each of these two sets of star states (for ${\cal A}$, or for ${\cal B}$) is pairwise orthogonal, which facilitates implementing this reflection -- this is the reason for introducing the bipartiteness condition.
For our construction, we will extend this notion, to incorporate the unitaries labeling the edges of $G$. For a more detailed exposition, see, for example,~\cite{jeffery2022kDist} or~\cite{jeffery2026QWlecture}.
\end{remark}

\paragraph{Transducer as a product of two reflections.} We define the transducer as a product of two reflections, around spaces ${\cal A}$ and ${\cal B}$ defined as follows. The space ${\cal A}$ contains a state for each edge of $G$, incorporating the application of the corresponding unitary:
\[
\mathcal{A} = \bigoplus_{\{u,v\} \in E} \mathrm{span} \left\{
\ket{u,i_{u,v}} \ket{z} + \ket{v,i_{v,u}}U_{uv}\ket{z} : z\in[k] \right\}.
\]
Importantly, as we will see shortly, the given oracle $O_G$ facilitates reflecting around this space. The space ${\cal B}$ contains a state for each vertex of $G$:
\begin{align*}
\mathcal{B} &=
\left(\bigoplus_{u \in V \setminus \{s,t\}} \mathrm{span} \left\{ \sum_{i=0}^{d_u - 1} \sqrt{w_{u, f_u(i)}} \ket{u,i} \right\}
\oplus \bigoplus_{b \in \{s,t\}} \mathrm{span} \left \{
\sqrt{w_0}\ket{b}\ket{\bot} + \sum_{i=0}^{d_b - 1} \sqrt{w_{b, f_b(i)}} \ket{b,i}
\right\}\right)\otimes \mathbb{C}^k,
\end{align*}
for some parameter $w_0 > 0$ to be chosen later.
The states on the left of the tensor are referred to in the literature as \emph{star} states, because they can be visualized as a star consisting of the edges coming out of a vertex $u$, proportional to their edge weights. For the boundary vertices $s$ and $t$, we have also included the boundary edges. Importantly, as we will soon see, the given oracle $O_w$ facilitates reflecting around this space.

Let $\Pi_{\cal A}$ and $\Pi_{\cal B}$ be the orthogonal projectors onto $\cal A$ and $\cal B$.
In the following lemma, we show that
$$U_{\mathcal{AB}}:=(2\Pi_{\cal A}-I)(2\Pi_{\cal B}-I)$$
has the desired transduction action.
\begin{lemma}\label{lem:gauge-state-preparation-transducer}
If $s$ and $t$ are connected, then for any unit vector $\ket{\psi_s}\in\mathbb{C}^k$, $U_{\cal AB}$ transduces
\[
\ket{\tils}\ket{\psi_s}\rightsquigarrow -\ket{\tilt}U_s(t)\ket{\psi_s}\qquad\mbox{and}\qquad \ket{\tilt}U_s(t)\ket{\psi_s}\rightsquigarrow-\ket{\tils}\ket{\psi_s},
\]
in each case with a catalyst $\ket{\nu}\in{\cal L}$ satisfying $\norm{\ket{\nu}}^2\leq \frac{1}{4}\left(\frac{W(G)}{w_0} + 2w_0{\cal R}_{st}(G) \right)$.
\end{lemma}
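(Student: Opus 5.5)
The plan is to exhibit the catalyst explicitly and then apply the second part of \Cref{thm:transduction-action}. Write $R_{\cal A}=2\Pi_{\cal A}-I$ and $R_{\cal B}=2\Pi_{\cal B}-I$. Set $\ket{\psi_u}:=U_s(u)\ket{\psi_s}$ for every $u$ in the component of $s$; by flatness this is well defined, $\ket{\psi_v}=U_{uv}\ket{\psi_u}$ on every edge, and $\ket{\psi_t}=U_s(t)\ket{\psi_s}$. Let $\ket{\xi}=\ket{\tils}\ket{\psi_s}$ and $\ket{\tau}=\ket{\tilt}\ket{\psi_t}$. Since $U_{\cal AB}=R_{\cal A}R_{\cal B}$ and $R_{\cal A}^2=I$, it suffices to find $\ket{\nu}\in{\cal L}$ with
\[
R_{\cal B}(\ket{\xi}+\ket{\nu})=R_{\cal A}(\ket{\nu}-\ket{\tau}).
\]
Then $U_{\cal AB}(\ket{\xi}+\ket{\nu})=\ket{\nu}-\ket{\tau}$, and \Cref{thm:transduction-action} identifies $-\ket{\tau}$ as the transduction action applied to $\ket{\xi}$.

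The construction uses two vectors.

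\emph{The constant-potential vector.} Let
\[
\ket{b}=\frac{1}{2\sqrt{w_0}}\sum_{u}\Big(\textstyle\sum_i\sqrt{w_{u,f_u(i)}}\ket{u,i}\Big)\ket{\psi_u},
\]
where the sum runs over the component of $s$, and for $u\in\{s,t\}$ the star includes the term $\sqrt{w_0}\ket{u,\bot}$. Then $\ket{b}\in{\cal B}$. Split $\ket{b}=\ket{b_{\cal H}}+\ket{b_E}$, where $\ket{b_{\cal H}}=\tfrac12(\ket{\tils}\ket{\psi_s}+\ket{\tilt}\ket{\psi_t})$ is the boundary part and $\ket{b_E}$ is the edge part. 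Grouping the two terms of each edge gives $\sqrt{w_{uv}}(\ket{u,i_{u,v}}\ket{\psi_u}+\ket{v,i_{v,u}}U_{uv}\ket{\psi_u})$, so $\ket{b_E}\in{\cal A}$.

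\emph{The flow vector.} Let $\theta$ be an optimal unit $st$-flow and put
\[
\ket{\phi_E}=\frac{\sqrt{w_0}}{2}\sum_{(u,v)}\frac{\theta_{uv}}{\sqrt{w_{uv}}}\ket{u,i_{u,v}}\ket{\psi_u},
\]
summed over oriented edges. Let $\ket{\phi}=\tfrac12(\ket{\tils}\ket{\psi_s}-\ket{\tilt}\ket{\psi_t})+\ket{\phi_E}$. Two orthogonality facts are needed.
\begin{itemize}
\item $\ket{\phi}\perp{\cal B}$. The only possible overlap is with the star at $u$ tensored with $\ket{\psi_u}$, and that overlap is proportional to $\theta(u)$ plus a boundary contribution. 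This vanishes for internal vertices by flow conservation, and at $s,t$ the boundary terms $\pm\tfrac12$ cancel $\theta(s)=1$ and $\theta(t)=-1$.
\item $\ket{\phi_E}\perp{\cal A}$. On each edge the component is $\theta_{uv}(\ket{u}\ket{\psi_u}-\ket{v}U_{uv}\ket{\psi_u})$ up to scaling, which is orthogonal to every $\ket{u}\ket{z}+\ket{v}U_{uv}\ket{z}$ because $U_{uv}$ is unitary.
\end{itemize}

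Now take $\ket{\nu}=\ket{b_E}+\ket{\phi_E}\in{\cal L}$. Then $\ket{\xi}+\ket{\nu}=\ket{b}+\ket{\phi}$, so
\[
R_{\cal B}(\ket{\xi}+\ket{\nu})=\ket{b}-\ket{\phi}=\ket{b_E}-\ket{\phi_E}+\ket{\tau}.
\]
On the other side, $\ket{\nu}-\ket{\tau}$ is the sum of $\ket{b_E}\in{\cal A}$ and $\ket{\phi_E}-\ket{\tau}\in{\cal A}^\perp$, using ${\cal H}\perp{\cal A}$. Hence $R_{\cal A}(\ket{\nu}-\ket{\tau})=\ket{b_E}-\ket{\phi_E}+\ket{\tau}$, and the two sides match.

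For the norm, the cross term $\langle b_E|\phi_E\rangle$ is proportional to $\sum_{(u,v)}\theta_{uv}$, which vanishes by antisymmetry. Therefore
\[
\norm{\ket{\nu}}^2=\frac{1}{4w_0}\sum_u w_u+\frac{w_0}{4}\sum_{(u,v)}\frac{\theta_{uv}^2}{w_{uv}}\leq\frac{W(G)}{4w_0}+\frac{2w_0{\cal R}_{st}(G)}{4},
\]
where the factor $2$ comes from counting each edge in both orientations.

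The reverse transduction $\ket{\tau}\rightsquigarrow-\ket{\xi}$ follows by the same construction with the roles of $s$ and $t$ swapped and the flow reversed, using the same $\ket{\psi_u}$. The step needing the most care is the boundary bookkeeping in $\ket{\phi}\perp{\cal B}$. The normalisations $1/(2\sqrt{w_0})$ and $\sqrt{w_0}/2$ are chosen precisely so that the boundary halves cancel at $s$ and $t$ and combine to leave exactly $\ket{\tau}$.
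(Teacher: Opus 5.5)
Your construction is the paper's in different packaging: in the paper's notation $\ket{b_E}=\tfrac12\ket{\nu_0}$ and $\ket{\phi_E}=\tfrac12\ket{\nu_1}$. Checking $R_{\cal B}(\ket{\xi}+\ket{\nu})=R_{\cal A}(\ket{\nu}-\ket{\tau})$ directly replaces the paper's ``prove two invariance relations, then add and subtract''. However, there is a sign error at exactly the step you flag, and as written the key identity is false. Take $\theta$ a unit $st$-flow and $\ket{\phi}=\tfrac12(\ket{\tils}\ket{\psi_s}-\ket{\tilt}\ket{\psi_t})+\ket{\phi_E}$. The overlap of $\ket{\phi}$ with the $s$-star $\bigl(\sqrt{w_0}\ket{\tils}+\sum_i\sqrt{w_{s,f_s(i)}}\ket{s,i}\bigr)\ket{z}$ is
\[
\frac{\sqrt{w_0}}{2}\braket{z}{\psi_s}+\frac{\sqrt{w_0}}{2}\,\theta(s)\braket{z}{\psi_s}=\sqrt{w_0}\braket{z}{\psi_s},
\]
and at $t$ it is $-\sqrt{w_0}\braket{z}{\psi_t}$. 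The boundary halves reinforce the flow terms rather than cancel them. Hence $\ket{\phi}\not\perp{\cal B}$ and $R_{\cal B}(\ket{b}+\ket{\phi})\neq\ket{b}-\ket{\phi}$, whereas $R_{\cal A}(\ket{\nu}-\ket{\tau})$ still equals $\ket{b}-\ket{\phi}$. Concretely, your $\ket{\nu}=\tfrac12(\ket{\nu_0}+\ket{\nu_1})$ is the paper's $\ket{\nu_+}$. That is the catalyst for the reverse transduction $\ket{\tau}\rightsquigarrow-\ket{\xi}$, not for $\ket{\xi}$, and your recipe for the reverse direction makes the mirror-image mistake.

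The fix is to swap the sign of the flow part in each direction. Keep $\theta$, but set $\ket{\phi}=\tfrac12(\ket{\tilt}\ket{\psi_t}-\ket{\tils}\ket{\psi_s})+\ket{\phi_E}$. This is half of the paper's vector $-\ket{\tils}\ket{\psi_s}+\ket{\tilt}U_s(t)\ket{\psi_s}+\ket{\nu_1}$, so it is orthogonal to ${\cal B}$, and $\ket{\phi_E}\perp{\cal A}$ still holds. For $\ket{\xi}\rightsquigarrow-\ket{\tau}$, take $\ket{\nu}=\ket{b_E}-\ket{\phi_E}$, so that $\ket{\xi}+\ket{\nu}=\ket{b}-\ket{\phi}$ and
\[
R_{\cal B}(\ket{b}-\ket{\phi})=\ket{b}+\ket{\phi}=\ket{\tau}+\ket{b_E}+\ket{\phi_E}=R_{\cal A}(\ket{b_E}-\ket{\phi_E}-\ket{\tau}).
\]
For $\ket{\tau}\rightsquigarrow-\ket{\xi}$, take $\ket{\nu}=\ket{b_E}+\ket{\phi_E}$. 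Then $\ket{\tau}+\ket{\nu}=\ket{b}+\ket{\phi}$ and $R_{\cal B}(\ket{b}+\ket{\phi})=\ket{b}-\ket{\phi}=R_{\cal A}(\ket{\nu}-\ket{\xi})$. Your norm computation does not depend on this sign, so the bound stands. The corrected argument coincides with the paper's, including the catalysts $\tfrac12(\ket{\nu_0}\mp\ket{\nu_1})$.
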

Thus, if we take $w_0=\sqrt{\mathbf{W}/\mathbf{R}}$ for some known upper bounds $\mathbf{W} \geq W(G)$ and $\mathbf{R} \geq {\cal R}_{st}(G)$, we get
$$\frac{W(G)}{w_0}+2 w_0{\cal R}_{st}(G)\leq \mathbf{W}\sqrt{\frac{\mathbf{R}}{\mathbf{W}}} + 2\mathbf{R}\sqrt{\frac{\mathbf{W}}{\mathbf{R}}} = 3\sqrt{\mathbf{R}\cdot\mathbf{W}}.$$
\begin{proof}
Assume $s$ and $t$ are connected. Then without loss of generality, we can assume $G$ is connected, by simply restricting our attention to the connected component that contains both $s$ and $t$. This restriction is without loss of generality because the spaces ${\cal A}$ and ${\cal B}$ decompose as orthogonal direct sums over the connected components of $G$, while all catalysts constructed below are supported entirely on the component containing $s$ and $t$. Moreover, restricting to this component can only decrease the total weight.

Let $\theta$ be an optimal unit $st$-flow (see \Cref{def:flow}). Define
\begin{align*}
\ket{\nu_0} &:= \frac{1}{\sqrt{w_0}}\sum_{\{u, v\} \in E} \sqrt{w_{uv}} \left(
\ket{u, i_{u,v}} U_s(u) \ket{\psi_s} + \ket{v, i_{v,u}}U_s(v)\ket{\psi_s} \right) = \sum_{u\in V}\sum_{i=0}^{d_u-1}\sqrt{\frac{w_{u,f_u(i)}}{w_0}}\ket{u,i}U_s(u)\ket{\psi_s},\\
\ket{\nu_1} &:= \sqrt{w_0}\!\!\!\sum_{\{u,v\} \in E} \frac{\theta_{uv}}{\sqrt{w_{uv}}} \left( \ket{u, i_{u,v}}U_s(u)\ket{\psi_s} - \ket{v,i_{v,u}}U_s(v)\ket{\psi_s} \right)=\sum_{u\in V}\sum_{i=0}^{d_u-1}\frac{\sqrt{w_0}\theta_{u,f_u(i)}}{\sqrt{w_{u,f_u(i)}}}\ket{u,i}U_s(u)\ket{\psi_s},
\end{align*}
where in the sums over unordered edges $\{u,v\}\in E$, we choose either orientation, $(u,v)$ or $(v,u)$, the choice of which does not impact the definition of either $\ket{\nu_0}$, or $\ket{\nu_1}$. This follows easily from $w_{uv}=w_{vu}$ for $\ket{\nu_0}$. For $\ket{\nu_1}$, if we reverse the orientation, then $\theta_{uv}$ changes sign, and the parenthesized vector also changes sign. Hence the product is unchanged.

We will show that
\begin{align}
    U_{\mathcal{AB}} \left( {\ket{\tils}}\ket{\psi_s} + {\ket{\tilt}} U_s(t)\ket{\psi_s} + \ket{\nu_0} \right) &=  - {\ket{\tils}}\ket{\psi_s} -  {\ket{\tilt}} U_s(t)\ket{\psi_s} + \ket{\nu_0}\label{eq:transduced_sums1}\\
    U_{\mathcal{AB}} \left( -{\ket{\tils}}\ket{\psi_s} + {\ket{\tilt}} U_s(t)\ket{\psi_s} + \ket{\nu_1} \right) &= -{\ket{\tils}}\ket{\psi_s} + {\ket{\tilt}} U_s(t)\ket{\psi_s} + \ket{\nu_1}.\label{eq:transduced_sums2}
\end{align}
Subtracting these two equations, it follows that
\begin{align*}
    U_{\cal AB}\left(2{\ket{\tils}}\ket{\psi_s}+\ket{\nu_0}-\ket{\nu_1}\right) &= -2{\ket{\tilt}}U_s(t)\ket{\psi_s}+\ket{\nu_0}-\ket{\nu_1}\\
    U_{\cal AB}\left({\ket{\tils}}\ket{\psi_s}+\frac{1}{2}(\ket{\nu_0}-\ket{\nu_1})\right) &= -{\ket{\tilt}}U_s(t)\ket{\psi_s}+\underbrace{\frac{1}{2}(\ket{\nu_0}-\ket{\nu_1})}_{=:\ket{\nu_-}},
\end{align*}
and adding them, it follows similarly that
\[
    U_{\cal AB}\left({\ket{\tilt}}U_s(t)\ket{\psi_s}+\frac{1}{2}(\ket{\nu_0}+\ket{\nu_1})\right) = -{\ket{\tils}}\ket{\psi_s}+\underbrace{\frac{1}{2}(\ket{\nu_0}+\ket{\nu_1})}_{=:\ket{\nu_+}}.
\]
Thus, it remains only to prove \Cref{eq:transduced_sums1} and \Cref{eq:transduced_sums2}, and upper bound the catalyst size $\norm{\ket{\nu_\pm}}^2 = \norm{\frac{1}{2}(\ket{\nu_0} \pm \ket{\nu_1})}^2$.
\paragraph{Proof of \Cref{eq:transduced_sums1}.} In order to prove \Cref{eq:transduced_sums1}, we first show that
\[
{\ket{\tils}}\ket{\psi_s} + {\ket{\tilt}} U_s(t)\ket{\psi_s} + \ket{\nu_0} \in \mathcal{B}.
\]
Then the edge terms in $\ket{\nu_0}$ can be grouped by vertices to get:
\begin{align*}
&\sqrt{w_0}\left({\ket{\tils}}\ket{\psi_s} + {\ket{\tilt}}U_s(t)\ket{\psi_s} + \ket{\nu_0}\right)\\
={}& \left(\sqrt{w_0} {\ket{\tils}}\ket{\psi_s} + \sum_{i=0}^{d_s - 1} \sqrt{w_{s,f_s(i)}} \ket{s,i} \ket{\psi_s} \right)
+ \left(\sqrt{w_0} {\ket{\tilt}}U_s(t)\ket{\psi_s} + \sum_{i=0}^{d_t - 1} \sqrt{w_{t,f_t(i)}} \ket{t,i} U_s(t)\ket{\psi_s} \right)\\
&+ \sum_{u \in V \setminus\{s,t\}} \left( \sum_{i=0}^{d_u-1} \sqrt{w_{u,f_u(i)}} \ket{u,i} U_s(u) \ket{\psi_s} \right)\\
={}& \left(\sqrt{w_0} {\ket{\tils}}+ \sum_{i=0}^{d_s - 1} \sqrt{w_{s,f_s(i)}} \ket{s,i}\right)\ket{\psi_s}
+ \left(\sqrt{w_0} {\ket{\tilt}} + \sum_{i=0}^{d_t - 1} \sqrt{w_{t,f_t(i)}} \ket{t,i} \right)U_s(t)\ket{\psi_s}\\
&+ \sum_{u \in V \setminus\{s,t\}} \left( \sum_{i=0}^{d_u-1} \sqrt{w_{u,f_u(i)}} \ket{u,i}  \right)U_s(u) \ket{\psi_s},
\end{align*}
which is clearly in ${\cal B}$.
Hence, the reflection through $\mathcal{B}$ leaves this vector unchanged.

Next, observe that since the graph is flat, for every edge $\{u,v\} \in E$, $U_s(v)=U_{uv}U_s(u)$.
Thus, for every edge $\{u,v\}\in E$,
\[
\ket{u,i_{u,v}}U_s(u)\ket{\psi_s} + \ket{v,i_{v,u}}U_s(v)\ket{\psi_s} = \ket{u,i_{u,v}}U_s(u)\ket{\psi_s} + \ket{v,i_{v,u}}U_{uv}U_s(u)\ket{\psi_s} \in \mathcal{A},
\]
and thus $\ket{\nu_0} \in \mathcal{A}$.

The boundary terms ${\ket{\tils}}\ket{\psi_s}$ and ${\ket{\tilt}}U_s(t)\ket{\psi_s}$ are orthogonal to $\mathcal{A}$. Therefore, the reflection through $\mathcal{A}$ leaves the catalyst $\ket{\nu_0}$ unchanged and puts a minus in front of the boundary terms:
\begin{align*}
    {\ket{\tils}}\ket{\psi_s} + {\ket{\tilt}} U_s(t)\ket{\psi_s} + \ket{\nu_0}
& \overset{2\Pi_{\cal B}-I}{\mapsto} {\ket{\tils}}\ket{\psi_s} + {\ket{\tilt}} U_s(t)\ket{\psi_s} + \ket{\nu_0}\\
&\overset{2\Pi_{\cal A}-I}{\mapsto} -{\ket{\tils}}\ket{\psi_s} - {\ket{\tilt}} U_s(t)\ket{\psi_s} + \ket{\nu_0},
\end{align*}
proving \Cref{eq:transduced_sums1}.

\paragraph{Proof of \Cref{eq:transduced_sums2}.} In order to prove \Cref{eq:transduced_sums2}, we show that
\[ -{\ket{\tils}}\ket{\psi_s} + {\ket{\tilt}}U_s(t)\ket{\psi_s} + \ket{\nu_1} \in \mathcal{A}^{\perp} \cap \mathcal{B}^{\perp}.
\]
This implies that both reflections, through $\mathcal{A}$ and $\mathcal{B}$, put a minus in front of this vector, so $U_{\mathcal{AB}}$ leaves the vector unchanged. First, we check orthogonality to $\mathcal{A}$. For any edge $\{u,v\} \in E$ and any
$z \in [k]$,
\begin{align*}
\left( \ket{u,i_{u,v}}\ket{z} + \ket{v,i_{v,u}}U_{uv}\ket{z} \right)^\dagger \ket{\nu_1} &= \sqrt{w_0}\frac{\theta_{uv}}{\sqrt{w_{uv}}}
\bra{z} U_s(u) \ket{\psi_s} - \sqrt{w_0}\frac{\theta_{uv}}{\sqrt{w_{uv}}} \bra{z}U_{uv}^\dagger U_s(v) \ket{\psi_s}\\
&= \sqrt{w_0}\frac{\theta_{uv}}{\sqrt{w_{uv}}} \bra{z }U_s(u) \ket{\psi_s} - \sqrt{w_0}\frac{\theta_{uv}}{\sqrt{w_{uv}}} \bra{z} U_s(u) \ket{\psi_s}
= 0,
\end{align*}
where we again used $U_s(v)=U_{uv}U_s(u)$. Thus $\ket{\nu_1}$ is orthogonal to ${\cal A}$.
The boundary terms are also orthogonal to $\mathcal{A}$, so the whole vector is orthogonal to $\mathcal{A}$.

Next, we check orthogonality to $\mathcal{B}$. For any vertex $u\in V$, and any $z\in [k]$, we have:
\begin{align*}
   \left( \sum_{i=0}^{d_u - 1} \sqrt{w_{u,f_u(i)}} \ket{u,i}\ket{z} \right)^\dagger
\left( - {\ket{\tils}} \ket{\psi_s} + {\ket{\tilt}} U_s(t) \ket{\psi_s} + \ket{\nu_1} \right)
={}&     \left( \sum_{i=0}^{d_u - 1} \sqrt{w_{u,f_u(i)}} \ket{u,i}\ket{z} \right)^\dagger
\ket{\nu_1}\\
={}& \sqrt{w_0}\sum_{i=0}^{d_u - 1}\theta_{u,f_u(i)}
\bra{z} U_s(u) \ket{\psi_s}\\
={}& \sqrt{w_0}\theta(u)\bra{z}U_s(u)\ket{\psi_s}.
\end{align*}
Thus, for all internal vertices $u\in V\setminus\{s,t\}$, since $\theta$ is an $st$-flow, $\theta(u)=0$, and so
\begin{align*}
   \left( \sum_{i=0}^{d_u - 1} \sqrt{w_{u,f_u(i)}} \ket{u,i}\ket{z} \right)^\dagger
\left( - {\ket{\tils}} \ket{\psi_s} + {\ket{\tilt}} U_s(t) \ket{\psi_s} + \ket{\nu_1} \right)
={}&0.
\end{align*}
For the boundary vertex $s$, since $\theta(s)=1$ (see \Cref{def:flow}) we have
\begin{align*}
&\left( \sqrt{w_0}{\ket{\tils}}\ket{z} + \sum_{i=0}^{d_s - 1} \sqrt{w_{s,f_s(i)}}\ket{s,i}\ket{z} \right)^\dagger \left( - {\ket{\tils}} \ket{\psi_s} + {\ket{\tilt}}U_s(t)\ket{\psi_s} + \ket{\nu_1} \right)\\
={}& -\sqrt{w_0}\braket{z}{\psi_s}+\sqrt{w_0}\bra{z}U_s(s)\ket{\psi_s}=0,
\end{align*}
since $U_s(s)=I$.
For the boundary vertex $t$, since $\theta(t)=-1$, we similarly get
\begin{align*}
&\left( \sqrt{w_0}{\ket{\tilt}}\ket{z} + \sum_{i=0}^{d_t - 1} \sqrt{w_{t,f_t(i)}}\ket{t,i}\ket{z} \right)^\dagger \left(
-{\ket{\tils}}\ket{\psi_s} +
{\ket{\tilt}}U_s(t)\ket{\psi_s} + \ket{\nu_1} \right)\\
={}& \sqrt{w_0}\bra{z} U_s(t)\ket{\psi_s} + \sqrt{w_0}(-1)\bra{z}U_s(t)\ket{\psi_s}
=0.
\end{align*}
Thus, the vector is in $\mathcal{A}^{\perp}\cap\mathcal{B}^{\perp}$, and hence it is fixed by $U_{\mathcal{AB}}$.

\paragraph{Catalyst size.} It remains to bound the size of the catalysts $\ket{\nu_\pm}=\frac{1}{2} \left( \ket{\nu_0} \pm \ket{\nu_1} \right)$. Since all edge basis states are orthogonal,
\[
\norm{\ket{\nu_0}}^2 = \frac{2}{w_0} \sum_{\{u,v\}\in E} w_{uv} = \frac{W(G)}{w_0}.
\]
Similarly, since $\theta$ is an optimal unit $st$-flow,
\[
\norm{\ket{\nu_1}}^2 = 2w_0 \sum_{\{u,v\}\in E} \frac{\theta_{uv}^2}{w_{uv}} = 2w_0 {\cal R}_{st}(G).
\]
Moreover, $\ket{\nu_0}$ and $\ket{\nu_1}$ are orthogonal, since
\[
\braket{\nu_0}{\nu_1} = \sum_{u\in V}\sum_{i=0}^{d_u-1}\theta_{u,f_u(i)}\norm{U_s(u)\ket{\psi_s}}^2 = \sum_{u\in V}\theta(u) = \theta(s)+\theta(t) = 0.
\]
Therefore,
\begin{equation*}
\norm{\frac{1}{2} \left( \ket{\nu_0} \pm \ket{\nu_1} \right)}^2 = \frac{1}{4} \left( \norm{\ket{\nu_0}}^2 + \norm{\ket{\nu_1}}^2 \right) = \frac{1}{4}\left(\frac{W(G)}{w_0} + 2 w_0{\cal R}_{st}(G)\right).
\end{equation*}
\end{proof}
\begin{lemma}\label{lem:gauge-state-preparation-transducer-no-path}
If $s$ and $t$ are not connected, then for each $b\in\{s,t\}$ and any unit vector $\ket{\psi}\in \mathbb{C}^k$, $U_{\cal AB}$ transduces $\ket{\hat b}\ket{\psi}\rightsquigarrow-\ket{\hat b}\ket{\psi}$, with a catalyst $\ket{\nu}\in{\cal L}$ satisfying $\norm{\ket{\nu}}^2\leq\frac{W(G)}{w_0}$, where $\ket{\hat b}:=\ket{b}\ket{\bot}$.
\end{lemma}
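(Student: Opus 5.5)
We mimic the proof of \Cref{eq:transduced_sums1} from \Cref{lem:gauge-state-preparation-transducer}, but restrict the catalyst to the connected component of $b$, and note that this component does not contain the other boundary vertex. Fix $b\in\{s,t\}$, and let $V_b\subseteq V$ be the vertex set of the connected component of $G$ containing $b$, with edge set $E_b$. By assumption, the other terminal $b'\in\{s,t\}\setminus\{b\}$ lies outside $V_b$. Flatness holds on this component, so $U_b(u)$ is well defined for every $u\in V_b$, with $U_b(b)=I$ and $U_b(v)=U_{uv}U_b(u)$ for every edge $\{u,v\}\in E_b$. We define the candidate catalyst
\[
\ket{\nu}:=\sum_{u\in V_b}\sum_{i=0}^{d_u-1}\sqrt{\frac{w_{u,f_u(i)}}{w_0}}\ket{u,i}U_b(u)\ket{\psi}
=\frac{1}{\sqrt{w_0}}\sum_{\{u,v\}\in E_b}\sqrt{w_{uv}}\left(\ket{u,i_{u,v}}U_b(u)\ket{\psi}+\ket{v,i_{v,u}}U_b(v)\ket{\psi}\right),
\]
which lies in ${\cal L}$.

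We then check that $\ket{\hat b}\ket{\psi}+\ket{\nu}$ lies in ${\cal B}$ by grouping the terms by vertex. The terms at $b$ combine with $\ket{\hat b}\ket{\psi}$ to give $w_0^{-1/2}$ times the boundary star state of $b$, tensored with $\ket{\psi}$. For each other vertex $u\in V_b\setminus\{b\}$, the terms give $w_0^{-1/2}$ times the ordinary star state of $u$, tensored with $U_b(u)\ket{\psi}$. This uses the fact that $b'\notin V_b$: every $u\neq b$ in $V_b$ is a non-boundary vertex, so its star state in ${\cal B}$ has no $\bot$ term. Hence $2\Pi_{\cal B}-I$ fixes the vector. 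Next, the edge form of $\ket{\nu}$ together with $U_b(v)=U_{uv}U_b(u)$ shows that $\ket{\nu}\in{\cal A}$. Since $\ket{\hat b}\ket{\psi}$ is orthogonal to ${\cal A}$ (it is supported on a boundary edge), $2\Pi_{\cal A}-I$ maps the vector to $-\ket{\hat b}\ket{\psi}+\ket{\nu}$. Therefore
\[
U_{\cal AB}\bigl(\ket{\hat b}\ket{\psi}+\ket{\nu}\bigr)=-\ket{\hat b}\ket{\psi}+\ket{\nu},
\]
which is the claimed transduction, by \Cref{thm:transduction-action}.

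Finally, the edge basis states are orthonormal and $\norm{U_b(u)\ket{\psi}}=1$, so
\[
\norm{\ket{\nu}}^2=\frac{1}{w_0}\sum_{u\in V_b}w_u=\frac{W(G_b)}{w_0}\leq\frac{W(G)}{w_0},
\]
where $G_b$ denotes the component of $b$. There is no real obstacle in this argument. The only points needing care are that we use flatness only within $V_b$ (so $U_b(u)$ is well defined even though there is no path to $b'$), and that the absence of $b'$ from $V_b$ is exactly what makes the grouping in ${\cal B}$ work without the flow-based catalyst $\ket{\nu_1}$.
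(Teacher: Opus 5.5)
Your proof is correct and follows essentially the same route as the paper. It uses the same catalyst supported on the connected component of $b$, shows that $\ket{\hat b}\ket{\psi}+\ket{\nu}\in\mathcal{B}$ by grouping terms by vertex (using that the other terminal lies outside that component), shows $\ket{\nu}\in\mathcal{A}$ by flatness, and obtains the same bound $\norm{\ket{\nu}}^2 = W(G_b)/w_0 \leq W(G)/w_0$.
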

\begin{proof}
We give the proof for $b=s$; the case $b=t$ is identical with the roles of $s$ and $t$ exchanged (note that the definitions of ${\cal A}$ and ${\cal B}$ are symmetric in $s$ and $t$). Write $\ket{\psi_s}:=\ket{\psi}$.
Let $V_s$ be the set of vertices connected to $s$, and let $E_s$ be the set of edges with both endpoints in $V_s$. Define
\begin{align*}
    \ket{\nu} &:= \frac{1}{\sqrt{w_0}}\sum_{\{u,v\}\in E_s}\sqrt{w_{uv}}\left( \ket{u,i_{u,v}}U_s(u)\ket{\psi_s} + \ket{v,i_{v,u}}U_s(v)\ket{\psi_s} \right)\\
&= \sum_{\substack{u\in V_s,\\ i\in\{0,\dots,d_u-1\}}}\sqrt{\frac{w_{u,f_u(i)}}{w_0}}\ket{u,i}U_s(u)\ket{\psi_s},
\end{align*}
which is similar to $\ket{\nu_0}$ defined in the proof of \Cref{lem:gauge-state-preparation-transducer}.
First, we show that ${\ket{\tils}}\ket{\psi_s} + \ket{\nu} \in \mathcal{B}$. Grouping the terms in $\ket{\nu}$ by vertices gives:
\begin{align*}
&\sqrt{w_0}({\ket{\tils}}\ket{\psi_s} + \ket{\nu})\\
={}& \left(\sqrt{w_0} {\ket{\tils}}\ket{\psi_s} + \sum_{i = 0}^{d_s - 1} \sqrt{w_{s,f_s(i)}} \ket{s,i}U_s(s)\ket{\psi_s} \right)
+ \sum_{u\in V_s\setminus\{s\}} \left( \sum_{i = 0}^{d_u - 1} \sqrt{w_{u,f_u(i)}}\ket{u,i}U_s(u)\ket{\psi_s} \right)\\
={}& \left(\sqrt{w_0} {\ket{\tils}} + \sum_{i = 0}^{d_s - 1} \sqrt{w_{s,f_s(i)}} \ket{s,i}\right)\ket{\psi_s}
+ \sum_{u\in V_s\setminus\{s\}} \left( \sum_{i = 0}^{d_u - 1} \sqrt{w_{u,f_u(i)}}\ket{u,i} \right)U_s(u)\ket{\psi_s},
\end{align*}
since $U_s(s)=I$. This vector is in $\mathcal{B}$, since $V_s\setminus\{s\}$ contains neither $s$ nor $t$, so the reflection through $\mathcal{B}$ leaves ${\ket{\tils}}\ket{\psi_s}+\ket{\nu}$ unchanged.

Next, observe that $\ket{\nu}\in\mathcal{A}$. Indeed, since the graph is flat, for every edge $\{u,v\} \in E_s$ we have $U_s(v) = U_{uv} U_s(u)$. For every edge $\{u,v\}\in E_s$,
\[
\ket{u,i_{u,v}}U_s(u)\ket{\psi_s} + \ket{v,i_{v,u}}U_s(v)\ket{\psi_s} = \ket{u,i_{u,v}}U_s(u)\ket{\psi_s} + \ket{v,i_{v,u}}U_{uv}U_s(u)\ket{\psi_s} \in \mathcal{A}.
\]
The boundary state ${\ket{\tils}}\ket{\psi_s}$ is orthogonal to $\mathcal{A}$. Therefore, the reflection through $\mathcal{A}$ leaves $\ket{\nu}$ unchanged and puts a minus sign in front of ${\ket{\tils}}\ket{\psi_s}$. Hence
\[
U_{\mathcal{AB}} \left( {\ket{\tils}}\ket{\psi_s} + \ket{\nu}\right) =  - {\ket{\tils}}\ket{\psi_s}+\ket{\nu}.
\]
It remains to bound the catalyst size. Since all edge basis states appearing in $\ket{\nu}$ are orthogonal,
\begin{equation*}
     \norm{\ket{\nu}}^2 = \frac{2}{w_0} \sum_{\{u,v\} \in E_s} w_{uv} \le \frac{W(G)}{w_0}.
\end{equation*}
\end{proof}

\subsection{Implementation of the reflections}\label{sec:implementation}
In this subsection we show that the transducer $U_{\mathcal{AB}}$ can be implemented with a constant number of oracle calls and a {logarithmic} number of elementary gates.
\begin{lemma}\label{lem:gauge-transducer-implementation}
The unitary $U_{\mathcal{AB}} = ( 2\Pi_{\mathcal{A}} - I ) ( 2\Pi_{\mathcal{B}} - I )$ can be implemented using $O(\log n)$ elementary gates and $O(1)$ oracle calls to $O_G$ and $O_w$.
\end{lemma}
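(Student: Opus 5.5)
We implement each reflection separately, in each case by conjugating a simple, efficiently implementable reflection with an oracle-based unitary that maps the relevant subspace onto a "standard" subspace. In both cases the standard subspace is cut out by a few qubits being in a fixed state. The underlying fact is that if $V$ is a unitary with $V{\cal X}={\cal Y}$, then $2\Pi_{\cal X}-I=V^\dagger(2\Pi_{\cal Y}-I)V$.

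\textbf{Reflection around ${\cal B}$.} The star states for distinct $u$ have disjoint supports, so ${\cal B}$ is an orthogonal direct sum over vertices, tensored with $\mathbb{C}^k$. We handle the boundary vertices by modifying $O_w$ into a unitary $O_w'$. Controlled on $u\in\{s,t\}$, which is detectable with $O(\log n)$ gates, $O_w'$ first prepares on a flag qubit the amplitude $\sqrt{w_0/(w_0+w_u)}$ on the $\bot$ branch. This uses the known values $w_s,w_t$ and $O(1)$ rotations. On the non-$\bot$ branch it then calls $O_w$. For other $u$, $O_w'$ is just $O_w$. Then $O_w'\ket{u}\ket{0}$ is the normalized star state of $u$, and $O_w'$ maps $\mathrm{span}\{\ket{u}\ket{0}:u\in V\}\otimes\mathbb{C}^k$ onto ${\cal B}$. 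Hence
\[
2\Pi_{\cal B}-I=O_w'\,\big(I\otimes(2\ket{0}\bra{0}-I)_{\rm index}\otimes I_{\mathbb{C}^k}\big)\,O_w'^\dagger .
\]
The middle reflection checks that the index register, including the $\bot$ flag, is zero, which costs $O(\log n)$ gates. The reflection acts trivially on $\mathbb{C}^k$, so no $\log k$ cost arises here. Care is needed because $O_w$ is only specified on $\ket{u}\ket{0}$. We therefore use $O_w$ purely as a unitary and rely on the conjugation identity, which requires only the image of the $\ket{0}$-subspace.

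\textbf{Reflection around ${\cal A}$.} Every edge $\{u,v\}$ has two encodings, $\ket{u,i_{u,v}}$ and $\ket{v,i_{v,u}}$, and $O_G$ swaps them while applying $U_{uv}$ (respectively $U_{vu}=U_{uv}^\dagger$) to the last register. Since $O_G^2=I$ on edge states, $O_G$ is a Hermitian unitary on ${\cal L}$. The spanning vectors of ${\cal A}$ satisfy
\[
\ket{u,i_{u,v}}\ket{z}+O_G\ket{u,i_{u,v}}\ket{z},
\]
so ${\cal A}$ is exactly the $+1$ eigenspace of $O_G$ restricted to ${\cal L}$. Consequently $2\Pi_{\cal A}-I=O_G$ on ${\cal L}$. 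On the boundary states $\ket{\tils},\ket{\tilt}$, which are orthogonal to ${\cal A}$, the reflection must instead act as $-I$. We therefore implement
\[
2\Pi_{\cal A}-I=\big(O_G\ \text{controlled on index}\neq\bot\big)\cdot\big(-I\ \text{controlled on index}=\bot\big),
\]
which uses one call to $O_G$ and $O(\log n)$ gates to read the flag. An alternative that avoids relying on $O_G^2=I$ uses an ancilla in $\ket{-}$ and the standard eigenspace-reflection trick. This costs two calls to $O_G$ and is still $O(1)$.

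\textbf{Remaining issues.} The main difficulty lies in the domain: the spaces involve only valid basis states $\ket{u,i}$ with $i<d_u$, whereas the registers contain junk basis states on which $O_G$ acts as the identity. On such states we need the reflections to act consistently, so that the overall map is a unitary whose restriction to ${\cal H}\oplus{\cal L}$ is $U_{\cal AB}$. The invalid states are an invariant complement on which any behavior is allowed, since the transducer only has to act correctly on ${\cal H}\oplus{\cal L}$. I would note this explicitly and verify that ${\cal H}\oplus{\cal L}$ is invariant under both constructed operators. Totaling the costs gives $O(1)$ oracle calls and $O(\log n)$ gates.
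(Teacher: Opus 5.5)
Your proposal is correct and follows essentially the same route as the paper. The paper writes $2\Pi_{\mathcal{B}}-I=P(2\Pi_0-I)P^\dagger$, where $P$ applies a hard-coded boundary rotation onto the $\bot$ flag and then $O_w$ on the non-$\bot$ branch, and writes $2\Pi_{\mathcal{A}}-I=O_G(I-2\Pi_{\mathcal{H}})$, using that $O_G$ is an involution whose $+1$ eigenspace on $\mathcal{L}$ is $\mathcal{A}$. Your aside about an ancilla-based alternative that ``avoids'' $O_G^2=I$ is unnecessary and not quite right, since that trick also needs $\pm1$ eigenvalues, but nothing in your argument depends on it.
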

\begin{proof}
    The statement of the lemma follows from \lem{Ref_A_transducer} and \lem{Ref_B_transducer} below.
\end{proof}
\begin{lemma}\label{lem:Ref_A_transducer}
    The reflection $2\Pi_{\mathcal{A}} - I$ can be implemented using $O(1)$ elementary gates and one oracle call to $O_G$.
\end{lemma}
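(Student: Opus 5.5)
Since $O_G$ maps each edge from one endpoint's encoding to the other's while applying the edge unitary, the plan is to conjugate a simple, oracle-independent reflection by $O_G$ restricted to one "half" of the edge encodings. Concretely, add one qubit $\ket{b}$ that selects which endpoint's encoding we are in. The map $\ket{u,i}\ket{z}\mapsto$ ``canonical form'' should send both $\ket{u,i_{u,v}}\ket{z}$ and $\ket{v,i_{v,u}}U_{uv}\ket{z}$ to the same register contents $\ket{u,i_{u,v}}\ket{z}$ (in canonical orientation), tagged by orthogonal flags. Then the state $\ket{u,i_{u,v}}\ket{z}+\ket{v,i_{v,u}}U_{uv}\ket{z}$ becomes $\ket{u,i_{u,v}}\ket{z}(\ket{0}+\ket{1})$, and $2\Pi_{\cal A}-I$ becomes a reflection about $\ket{+}$ on the flag, i.e.\ a single Hadamard--$Z$--Hadamard, or simply an $X$ gate (since $2\ket{+}\bra{+}-I=X$).

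The key steps are as follows. First, we need a canonical orientation: declare $\ket{u,i}$ ``canonical'' if $u<f_u(i)$. Determining this requires knowing the neighbour, but applying $O_G$ to $\ket{u,i}\ket{\psi}$ gives $\ket{v,j}U_{uv}\ket{\psi}$. Comparing the vertex register before and after would need a copy of $u$, which costs $O(\log n)$ gates, and the lemma allows only $O(1)$ gates. So instead I would avoid ordering altogether and use the involutive structure directly.

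Define $X_G := O_G$ acting on the edge space $\mathcal L$. Because $O_G^2$ maps $\ket{u,i}\ket{\psi}\mapsto\ket{u,i}U_{vu}U_{uv}\ket{\psi}=\ket{u,i}\ket{\psi}$, the operator $O_G$ restricted to $\mathcal L$ is a unitary involution. It therefore equals $2P-I$, where $P$ is the projector onto its $+1$ eigenspace, spanned by $x+O_Gx$. This is exactly ${\cal A}$: $O_G(\ket{u,i_{u,v}}\ket{z})=\ket{v,i_{v,u}}U_{uv}\ket{z}$, so ${\cal A}=\{x+O_Gx\}$. On the public space, $O_G$ acts as the identity (the $\bot$ index), whereas ${\cal A}$ is orthogonal to it, so we need $-I$ there. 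This gives
\[
2\Pi_{\cal A}-I = O_G \text{ on } {\cal L}, \qquad -I \text{ on } {\cal H}.
\]
We implement this by applying $O_G$ followed by a $Z$ gate on the flag qubit marking $\bot$ (which is $1$ exactly on boundary edges). This costs one $O_G$ call and $O(1)$ gates, and any remaining basis states outside ${\cal H}\oplus{\cal L}$ are irrelevant.

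The main obstacle is the check that $O_G$ is a genuine involution on $\mathcal L$ with $+1$ eigenspace exactly ${\cal A}$. This relies on $U_{vu}=U_{uv}^\dagger$ and on $f_v(j)=u$ giving back index $i$. I would verify both facts edge by edge on the two-dimensional (times $k$) block $\mathrm{span}\{\ket{u,i_{u,v}},\ket{v,i_{v,u}}\}\otimes\mathbb C^k$, using $U_{vu}=U_{uv}^\dagger$ to confirm that $\ket{u,i_{u,v}}\ket{z}-\ket{v,i_{v,u}}U_{uv}\ket{z}$ lies in the $-1$ eigenspace.
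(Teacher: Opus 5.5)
Your final argument is correct and is essentially the paper's proof: $O_G$ is a unitary involution on ${\cal L}$ whose $+1$ eigenspace is ${\cal A}$ and which acts trivially on ${\cal H}$, so composing it with a phase flip on the $\bot$-flag (the paper writes this as $O_G(I-2\Pi_{\cal H})$) gives $2\Pi_{\cal A}-I$ on ${\cal H}\oplus{\cal L}$ with one oracle call. The canonical-orientation idea in your first paragraph is unnecessary and can simply be dropped.
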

\begin{proof}
Recall that
\[
\mathcal{A} = \bigoplus_{\{u,v\} \in E} \mathrm{span} \left\{ \ket{u,i_{u,v}}\ket{z} + \ket{v,i_{v,u}} U_{uv}\ket{z} : z \in [k] \right\}.
\]
On the private space $\mathcal{L}$, the oracle $O_G$ itself is the reflection through $\mathcal{A}$.
Indeed, for any $\{u,v\} \in E$ and $z \in [k]$, define
    \[
        \ket{\phi_{uv}^+(z)}
        = \ket{u,i_{u,v}}\ket{z} + \ket{v,i_{v,u}}U_{uv}\ket{z},
        \qquad
        \ket{\phi_{uv}^-(z)}
        = \ket{u,i_{u,v}}\ket{z} - \ket{v,i_{v,u}}U_{uv}\ket{z}.
    \]
A direct calculation shows
    \begin{samepage}
    \begin{align*}
        O_G \ket{\phi_{uv}^+(z)}
        &= \ket{v,i_{v,u}}U_{uv}\ket{z} + \ket{u,i_{u,v}}U_{vu}U_{uv}\ket{z}
         = \ket{u,i_{u,v}}\ket{z} + \ket{v,i_{v,u}}U_{uv}\ket{z}\\
         &= \ket{\phi_{uv}^+(z)},\\
        O_G \ket{\phi_{uv}^-(z)}
        &= \ket{v,i_{v,u}}U_{uv}\ket{z} - \ket{u,i_{u,v}}U_{vu}U_{uv}\ket{z}
         = -\bigl(\ket{u,i_{u,v}}\ket{z} - \ket{v,i_{v,u}}U_{uv}\ket{z}\bigr)\\
         &= -\ket{\phi_{uv}^-(z)}.
    \end{align*}
    \end{samepage}
    \noindent Thus $$\mathrm{span} \left\{ \ket{u,i_{u,v}}\ket{z} + \ket{v,i_{v,u}} U_{uv}\ket{z} : z \in [k] \right\} = \mathrm{span} \{\ket{\phi_{uv}^+ (z)} : z \in [k]\}$$ is the $(+1)$-eigenspace of $O_G$ on $$\mathrm{span} \left\{ \ket{u,i_{u,v}}\ket{z},\ket{v,i_{v,u}}\ket{z} : z \in [k] \right\},$$ and its orthogonal complement is the $(-1)$-eigenspace. Indeed,
    \begin{align*}
        &\mathrm{span} \{\ket{u,i_{u,v}}\ket{z}, \ket{v,i_{v,u}}\ket{z} : z\in[k]\}\\
        =& \mathrm{span} \{\ket{u,i_{u,v}}\ket{z} : z\in[k] \} \oplus \mathrm{span} \{ \ket{v,i_{v,u}}\ket{z} : z\in[k]\}\\
        =& \mathrm{span} \{\ket{u,i_{u,v}}\ket{z} : z\in[k] \} \oplus \mathrm{span} \{ \ket{v,i_{v,u}} U_{uv} \ket{z} : z\in[k]\}\\
        =& \mathrm{span} \{\ket{u,i_{u,v}}\ket{z}, \ket{v,i_{v,u}} U_{uv} \ket{z} : z\in[k]\}\\
        =& \mathrm{span} \{\ket{u,i_{u,v}}\ket{z} + \ket{v,i_{v,u}} U_{uv} \ket{z}, \ket{u,i_{u,v}}\ket{z} - \ket{v,i_{v,u}} U_{uv} \ket{z} : z\in[k]\}\\
        =& \mathrm{span} \{\ket{\phi_{uv}^+ (z)} : z \in [k]\} \oplus \mathrm{span} \{\ket{\phi_{uv}^- (z)} : z \in [k]\}.
    \end{align*}
    Since the subspaces $\mathrm{span} \{\ket{u,i_{u,v}}\ket{z}, \ket{v,i_{v,u}}\ket{z} : z\in[k]\}$ are pairwise orthogonal, the global $(+1)$-eigenspace of $O_G$ in $\bigoplus_{\{u,v\} \in E} \mathrm{span} \{\ket{u,i_{u,v}}\ket{z}, \ket{v,i_{v,u}}\ket{z} : z\in[k]\}$ is precisely
    \[
        \bigoplus_{\{u,v\} \in E} \mathrm{span} \left\{ \ket{u,i_{u,v}}\ket{z} + \ket{v,i_{v,u}} U_{uv}\ket{z} : z \in [k] \right\} = \mathcal{A},
    \]
    and the $(-1)$-eigenspace is $\mathcal{A}^\perp$ within $\mathcal{L}$.
The only remaining issue is that the public space
\[
\mathcal{H} = \mathrm{span}\{ \ket{\tils}\ket{z}, \ket{\tilt}\ket{z} : z \in [k] \}
\]
is orthogonal to $\mathcal{A}$, and hence must receive a minus sign under $2\Pi_{\mathcal{A}} - I$. Since $\bot$ is not a valid neighbour index, $O_G$ acts as the identity on $\mathcal{H}$, so
\[
2\Pi_{\mathcal{A}}-I = O_G\,(I-2\Pi_{\mathcal{H}}),\qquad \Pi_{\mathcal{H}}:=\big(\ket{\tils}\bra{\tils}+\ket{\tilt}\bra{\tilt}\big)\otimes I_{\mathbb{C}^k},
\]
on $\mathcal{H}\oplus\mathcal{L}$. The reflection $I-2\Pi_{\mathcal{H}}$ is a phase flip on the states whose first two registers are $\ket{\tils}$ or $\ket{\tilt}$, regardless of the last register, which by our assumption on $\ket{\tils},\ket{\tilt}$ takes $O(1)$ elementary gates. Hence $2\Pi_{\mathcal{A}}-I$ can be implemented in $O(1)$ elementary gates and one call to $O_G$.
\end{proof}
\begin{lemma}\label{lem:Ref_B_transducer}
    The reflection $2\Pi_{\mathcal{B}} - I$ can be implemented using $O(\log n)$ elementary gates and two oracle calls to $O_w$.
\end{lemma}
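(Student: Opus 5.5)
The plan is to conjugate a simple reflection by the oracle $O_w$, treating ordinary vertices and the boundary vertices $s,t$ separately. On the star space of an ordinary vertex $u\in V\setminus\{s,t\}$, the unit star state is exactly $O_w\ket{u}\ket{0}$. So the span of all such states (tensored with $\mathbb{C}^k$, on which we act trivially) equals $O_w$ applied to $\mathrm{span}\{\ket{u}\ket{0}:u\}\otimes\mathbb{C}^k$. Hence on that part
$$2\Pi_{\cal B}-I = O_w\,(2\Pi_0-I)\,O_w^\dagger,$$
where $\Pi_0$ projects onto index register $\ket{0}$. For the star states to be pairwise orthogonal and the conjugation to be well defined on the whole space ${\cal H}\oplus{\cal L}$, I would check that $O_w$ maps $\mathrm{span}\{\ket{u}\ket{0}\}$ into $\mathrm{span}\{\ket{u,i}:i<d_u\}$. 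I would also check that $O_w$ can be taken to act as the identity on the flag $\bot$; this should follow from $\bot$ being an extra flag qubit, so we control $O_w$ on that flag being $0$. The register $\mathbb{C}^k$ is untouched throughout.

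For $b\in\{s,t\}$ the star state is $\sqrt{w_0}\ket{b}\ket{\bot}+\sqrt{w_b}\,O_w\ket{b}\ket{0}$, up to normalization by $\sqrt{w_0+w_b}$. Let $R_b$ be a rotation on the two-dimensional space $\mathrm{span}\{\ket{b}\ket{\bot},\ket{b}\ket{0}\}$ mapping $\ket{b}\ket{0}$ to
$$\frac{\sqrt{w_0}\ket{b}\ket{\bot}+\sqrt{w_b}\ket{b}\ket{0}}{\sqrt{w_0+w_b}}.$$
This rotation uses the known weights $w_s,w_t$ and $w_0$, and is applied controlled on the vertex register being $s$ or $t$. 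Comparing $s$ and $t$ costs $O(\log n)$ gates, and the angle is classically precomputed; I would ignore precision issues, or absorb them into $\tilde O$.

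The full circuit is then $O_w R\,(2\Pi_0-I)\,R^\dagger O_w^\dagger$, where $R=R_s\oplus R_t\oplus I$. This maps $\ket{u}\ket{0}\ket{z}$ to the normalized star state for every $u$, including boundary vertices, and it preserves orthogonality. Here $O_w^\dagger$ is one call, counted as a call to $O_w$, and the reflection $2\Pi_0-I$ costs $O(\log n)$ gates to check that the index register is zero. This gives two calls and $O(\log n)$ gates.

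The main obstacle is bookkeeping about the encoding. The index register must hold values up to $\max d_u$ plus the $\bot$ flag. I also need the image of the conjugated reflection's $+1$ space to be exactly ${\cal B}$: vectors outside $O_wR\,\mathrm{span}\{\ket{u}\ket{0}\}$ must be sent to $-1$, which holds because the map is a unitary conjugation. Finally, basis states not of the form $\ket{u,i}$ with valid $i$ (unused index values) lie outside ${\cal H}\oplus{\cal L}$, so their behavior is irrelevant.
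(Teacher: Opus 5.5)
Your proposal is correct and is essentially the paper's proof: the paper builds the same state-preparation circuit $P$ (a precomputed rotation on $\mathrm{span}\{\ket{b}\ket{\bot},\ket{b}\ket{0}\}$ for $b\in\{s,t\}$, followed by $O_w$ controlled on the index not being $\bot$). It then implements $2\Pi_{\cal B}-I=P(2\Pi_0-I)P^\dagger$ on ${\cal H}\oplus{\cal L}$ with two calls to $O_w$ and an $O(\log n)$-gate phase flip. The one detail the paper makes explicit that you omit is isolated vertices $u\notin\{s,t\}$, where $O_w\ket{u}\ket{0}$ is undefined; taking $O_w$ to act as the identity there is harmless because $\ket{u}\ket{0}\ket{z}$ is orthogonal to ${\cal H}\oplus{\cal L}$ (and your rotation already handles $w_b=0$ automatically).
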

\begin{proof}
We show how to prepare a working basis for $\mathcal{B}$. For $u \in V \setminus \{s,t\}$ and $z \in [k]$, the corresponding normalized basis vector is
\[
\frac{1}{\sqrt{w_u}} \sum_{i=0}^{d_u - 1} \sqrt{w_{u,f_u(i)}}\ket{u,i}\ket{z}.
\]
This can be prepared from $\ket{u}\ket{0}\ket{z}$ using one call to $O_w$. If $u \notin \{s,t\}$ is isolated, we take $O_w$ to act as the identity on $\ket{u}\ket{0}$. The additional vector $\ket{u}\ket{0}\ket{z}$ is orthogonal to ${\cal H}\oplus{\cal L}$, since $u\notin\{s,t\}$ and $u$ has no incident edges, whereas ${\cal H}$ is supported on boundary edges and ${\cal L}$ on ordinary graph edges, so it does not affect the implemented reflection on that space.

For $b \in \{s,t\}$, if $w_b = 0$, the rotation prepares $\ket{b}\ket{\bot}\ket{z}$ directly. For $b \in \{s,t\}$, such that $w_b \neq 0$, and $z \in [k]$, the corresponding normalized basis vector is
\[
\frac{1}{\sqrt{w_0 + w_b}} \left( \sqrt{w_0} \ket{b}\ket{\bot}\ket{z} +  \sum_{i=0}^{d_b - 1} \sqrt{w_{b,f_b(i)}}\ket{b,i}\ket{z} \right).
\]
This can be prepared from $\ket{b}\ket{0}\ket{z}$ by first using a rotation, controlled on the first register being $s$ or $t$, to create the superposition $\frac{\sqrt{w_0}}{\sqrt{w_0 + w_b}}\ket{b}\ket{\bot}\ket{z}+\frac{\sqrt{w_b}}{\sqrt{w_0 + w_b}}\ket{b}\ket{0}\ket{z}$ (for instance, if $\bot$ is encoded by a flag qubit, this is a single-qubit rotation on the flag), and then applying $O_w$, controlled on the second register not being $\bot$, to prepare
\[
\frac{1}{\sqrt{w_b}} \sum_{i=0}^{d_b - 1} \sqrt{w_{b,f_b(i)}}\ket{b,i}
\]
on the second branch.
Since $w_s$, $w_t$, and $w_0=\sqrt{\mathbf W/\mathbf R}$ are known classical quantities, the two required rotation angles, corresponding to $b=s$ and $b=t$, can be computed in advance and hard-coded into the circuit. Thus, under our assumption that $s$ and $t$ can be recognized using $O(1)$ elementary gates, the preparation uses two controlled single-qubit rotations, $O(1)$ additional elementary gates, and one call to $O_w$. Hence a single circuit $P$, using one call to $O_w$ and $O(1)$ other gates, maps $\ket{u}\ket{0}\ket{z}$ to the corresponding basis vector of $\mathcal{B}$ for every non-isolated vertex $u$ and each boundary vertex $u\in\{s,t\}$, and every $z\in[k]$. On $\mathcal{H}\oplus\mathcal{L}$, we therefore have
\[
2 \Pi_{\mathcal{B}} - I = P (2 \Pi_0 - I) P^\dagger,
\]
where $\Pi_0$ is the projector onto $\mathrm{span} \{\ket{u}\ket{0}\ket{z} : u \in V,z \in [k]\}$.
The reflection $2\Pi_0-I$ is the operation that applies phase $+1$ when the second register is $\ket{0}$ and phase $-1$ otherwise, and thus it can be implemented by a phase flip controlled on the $O(\log n)$ qubits of the second register, using $O(\log n)$ elementary gates. Therefore, $2\Pi_{\mathcal{B}} - I$ can be implemented using $O(\log n)$ elementary gates and two calls to $O_w$ (one each for $P$ and $P^\dagger$).
\end{proof}
We now combine the results established throughout this section to complete the proof of \thm{gauge-state-preparation-transducer}.
\begin{proof}[Proof of \thm{gauge-state-preparation-transducer}]
Item~1 is by definition of ${\cal H}$ and ${\cal L}$ in \Cref{sec:transducer}, and item~3 is \Cref{lem:gauge-transducer-implementation}. For item~2, fix $w_0=\sqrt{\mathbf{W}/\mathbf{R}}$, so that, by the remark after \Cref{lem:gauge-state-preparation-transducer}, the catalysts in \Cref{lem:gauge-state-preparation-transducer,lem:gauge-state-preparation-transducer-no-path} have squared norm at most $\frac34\sqrt{\mathbf{WR}}$ and $\sqrt{\mathbf{WR}}$, respectively.

Suppose $s$ and $t$ are connected. As $\ket{\psi_s}$ ranges over unit vectors in $\mathbb{C}^k$, so does $U_s(t)\ket{\psi_s}$, so \Cref{lem:gauge-state-preparation-transducer} says that $U_{\cal AB}$ transduces $\ket{\tils}\ket{\psi}\rightsquigarrow-\ket{\tilt}U_s(t)\ket{\psi}$ and $\ket{\tilt}\ket{\phi}\rightsquigarrow-\ket{\tils}U_s(t)^\dagger\ket{\phi}$ for all unit $\ket{\psi},\ket{\phi}\in\mathbb{C}^k$. By \Cref{thm:transduction-action}, the transduction action $\Sigma$ is linear, so this determines $\Sigma$ on a spanning set of ${\cal H}$, and it is the stated operator. If $s$ and $t$ are not connected, \Cref{lem:gauge-state-preparation-transducer-no-path} gives $\Sigma\ket{\hat b}\ket{\psi}=-\ket{\hat b}\ket{\psi}$ for $b\in\{s,t\}$ and all $\ket{\psi}$, so $\Sigma=-I_{\cal H}$. In both cases $\Sigma$ is a Hermitian involution, since $\braket{\tils}{\tilt}=0$.

Finally, we bound $W(U_{\cal AB})$. Any unit vector in ${\cal H}$ can be written as $\ket{\xi}=a\ket{\tils}\ket{\psi}+b\ket{\tilt}\ket{\phi}$ with $\ket{\psi},\ket{\phi}$ unit vectors and $|a|^2+|b|^2=1$. Since the smallest catalyst depends linearly on the state (\Cref{sec:transducers}), $\ket{\xi}$ has a catalyst $a\ket{\nu(\tils,\psi)}+b\ket{\nu(\tilt,\phi)}$, whose squared norm is at most $(|a|+|b|)^2\sqrt{\mathbf{WR}}\leq 2\sqrt{\mathbf{WR}}$ by the triangle inequality. Hence $W(U_{\cal AB})\leq 2\sqrt{\mathbf{WR}}$.

\end{proof}

\subsection{Composition with Amplitude Estimation}
\label{sec:proof-main-theorem}

In this section, we prove \Cref{cor:gauge-state-overlap-algorithm}, which we restate below for convenience.

\bigskip

\noindent\textbf{\Cref*{cor:gauge-state-overlap-algorithm} (restated).}
{\it There is a quantum algorithm that solves \Cref{prob:gauge_problem_estimate}
with bounded error, meaning that, with probability at least $2/3$, it correctly
reports that $s$ and $t$ are disconnected, or outputs an estimate $\tilde p$ such that
$\abs{\,\abs{\bra{\psi_t}U_s(t)\ket{\psi_s}}^2-\tilde p\,}\leq \eps$, using
$\widetilde{O}\left(\sqrt{\mathbf{W}\mathbf{R}}/\eps\right)$ calls to $O_G$,
$O_w$, $O_s$ and $O_t$, and
$\widetilde{O}\left(\sqrt{\mathbf{W}\mathbf{R}}/\eps\right)$ additional
elementary operations.}

\bigskip

\noindent Throughout, let $\Sigma$ be the transduction action of $U_{\cal AB}$ on ${\cal H}$, as in
\Cref{thm:gauge-state-preparation-transducer}. The algorithm will consist of two stages. First, use  the transducer from \Cref{thm:gauge-state-preparation-transducer} to get a subroutine to check if $s$ and $t$ are connected. If they are, compose the transducer $U_{\cal AB}$, which we then know generates $U_s(t)\ket{\psi_s}$, with amplitude estimation, via \Cref{lem:amp-amp-comp}, to get a transducer -- and from it, an algorithm -- for estimating $|\bra{\psi_t}U_s(t)\ket{\psi_s}|^2$.

\paragraph{Detecting disconnectedness.} Apply
\Cref{thm:transducer-implementation} to the transducer $S=U_{\cal AB}$ from \Cref{thm:gauge-state-preparation-transducer} with error parameter
$\eps_1=10^{-4}$ and $K_1=O\big(\sqrt{\mathbf{W}\mathbf{R}}\big)$ sufficiently large, on the
initial state ${\ket{\tils}}\,O_s\ket{0}={\ket{\tils}}\ket{\psi_s}$, and measure
the first register. We will require $K_1\geq \frac{4}{\eps_1} W(S,\ket{\hat{s}}O_s\ket{0})$ for the correctness guarantee of \Cref{thm:transducer-implementation} to hold, and since $W(S)\leq 2\sqrt{\mathbf{WR}}$, this can be satisfied by some $K_1=O(\sqrt{\mathbf{WR}})$. Then
by \Cref{thm:gauge-state-preparation-transducer}, the resulting state is
within distance ${\sqrt{\eps_1}}=10^{-2}$ of $-{\ket{\tilt}}\,U_s(t)\ket{\psi_s}$ if $s$ and $t$
are connected, and of $-{\ket{\tils}}\ket{\psi_s}$ otherwise, so the measurement
returns $t$ in the former case, and $s$ in the latter, with probability at
least $(1-\sqrt{\eps_1})^2\geq 0.98$. If the outcome is $s$, output ``disconnected'' and stop;
otherwise, continue. By \Cref{lem:gauge-transducer-implementation}, this step
costs $\widetilde{O}\big(\sqrt{\mathbf{W}\mathbf{R}}\big)$ oracle calls and other gates.

\paragraph{Amplitude estimation setup.} Now the idea will be to apply amplitude estimation, via the transducer in \Cref{lem:amp-amp-comp}. Let
\begin{equation}\label{eq:pi-Pi-R}
\ket{\pi}:=(I\otimes O_t^\dagger)\Sigma (I\otimes O_s)\ket{\tils}\ket{0}=-\ket{\tilt}O_t^\dagger U_s(t)\ket{\psi_s}
\quad \mbox{and}\quad
\Pi := \ket{\tilt}\bra{\tilt}\otimes \ket{0}\bra{0}.
\end{equation}
To apply \Cref{lem:amp-amp-comp}, we need a transducer $S_O$ with transduction action
\begin{equation}\label{eq:ref-pi}
    2\ket{\pi}\bra{\pi}-I = (I\otimes O_t^\dagger)\Sigma \underbrace{ (I\otimes O_s)(2\ket{\tils}\bra{\tils}\otimes \ket{0}\bra{0}-I)(I\otimes O_s^\dagger)}_{=:R}\Sigma^\dagger (I\otimes O_t).
\end{equation}
Then amplitude estimation with this setup will estimate:
\[
p:=\norm{\Pi \ket{\pi}}^2
=\abs{\bra{0}O_t^\dagger U_s(t)\ket{\psi_s}}^2
=\abs{\bra{\psi_t}U_s(t)\ket{\psi_s}}^2.
\]

\begin{claim}\label{claim:above}
    Let $S_O$ be a transducer on ${\cal H}\oplus {\cal L}_0\oplus {\cal L}_1$, where ${\cal L}_b$ is isomorphic to ${\cal L}$, with public space ${\cal H}$, defined
    \begin{align*}
        S_O&=c_{\cal H}(I\otimes O_t^\dagger)\cdot c_{{\cal H}\oplus {\cal L}_1}U_{\cal AB} \cdot c_{\cal H} R
        \cdot c_{{\cal H}\oplus {\cal L}_0}U_{\cal AB}^\dagger
        \cdot c_{\cal H}(I\otimes O_t)
    \end{align*}
    where $c_{\cal S}$ indicates a control on being in ${\cal S}$, so $c_{\cal S}U=U_{\cal S}\oplus I_{{\cal S}^\bot}$, and $R$ is as defined in \Cref{eq:ref-pi}.
    Then $S_O$ has transduction action $2\ket{\pi}\bra{\pi}-I$, and $W(S_O)\leq 2W(U_{\cal AB})$.
    Moreover, $S_O$ can be implemented using one controlled call to each of $U_{\cal AB}$, $U_{\cal AB}^\dagger$, $O_s$, $O_s^\dagger$, $O_t$ and $O_t^\dagger$, and $O(\log k)$ additional elementary gates.
\end{claim}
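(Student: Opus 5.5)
The plan is to prove the claim by an explicit catalyst construction for sequential composition. We then invoke the ``moreover'' part of \Cref{thm:transduction-action} to identify the transduction action. The key observation is that $S_O$ is a product of five unitaries, and only two of them act nontrivially on the private spaces: $c_{{\cal H}\oplus{\cal L}_0}U_{\cal AB}^\dagger$ touches only ${\cal L}_0$, and $c_{{\cal H}\oplus{\cal L}_1}U_{\cal AB}$ touches only ${\cal L}_1$. The remaining three factors are controlled on ${\cal H}$, so they fix ${\cal L}_0\oplus{\cal L}_1$ pointwise. Using two disjoint copies of ${\cal L}$ means the two catalysts cannot interfere.

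First I will record a small fact about inverses. Suppose $U_{\cal AB}(\ket{\zeta}+\ket{\nu})=\Sigma\ket{\zeta}+\ket{\nu}$. Applying $U_{\cal AB}^\dagger$ to both sides gives $U_{\cal AB}^\dagger(\Sigma\ket{\zeta}+\ket{\nu})=\ket{\zeta}+\ket{\nu}$. Hence $U_{\cal AB}^\dagger$ transduces $\ket{\xi}\rightsquigarrow\Sigma^\dagger\ket{\xi}$ with catalyst $\ket{\nu(\Sigma^\dagger\xi)}$, namely the catalyst of $\Sigma^\dagger\ket{\xi}$ under $U_{\cal AB}$. Its squared norm is at most $W(U_{\cal AB})$ whenever $\ket{\xi}$ is a unit vector.

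Next, fix a unit $\ket{\xi}\in{\cal H}$ and set
\[
\ket{\xi_1}=(I\otimes O_t)\ket{\xi},\qquad \ket{\xi_2}=\Sigma^\dagger\ket{\xi_1},\qquad \ket{\xi_3}=R\ket{\xi_2}.
\]
Let $\ket{\nu_0}\in{\cal L}_0$ be the catalyst of $\ket{\xi_2}$ under $U_{\cal AB}$, copied into ${\cal L}_0$. Let $\ket{\nu_1}\in{\cal L}_1$ be the catalyst of $\ket{\xi_3}$ under $U_{\cal AB}$, copied into ${\cal L}_1$. I will push $\ket{\xi}+\ket{\nu_0}+\ket{\nu_1}$ through the factors of $S_O$ from right to left.
\begin{itemize}
\item The factor $c_{\cal H}(I\otimes O_t)$ gives $\ket{\xi_1}+\ket{\nu_0}+\ket{\nu_1}$.
\item The factor $c_{{\cal H}\oplus{\cal L}_0}U_{\cal AB}^\dagger$, by the fact above, gives $\ket{\xi_2}+\ket{\nu_0}+\ket{\nu_1}$, with ${\cal L}_1$ untouched.
\item The factor $c_{\cal H}R$ gives $\ket{\xi_3}+\ket{\nu_0}+\ket{\nu_1}$.
\item The factor $c_{{\cal H}\oplus{\cal L}_1}U_{\cal AB}$ gives $\Sigma\ket{\xi_3}+\ket{\nu_0}+\ket{\nu_1}$.
\item Finally $c_{\cal H}(I\otimes O_t^\dagger)$ gives $(I\otimes O_t^\dagger)\Sigma R\Sigma^\dagger(I\otimes O_t)\ket{\xi}+\ket{\nu_0}+\ket{\nu_1}$.
\end{itemize}
By \Cref{eq:ref-pi}, the public part of the output is $(2\ket{\pi}\bra{\pi}-I)\ket{\xi}$. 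Since $S_O(\ket{\xi}+\ket{\nu})-\ket{\nu}\in{\cal H}$ for $\ket{\nu}=\ket{\nu_0}+\ket{\nu_1}$, the second part of \Cref{thm:transduction-action} identifies the transduction action as $2\ket{\pi}\bra{\pi}-I$. Because the catalyst lies in ${\cal L}_0\oplus{\cal L}_1$, which is orthogonal to ${\cal H}$, it is a legitimate catalyst.

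For the complexity bound, the unitaries $O_t$, $\Sigma^\dagger$ and $R$ are all unitary, so $\ket{\xi_2}$ and $\ket{\xi_3}$ are unit vectors. Therefore $\norm{\ket{\nu_0}}^2,\norm{\ket{\nu_1}}^2\leq W(U_{\cal AB})$. Since ${\cal L}_0\perp{\cal L}_1$, we get $W(S_O,\ket{\xi})\leq\norm{\ket{\nu_0}}^2+\norm{\ket{\nu_1}}^2\leq 2W(U_{\cal AB})$, and taking the supremum over unit $\ket{\xi}$ gives $W(S_O)\leq 2W(U_{\cal AB})$.

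For the implementation, I will use a register flag to select between ${\cal L}_0$ and ${\cal L}_1$. The controls $c_{\cal H}$ and $c_{{\cal H}\oplus{\cal L}_b}$ are then realized with the privacy qubit of $U_{\cal AB}$ together with this flag. The reflection $R$ consists of one call each to $O_s^\dagger$ and $O_s$ around a reflection about $\ket{\tils}\ket{0}$, which costs $O(\log k)$ gates by the standing assumption on $\ket{\tils}$. The step I expect to need the most care is the bookkeeping that the catalyst for $U_{\cal AB}^\dagger$ is indexed by the output state $\Sigma^\dagger\ket{\xi_1}$ rather than the input. I also need to check that each controlled factor genuinely acts as the identity on the other copy of ${\cal L}$, so that the two catalysts return unchanged.
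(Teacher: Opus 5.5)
Your proposal is correct and follows essentially the same argument as the paper: you take catalysts under $U_{\cal AB}$ for $\Sigma^\dagger(I\otimes O_t)\ket{\xi}$ and for $R\Sigma^\dagger(I\otimes O_t)\ket{\xi}$, place them in the separate copies ${\cal L}_0$ and ${\cal L}_1$, push $\ket{\xi}+\ket{\nu_0}+\ket{\nu_1}$ through the five factors, and bound the catalyst by $2W(U_{\cal AB})$ using ${\cal L}_0\perp{\cal L}_1$. Your implementation count also matches the paper's: one controlled call each to $U_{\cal AB}^{\pm1}$ and $O_t^{\pm1}$, $O_s^{\pm1}$ inside $R$, and $O(\log k)$ gates for the reflection about $\ket{\tils}\ket{0}$.
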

Operationally, to implement these controls, we can imagine having a 3-dimensional register that indicates which of ${\cal H}$, ${\cal L}_0$, or ${\cal L}_1$ a state is in. Then if we apply, for example, $U_{\cal AB}$ to a state $\ket{0}\ket{\xi}+\ket{2}\ket{v}\in {\cal H}\oplus {\cal L}_1$, it should act as $U_{\cal AB}(\ket{\xi}+\ket{v})$. This is possible to implement, since ${\cal H}$ and ${\cal L}$ are orthogonal, and it is simple to compute in which one you are, in order to temporarily uncompute the trit: a state is in ${\cal H}$ if and only if its index register holds $\bot$, which can be checked in $O(1)$ gates. In the proof below, we leave this implicit, and use subscripts to indicate to which space a state belongs.
\begin{proof}
    A similar statement follows from \cite[Proposition~9.1]{belovs2023LasVegasTime}, which says that
for a transducer $U_{\cal AB}$ with transduction action $\Sigma$, $U_{\cal AB}^\dagger$ is a transducer with action $\Sigma^\dagger$ and the same transduction complexity; combined with \cite[Proposition~9.9]{belovs2023LasVegasTime}, for sequential composition. Both are stated for canonical transducers, and as the argument is rather simple, we give an explicit proof, which does not need the canonical form.

Fix any $\ket{\xi}\in {\cal H}$. We know there exists a catalyst $\ket{v}\in {\cal L}$ for the state ${\Sigma^\dagger}(I\otimes O_t)\ket{\xi}\in {\cal H}$, such that
\begin{align*}
    U_{\cal AB}(\Sigma^\dagger (I\otimes O_t)\ket{\xi}+\ket{v})&=(I\otimes O_t)\ket{\xi}+\ket{v}\\
\Sigma^\dagger (I\otimes O_t)\ket{\xi}+\ket{v}&=U_{\cal AB}^\dagger ((I\otimes O_t)\ket{\xi}+\ket{v})
\end{align*}
and $\norm{\ket{v}}^2\leq W(U_{\cal AB})$. Similarly, 
there exists a catalyst $\ket{v'}\in {\cal L}$ such that
$$U_{\cal AB}(R\Sigma^\dagger (I\otimes O_t)\ket{\xi}+\ket{v'})=
\Sigma R\Sigma^\dagger (I\otimes O_t)\ket{\xi}+\ket{v'}$$
and $\norm{\ket{v'}}^2\leq W(U_{\cal AB})$.

Thus,  we have:
\begin{align*}
    S_O:\ket{\xi}_{\cal H}+\ket{v}_{{\cal L}_0}+\ket{v'}_{{\cal L}_1} \overset{c_{\cal H}(I\otimes O_t)}{\longmapsto}{}& (I\otimes O_t)\ket{\xi}_{\cal H}+\ket{v}_{{\cal L}_0}+\ket{v'}_{{\cal L}_1}\\
    \overset{c_{{\cal H}\oplus {\cal L}_0}U_{\cal AB}^\dagger}{\longmapsto}{}& U_{\cal AB}^\dagger((I\otimes O_t)\ket{\xi}_{\cal H}+\ket{v}_{{\cal L}_0})+\ket{v'}_{{\cal L}_1}\\
    ={}& \Sigma^\dagger(I\otimes O_t)\ket{\xi}_{\cal H}+\ket{v}_{{\cal L}_0}+\ket{v'}_{{\cal L}_1}\\
    \overset{c_{{\cal H}}R}{\longmapsto}{}& R\Sigma^\dagger(I\otimes O_t)\ket{\xi}_{\cal H}+\ket{v}_{{\cal L}_0}+\ket{v'}_{{\cal L}_1}\\
    \overset{c_{{\cal H}\oplus {\cal L}_1}U_{\cal AB}}{\longmapsto}{}& U_{\cal AB}(R\Sigma^\dagger(I\otimes O_t)\ket{\xi}_{\cal H}+\ket{v'}_{{\cal L}_1})+\ket{v}_{{\cal L}_0}\\
    ={}& \Sigma R\Sigma^\dagger(I\otimes O_t)\ket{\xi}_{\cal H}+\ket{v'}_{{\cal L}_1}+\ket{v}_{{\cal L}_0}\\
    \overset{c_{{\cal H}}(I\otimes O_t^\dagger)}{\longmapsto}{}&(I\otimes O_t^\dagger)\Sigma R\Sigma^\dagger(I\otimes O_t)\ket{\xi}_{\cal H}+\ket{v}_{{\cal L}_0}+\ket{v'}_{{\cal L}_1}\\
    ={}& (2\ket{\pi}\bra{\pi}-I)\ket{\xi}_{\cal H}+\ket{v}_{{\cal L}_0}+\ket{v'}_{{\cal L}_1}.
\end{align*}
Thus, $S_O$ has transduction action $2\ket{\pi}\bra{\pi}-I$, and since $\norm{\ket{v}_{{\cal L}_0}+\ket{v'}_{{\cal L}_1}}^2 = \norm{\ket{v}}^2+\norm{\ket{v'}}^2\leq 2W(U_{\cal AB})$,
$W(S_O)\leq 2W(U_{\cal AB})$, as claimed.

For the implementation cost, $S_O$ is a product of five controlled operations. The controls cost $O(1)$ gates each, as discussed above. The operation $R$ is the reflection around $\ket{\tils}\ket{0}$, conjugated by $I\otimes O_s$, and so costs one call to each of $O_s$ and $O_s^\dagger$ and $O(\log k)$ additional gates. The remaining four operations are one call to each of $U_{\cal AB}$, $U_{\cal AB}^\dagger$, $O_t$ and $O_t^\dagger$.
\end{proof}

\paragraph{Composition with amplitude estimation.} It remains to compose the transducer $S_O$ from the above claim into amplitude estimation, which we do via \Cref{lem:amp-amp-comp}. For $M=\lceil 8\pi/\eps \rceil$, let $S_M$ be the transducer from \Cref{lem:amp-amp-comp}, which has public space ${\cal H}_M:=\mathrm{span}\{\ket{0}_{\cal S}\}\otimes {\cal T}\otimes {\cal H}$ and private space ${\cal L}_M:=\big(\mathrm{span}\{\ket{s}_{\cal S}:s\geq1\}\otimes{\cal T}\otimes{\cal H}\big)\oplus\big({\cal S}\otimes {\cal T}\otimes {\cal L}_O\big)$, where ${\cal L}_O:={\cal L}_0\oplus {\cal L}_1$ is the private space of $S_O$. Its transduction action maps $\ket{0}\ket{0}\ket{\pi}$ to $\ket{0}\ket{\eta}$, so we will do one final composition to get a transducer that implements $\ket{0}\ket{0}\ket{\tils,0}\rightsquigarrow \ket{0}\ket{\eta}$.

\begin{table}
\centering
    \begin{tabular}{c|c|c|c}
       Transducer & Definition & Public Space & Private Space\\
       $S=U_{\cal AB}$  & \Cref{thm:gauge-state-preparation-transducer} & ${\cal H}$ & ${\cal L}$\\ 
       $S_O$ & \Cref{claim:above} & ${\cal H}_O={\cal H}$ & ${\cal L}_O={\cal L}\oplus{\cal L}$\\
       $S_M$ & \Cref{lem:amp-amp-comp} & ${\cal H}_M={\cal S}_0\otimes {\cal T}\otimes {\cal H}$ & ${\cal L}_M=\Big({\cal S}_{\geq 1}\otimes {\cal T}\otimes {\cal H}\Big)\oplus\Big({\cal S}\otimes{\cal T}\otimes {\cal L}_O\Big)$\\
       $S_{\rm fin}$ & \Cref{claim:S-fin} & ${\cal H}_{\rm fin}={\cal H}_M$ & ${\cal L}_{\rm fin}=\Big({\cal S}_{\geq 1}\otimes {\cal T}\otimes {\cal H}\Big)\oplus\Big({\cal S}\otimes{\cal T}\otimes {\cal L}_O\oplus {\cal L}\Big)$
    \end{tabular}
    \caption{Summary of public and private spaces for the various transducers in this section, where ${\cal S}_0=\mathrm{span}\{\ket{0}_{\cal S}\}$ and ${\cal S}_{\geq 1}=\mathrm{span}\{\ket{1}_{\cal S},\dots,\ket{M-1}_{\cal S}\}$.}\label{fig:pub-priv-spaces}
\end{table}
\begin{claim}\label{claim:S-fin}
    Let $S_{\rm fin}$ be the transducer on ${\cal S}\otimes {\cal T}\otimes ({\cal H}\oplus {\cal L}_O\oplus {\cal L})$ with public space ${\cal H}_{\rm fin}={\cal H}_M$, defined
    $$S_{\rm fin}=c_{{\cal ST}\otimes ({\cal H}\oplus {\cal L}_O)}S_M 
    \cdot c_{{\cal H}_M}(I_{\cal ST}\otimes (I\otimes O_t^\dagger))
    \cdot c_{{\cal S}_0\otimes{\cal T}\otimes ({\cal H}\oplus{\cal L})}(I_{\cal ST}\otimes U_{\cal AB})
    \cdot c_{{\cal H}_M}(I_{\cal ST}\otimes (I\otimes O_s)),$$
    where ${\cal S}_0=\mathrm{span}\{\ket{0}_{\cal S}\}$.
    Then $S_{\rm fin}$ has transduction action $\ket{0}\ket{0}\ket{{\tils},0}\rightsquigarrow \ket{0}\ket{\eta}$, and $W(S_{\rm fin},\ket{0}\ket{0}\ket{{\tils},0})= O(M\sqrt{\mathbf{WR}})$.
    Moreover, $S_{\rm fin}$ can be implemented using $O(1)$ controlled calls to each of $U_{\cal AB}^{\pm 1}$, $O_s^{\pm 1}$ and $O_t^{\pm 1}$, and $O(\log k + \log^2 M)$ additional elementary gates. In particular, by \Cref{lem:gauge-transducer-implementation}, one call to $S_{\rm fin}$ costs $O(1)$ calls to each of $O_G$, $O_w$, $O_s$ and $O_t$, and $\widetilde{O}(1)$ additional elementary gates.
\end{claim}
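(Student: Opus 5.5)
We prove \Cref{claim:S-fin} by tracking an explicit catalyst through the four factors of $S_{\rm fin}$, in the same way as the proof of \Cref{claim:above}. The idea is that the first three factors (applied right to left) turn $\ket{0}\ket{0}\ket{\tils,0}$ into $\ket{0}\ket{0}\ket{\pi}$, and $S_M$ then transduces that state to $\ket{0}\ket{\eta}$.

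\textbf{Step 1: a catalyst for the state preparation.} By \Cref{thm:gauge-state-preparation-transducer}, there is a catalyst $\ket{v}\in{\cal L}$ for $\ket{\tils}\ket{\psi_s}$ with
\[
U_{\cal AB}(\ket{\tils}\ket{\psi_s}+\ket{v})=\Sigma\ket{\tils}\ket{\psi_s}+\ket{v}
\]
and $\norm{\ket{v}}^2\le 2\sqrt{\mathbf{WR}}$. We place $\ket{v}$ in the extra ${\cal L}$ summand, tensored with $\ket{0}_{\cal S}\ket{0}_{\cal T}$. This is private for $S_{\rm fin}$, and $c_{{\cal ST}\otimes({\cal H}\oplus{\cal L}_O)}S_M$ acts trivially on it.

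\textbf{Step 2: a catalyst for $S_M$.} By \Cref{lem:amp-amp-comp}, there is a catalyst $\ket{\nu_M}\in{\cal L}_M$ with
\[
S_M(\ket{0}\ket{0}\ket{\pi}+\ket{\nu_M})=\ket{0}\ket{\eta}+\ket{\nu_M}
\]
and $\norm{\ket{\nu_M}}^2\le (M-1)(1+W(S_O))$.

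\textbf{Step 3: trace the full catalyst.} Take $\ket{\nu_{\rm fin}}=\ket{\nu_M}+\ket{0}\ket{0}\ket{v}_{\cal L}$ and apply the four factors in order to $\ket{0}\ket{0}\ket{\tils,0}+\ket{\nu_{\rm fin}}$.
\begin{itemize}
\item[(i)] $c_{{\cal H}_M}O_s$ produces $\ket{0}\ket{0}\ket{\tils}\ket{\psi_s}$ and leaves every private component unchanged.
\item[(ii)] $c_{{\cal S}_0\otimes{\cal T}\otimes({\cal H}\oplus{\cal L})}U_{\cal AB}$ acts on $\ket{0}\ket{0}(\ket{\tils}\ket{\psi_s}+\ket{v})$ and gives $\ket{0}\ket{0}(\Sigma\ket{\tils}\ket{\psi_s}+\ket{v})$.
\item[(iii)] $c_{{\cal H}_M}O_t^\dagger$ then gives $\ket{0}\ket{0}\ket{\pi}$, by \eqref{eq:pi-Pi-R}.
\item[(iv)] $S_M$ maps $\ket{0}\ket{0}\ket{\pi}+\ket{\nu_M}$ to $\ket{0}\ket{\eta}+\ket{\nu_M}$, and the controls ensure the $\ket{v}$ part is untouched.
\end{itemize}
The output is $\ket{0}\ket{\eta}+\ket{\nu_{\rm fin}}$. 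By \Cref{thm:transduction-action}, this determines the transduction action on this input.

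\textbf{Main obstacle.} The delicate point is checking that the controls in steps (i)--(iv) never let an operator touch a component it should not. Three cases need care.
\begin{itemize}
\item $\ket{\nu_M}$ contains components in ${\cal S}_{\geq1}\otimes{\cal T}\otimes{\cal H}$. These must be fixed by $O_s$, $U_{\cal AB}$ and $O_t^\dagger$, which holds because those factors are controlled on ${\cal H}_M$ and on ${\cal S}_0$ respectively.
\item $\ket{\nu_M}$ contains components in ${\cal S}\otimes{\cal T}\otimes{\cal L}_O$. These are disjoint from the extra ${\cal L}$ copy on which $U_{\cal AB}$ acts, since we keep that copy as a separate register label, analogous to the trit used in \Cref{claim:above}.
\item The output components in ${\cal S}_0\otimes{\cal T}\otimes{\cal H}$ after (ii) are exactly what (iii) expects.
\end{itemize}

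\textbf{Complexity.} Using orthogonality of the two catalyst pieces and \Cref{claim:above},
\[
W \le (M-1)\bigl(1+4\sqrt{\mathbf{WR}}\bigr)+2\sqrt{\mathbf{WR}} = O\bigl(M\sqrt{\mathbf{WR}}\bigr).
\]
For the implementation cost, sum the costs of the factors. $S_M$ uses one controlled call to $S_O$ and one to $I-2\Pi$ (the latter $O(\log k)$ gates), plus $O(\log M)$ gates. The two Fourier transforms $F_M^{\pm1}$ cost $O(\log^2 M)$ gates. $S_O$ uses $O(1)$ calls to $U_{\cal AB}^{\pm1}$, $O_s^{\pm1}$ and $O_t^{\pm1}$. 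The three remaining controlled factors add $O(1)$ calls plus $O(1)$ control gates. Finally, substituting \Cref{lem:gauge-transducer-implementation} gives the stated cost in $O_G$ and $O_w$.
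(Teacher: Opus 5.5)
Your proposal is correct and follows the paper's proof essentially step for step. It uses the same combined catalyst $\ket{0,0}_{\cal ST}\ket{v}_{\cal L}+\ket{\nu_M}$, the same four-factor trace relying on the sector controls, and the same norm and gate-count bookkeeping via \Cref{claim:above} and \Cref{lem:amp-amp-comp}. One small imprecision: the controls on ${\cal H}_M$ and on ${\cal S}_0$ must check that the $\log M$-qubit register ${\cal S}$ is $\ket{0}$, so they cost $O(\log M)$ gates rather than $O(1)$; this is harmlessly absorbed into the $O(\log k+\log^2 M)$ bound.
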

\begin{proof}
$S_{\rm fin}$ acts on ${\cal H}_{\rm fin}\oplus {\cal L}_{\rm fin}$, but it will be useful to decompose its private space into two parts, so its total space is (see \Cref{fig:pub-priv-spaces}):
$${\cal H}_{\rm fin}\oplus {\cal L}_{\rm fin}=\underbrace{({\cal S}_0\otimes {\cal T}\otimes {\cal H})}_{{\cal H}_{\rm fin}}\oplus \underbrace{ ({\cal S}\otimes {\cal T}\otimes {\cal L})\oplus {\cal L}_M}_{{\cal L}_{\rm fin}}.$$
The operations defining $S_{\rm fin}$ each act on one or more of these spaces.

    The proof is similar to the previous claim. Let $\ket{v}\in {\cal L}$ be such that
    \begin{equation}\label{eq:S-fin-cat-1}
    U_{\cal AB}(\ket{{\tils},\psi_s}+\ket{v})=\Sigma \ket{\tils,\psi_s}+\ket{v},
    \end{equation}
    with $\norm{\ket{v}}^2\leq W(U_{\cal AB})$.

    Similarly, let $\ket{v'}\in {\cal L}_M=\big({\cal S}_{\geq1}\otimes{\cal T}\otimes{\cal H}\big)\oplus\big({\cal S}\otimes {\cal T}\otimes {\cal L}_O\big)$ be such that
    \begin{equation}\label{eq:S-fin-cat-2}
        S_M(\ket{0,0}\ket{\pi}+\ket{v'})=\ket{0}\ket{\eta}+\ket{v'},
    \end{equation}
    and $\norm{\ket{v'}}^2\leq (M-1)(1+W(S_O))$, as we know exists by \Cref{lem:amp-amp-comp}. By \Cref{claim:above}, $W(S_O)\leq 2W(U_{\cal AB})$, so $\norm{\ket{v'}}^2\leq (M-1)(1+2W(U_{\cal AB}))$.
    Combining these two catalysts into $\ket{v''}=\ket{0,0}_{\cal ST}\ket{v}_{\cal L}+\ket{v'}_{{\cal L}_M}$, which we can easily verify is in ${\cal L}_{\rm fin}$ (see \Cref{fig:pub-priv-spaces}), consider the action of $S_{\rm fin}$ on 
    $$\ket{0,0}_{\cal ST}\ket{\tils,0}_{\cal H} + \ket{v''}_{{\cal L}_{\rm fin}}=\underbrace{\ket{0,0}_{\cal ST}\ket{\tils,0}_{\cal H}}_{\in {\cal H}_{\rm fin}}+ \underbrace{\ket{0,0}_{\cal ST}\ket{v}_{\cal L}+\ket{v'}_{{\cal L}_M}}_{=:\ket{v''}}.$$
    First, we apply $I_{\cal ST}\otimes (I\otimes O_s)$ controlled on being in ${\cal H}_M={\cal H}_{\rm fin}$, which only impacts the first term:
    \begin{align*}
        & \ket{0,0}_{\cal ST}\ket{\tils,0}_{\cal H} + \ket{0,0}_{\cal ST}\ket{v}_{\cal L}+\ket{v'}_{{\cal L}_M}\\
        \overset{c(I\otimes O_s)}{\longmapsto}{}&
        \ket{0,0}_{\cal ST}(I\otimes O_s)\ket{{\tils},0}_{\cal H} + \ket{0,0}_{\cal ST}\ket{v}_{\cal L}+\ket{v'}_{{\cal L}_M}.
    \end{align*}
    Next, we apply $I_{\cal ST}\otimes U_{\cal AB}$, controlled on being in ${\cal S}_0\otimes {\cal T}\otimes ({\cal H}\oplus{\cal L})$, which only acts non-trivially on the first two terms:    
    \begin{align*}
        \overset{cU_{\cal AB}}{\longmapsto}{}& \ket{0,0}_{\cal ST} U_{\cal AB}((I\otimes O_s)\ket{{\tils},0}_{\cal H}+\ket{v}_{\cal L})+\ket{v'}_{{\cal L}_M}\\
        ={}& \ket{0,0}_{\cal ST}\Sigma (I\otimes O_s)\ket{\hat s,0}_{\cal H}+\ket{0,0}_{\cal ST}\ket{v}_{\cal L}+\ket{v'}_{{\cal L}_M}
    \end{align*}
    by \eqref{eq:S-fin-cat-1}, since $O_s\ket{0}=\ket{\psi_s}$. Continuing, we apply $I_{\cal ST}\otimes (I\otimes O_t^\dagger)$ controlled on being in ${\cal H}_M={\cal H}_{\rm fin}$, which again only impacts the first term:     
    \begin{align*}
        \overset{c(I\otimes O_t^\dagger)}{\longmapsto}{}& \ket{0,0}_{\cal ST} (I\otimes O_t^\dagger)\Sigma(I\otimes O_s)\ket{{\tils},0}_{\cal H}+\ket{0,0}_{\cal ST}\ket{v}_{\cal L}+\ket{v'}_{{\cal L}_M}\\
        ={}& \ket{0,0}_{\cal ST} \ket{\pi}_{\cal H}+\ket{0,0}_{\cal ST}\ket{v}_{\cal L}+\ket{v'}_{{\cal L}_M},
    \end{align*}
    by \eqref{eq:pi-Pi-R}. Finally, to complete the application of $S_{\rm fin}$, we apply $S_M$ controlled on being in ${\cal S}_0\otimes{\cal T}\otimes {\cal H}$ or ${\cal L}_M$, which impacts the first and last term:
    \begin{align*}
        \overset{cS_M}{\longmapsto}{}& S_M(\ket{0,0}_{\cal ST} \ket{\pi}_{\cal H}+\ket{v'}_{{\cal L}_M})+\ket{0,0}_{\cal ST}\ket{v}_{\cal L}\\
        ={}& \ket{0}_{\cal S}\ket{\eta}+\ket{v'}_{{\cal L}_M}+\ket{0,0}_{\cal ST}\ket{v}_{\cal L}
    \end{align*}
    by \eqref{eq:S-fin-cat-2}.
    This establishes the claimed transduction action. To compute the transduction complexity, we bound:
    \begin{align*}
        \norm{\ket{v''}}^2=\norm{\ket{v'}_{{\cal L}_M}+\ket{0,0}_{\cal ST}\ket{v}_{\cal L}}^2 &= \norm{\ket{v'}}^2+\norm{\ket{v}}^2
        \leq (M-1)(1+2W(U_{\cal AB}))+W(U_{\cal AB})\\
        &= O(M W(U_{\cal AB}))=O(M\sqrt{\mathbf{WR}}),
    \end{align*}
    by \Cref{thm:gauge-state-preparation-transducer}.

    For the implementation cost, the three operations to the right of $S_M$ are one controlled call to each of $O_s$, $U_{\cal AB}$ and $O_t^\dagger$, with $O(1)$-gate controls as in \Cref{claim:above}, plus the control on $\ket{0}_{\cal S}$, which costs $O(\log M)$ gates. By \Cref{lem:amp-amp-comp}, $S_M$ costs one controlled call to $S_O$, one controlled call to $I-2\Pi$, two Fourier transforms and $O(\log M)$ other gates. Here $I-2\Pi$ is the reflection around $\ket{\tilt}\ket{0}$, costing $O(\log k)$ gates, the Fourier transforms cost $O(\log^2 M)$ gates, and by \Cref{claim:above}, $S_O$ costs $O(1)$ calls to $U_{\cal AB}^{\pm1}$, $O_s^{\pm1}$, $O_t^{\pm1}$, and $O(\log k)$ other gates. Summing gives the claimed cost.
\end{proof}

\paragraph{Transducer to algorithm.} We now turn the transducer $S_{\rm fin}$ into an algorithm, via \Cref{thm:transducer-implementation}, applied with the constant error parameter $\eps_2=10^{-4}$, on the initial state $\ket{0}_{\cal S}\ket{0}_{\cal T}\ket{\tils,0}$, which can be prepared with $O(1)$ gates. By \Cref{claim:S-fin}, it suffices to take some sufficiently large
$$K = O\left(\frac{M\sqrt{\mathbf{WR}}}{\eps_2}\right)=O\left(\frac{\sqrt{\mathbf{WR}}}{\eps}\right)$$
(recall $M=\lceil 8\pi/\eps\rceil$) calls to $S_{\rm fin}$, after which the resulting state $\ket{\eta'}$ satisfies $\norm{\ket{\eta'}-\ket{0}_{\cal S}\ket{\eta}}^2\leq 10^{-4}$. We measure the ${\cal T}$ register of $\ket{\eta'}$ to obtain $z\in\{0,\dots,M-1\}$, and output $\tilde p:=\sin^2(\pi z/M)$.

\paragraph{Correctness.} By \Cref{lem:amp-amp-comp}, if we measured the ${\cal T}$ register of the ideal state $\ket{0}_{\cal S}\ket{\eta}$, then with probability at least $3/4$ we would obtain $z$ such that, 
$$|\tilde p -p|\leq \frac{6\pi}{M}+\frac{9\pi^2}{M^2}
\leq\frac{6\pi}{8\pi}\eps+\frac{9\pi^2}{64\pi^2}\eps^2
\leq \eps.$$
Since $\ket{\eta'}$ is within distance $10^{-2}$ of $\ket{0}_{\cal S}\ket{\eta}$, and detecting connectedness succeeds with probability very close to 1, it is easy to verify that the above still has probability at least $2/3$. 

\paragraph{Cost.} The disconnectedness test costs $\widetilde{O}(\sqrt{\mathbf{WR}})$ calls to $O_G$, $O_w$ and $O_s$, and $\widetilde{O}(\sqrt{\mathbf{WR}})$ other gates. The estimation step makes $K=O(\sqrt{\mathbf{WR}}/\eps)$ controlled calls to $S_{\rm fin}$, and uses $O(K)$ other gates. By \Cref{claim:S-fin}, each call to $S_{\rm fin}$ costs $O(1)$ calls to each of $O_G$, $O_w$, $O_s$ and $O_t$, and $O(\log n+\log k+\log^2 M)=\widetilde{O}(1)$ other gates. The total is therefore
$\widetilde{O}\big(\sqrt{\mathbf{W}\mathbf{R}}/\eps\big)$ calls to each oracle
and $\widetilde{O}\big(\sqrt{\mathbf{W}\mathbf{R}}/\eps\big)$ additional
elementary operations. This completes the proof of \Cref{cor:gauge-state-overlap-algorithm}.

\section{Metropolis-Hastings connection graphs}\label{sec:MH-connection}

The algorithm of \Cref{sec:gauge-state-preparation-transducer} solves $st$-transport in $\widetilde O(\sqrt{\mathbf{W}\mathbf{R}}/\eps)$ time, so its cost depends on the edge weights. If we simply give every edge of $G$ weight $1$, then $W(G)=2|E|$, which can be as large as $n^2$, and ${\cal R}_{st}(G)$ is the usual effective resistance, which can be as large as $n$. Both can happen at once, for instance when $s$ lies in a clique on $n/2$ vertices and $t$ is at the end of a path of length $n/2$ attached to it, and then we only get $\widetilde O(n^{3/2}/\eps)$.

However, the answer to $st$-transport does not depend on the weights, so we are free to choose them to make $W(G){\cal R}_{st}(G)$ small. By the commute-time identity (\Cref{lem:commute_time}), this is the same as choosing a random walk on $G$ whose commute time between $s$ and $t$ is small, when $s$ and $t$ are connected. The Metropolis-type walk of~\cite{kosowski2013faster}, realized by the Metropolis-Hastings graph $G'$ of \Cref{def:MH}, has $W(G'){\cal R}_{x_sx_t}(G')=O(n^2)$ (\Cref{cor:MH_WR}). In this section we show how to apply it to the setting of $st$-transport in order to obtain a $\widetilde O(n/\eps)$ algorithm.

\subsection{Graph construction}\label{sec:MH-construction}
We extend the Metropolis-Hastings construction to flat connection graphs. Each edge $\{ u,v \}$ is replaced by a path of length $2$. We assign the identity label $I$ to the first edge and the original label $U_{uv}$ to the second, so that traversing the replacement path applies exactly the same unitary as traversing the original edge. This choice preserves transport between vertices corresponding to those of $G$. Below, we formalize the construction and verify that it preserves flatness.
\begin{definition}\label{def:Metropolis-Hastings}
Let $G = (V, E)$ be an unweighted unitary-labeled graph (\defin{labeled-graph}). The corresponding \emph{Metropolis-Hastings unitary-labeled graph} is the weighted Metropolis-Hastings graph $G'=(V',E',w)$ (see \defin{MH}), where,
for each edge $\{u,v\}\in E$ with $u < v$, we define the corresponding unitary labels as $U_{x_u,x_{u,v}} = I, U_{x_{u,v},x_v} = U_{uv}$ (see \fig{MH}). The boundary states $\ket{\psi_s}$ and $\ket{\psi_t}$, together with their state preparation oracles $O_s$ and $O_t$, are inherited unchanged from $G$.
\end{definition}
\begin{figure}
\centering
\includegraphics[width=0.6\textwidth]{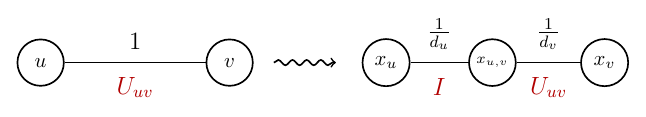}
\caption{The Metropolis-Hastings construction for an edge $\{u,v\}$ such that $u < v$. The original edge of weight $1$ and unitary label $U_{uv}$ is replaced by two edges through a new vertex $x_{u,v}$. The edge $\{x_u, x_{u,v}\}$ has weight $1/d_u$ and unitary label $I$, while the edge $\{x_{u,v}, x_v\}$ has weight $1/d_v$ and unitary label $U_{uv}$.}\label{fig:MH}
\end{figure}
The following lemma shows that $G'$ is flat whenever $G$ is, and that transport between original vertices is unchanged. The proof projects paths in $G'$ onto paths in $G$ by collapsing each split vertex $x_{u,v}$ onto its smaller endpoint $u$, preserving the ordered product of labels.

\begin{lemma}\label{lem:MH_flatness}
If a unitary-labeled graph $G = (V, E)$ satisfies the flatness condition from \defin{labeled-graph}, then the Metropolis-Hastings unitary-labeled graph $G'=(V',E',w)$ of \defin{Metropolis-Hastings} also satisfies the flatness condition, and for all $u,v\in V$ in the same connected component, $U'_{x_u}(x_v)=U_u(v)$.
\end{lemma}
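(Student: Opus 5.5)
Following the hint before the statement, we will project paths in $G'$ onto walks in $G$. First, orientation conventions. In $G'$, for $\{u,v\}\in E$ with $u<v$, the labels are
$$U'_{x_u x_{u,v}}=I,\qquad U'_{x_{u,v}x_v}=U_{uv},$$
and the reverse orientations are given by adjoints, $U'_{x_{u,v}x_u}=I$ and $U'_{x_v x_{u,v}}=U_{uv}^\dagger=U_{vu}$. Define a projection $\rho:V'\to V$ by $\rho(x_u)=u$ and $\rho(x_{u,v})=u$ for $u<v$, i.e.\ collapse each split vertex onto its smaller endpoint. The key local fact is that for every oriented edge $(a,b)$ of $G'$, either $\rho(a)=\rho(b)$ and $U'_{ab}=I$, or $\{\rho(a),\rho(b)\}\in E$ and $U'_{ab}=U_{\rho(a)\rho(b)}$. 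We check the four cases directly:
\begin{itemize}
\item $(x_u,x_{u,v})$ and $(x_{u,v},x_u)$ both project to $u$ and carry $I$;
\item $(x_{u,v},x_v)$ projects to $(u,v)$ and carries $U_{uv}$;
\item $(x_v,x_{u,v})$ projects to $(v,u)$ and carries $U_{vu}$.
\end{itemize}

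Next, take any walk $a_0,a_1,\dots,a_L$ in $G'$. Deleting consecutive repeats from $\rho(a_0),\dots,\rho(a_L)$ gives a walk in $G$ from $\rho(a_0)$ to $\rho(a_L)$. Its ordered label product equals that of the original walk, because the deleted steps contribute $I$ and every other step contributes exactly the corresponding $G$-label.

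Flatness of $G$ is stated for paths, but it implies that the product along any closed walk is $I$. This holds because $U_{vu}=U_{uv}^\dagger$ makes backtracking cancel, and any closed walk reduces by induction on length. (Equivalently: a closed walk through $r$ telescopes under the gauge $g_u=U_r(u)$, with $U_{uv}=g_vg_u^\dagger$ on every edge.) Either way, the product along any walk in $G$ depends only on its endpoints.

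Now apply this to $G'$. Given two paths in $G'$ between the same endpoints $a,b$, both project to walks in $G$ from $\rho(a)$ to $\rho(b)$, so both products equal $U_{\rho(a)}(\rho(b))$. Hence $G'$ is flat. Taking $a=x_u$ and $b=x_v$ gives $U'_{x_u}(x_v)=U_u(v)$, since $\rho(x_u)=u$ and $\rho(x_v)=v$. Connectivity is preserved as well: $x_u$ and $x_v$ are connected in $G'$ exactly when $u$ and $v$ are connected in $G$, because each edge of $G$ lifts to a two-step path in $G'$.

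The main care point is the orientation bookkeeping in the local fact, in particular that $(x_v,x_{u,v})$ carries $U_{vu}$. The second point needing care is the passage from path-flatness to walk-flatness, and we handle it with the gauge argument above.
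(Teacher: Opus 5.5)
Your proof is correct and follows the paper's argument: the same collapse map sending $x_{u,v}$ to its smaller endpoint, the same four-case check on oriented edges of $G'$, and then flatness of $G$ applied to the projected sequence to get both flatness of $G'$ and $U'_{x_u}(x_v)=U_u(v)$. The only difference is that you delete stationary steps and explicitly justify applying flatness to walks rather than simple paths, whereas the paper assigns label $I$ to self-steps and applies flatness of $G$ directly to the projected, possibly non-simple, sequence.
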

\begin{proof}
Define a map $\pi: V' \to V$ as follows. For every $u \in V$, set
\[
    \pi (x_u) = u.
\]
For every edge $\{u,v\} \in E$ with $u < v$, set
\[
    \pi(x_{u,v}) = u.
\]
Let $\{u,v\}\in E$ with $u<v$. The vertex $x_{u,v}$ is adjacent only to $x_u$ and $x_v$. Thus there are four possible directed steps involving $x_{u,v}$. We use the convention that the label of a step from a vertex to itself is~$I$.

If the step is between $x_u$ and $x_{u,v}$, then both endpoints are mapped to $u$, and the label is $I$ in both directions. This agrees with the label of the corresponding step from $u$ to itself.

If the step is from $x_{u,v}$ to $x_v$, then its label is $U_{uv}$. The corresponding step under the map $\pi$ is from $u$ to $v$ which also has label $U_{uv}$.

If the step is from $x_v$ to $x_{u,v}$ then its label is $U_{vu}$. The corresponding step under the map $\pi$ is from $v$ to $u$ which also has label $U_{vu}$.
Consider any path $y_0, y_1, \ldots, y_L$ in $G'$. Applying the observation above to each consecutive pair $y_\ell, y_{\ell + 1}$ gives
\[
    U_{y_{L - 1} y_L}\cdots U_{y_0 y_1} = U_{\pi(y_{L - 1}) \pi(y_L)} \ldots U_{\pi(y_0) \pi(y_1)}.
\]
By flatness of $G$, the right-hand side depends only on $\pi(y_0)$ and $\pi(y_L)$. Since $\pi$ is fixed, this depends only on $y_0$ and $y_L$. Therefore, the product of labels along a path in $G'$ depends only on its endpoints. Hence $G'$ is flat. Taking $y_0=x_u$ and $y_L=x_v$, the right-hand side is $U_u(v)$, which proves the second claim.
\end{proof}

\subsection{Oracles for the Metropolis-Hastings graph}\label{sec:MH-oracles}

We now show that the oracle access required for the weighted $st$-transport problem on the Metropolis-Hastings graph $G'$ can be implemented efficiently from the corresponding oracle access to the original graph $G$.
The algorithm of \Cref{sec:gauge-state-preparation-transducer} accesses its input graph only through $O_G$ and $O_w$, and through $O_s,O_t$, which are unchanged. So to run it on $G'$, we need to implement $O_{G'}$ and $O_{w'}$. 

\noindent Recall that for every $u\in V$, the neighbours of $u$ in $G$ are indexed by a bijection
\[
    f_u:\{0, \ldots, d_u-1\} \to N(u).
\]
We assume access to the graph oracle for $G$,
\[
    O_G: \ket{u}\ket{i}\ket{\psi} \mapsto \ket{v}\ket{j}U_{uv}\ket{\psi},
\]
where $v=f_u(i)$ and $f_v(j)=u$. Since $G$ is unweighted, the weighted-neighbour oracle $O_w$ of \Cref{prob:gauge_problem_estimate} maps $\ket{u}\ket{0}$ to the uniform superposition over $\{0,\dots,d_u-1\}$. In this section it is convenient to instead work with the degree oracle
\[
    O_{\deg}: \ket{u}\ket{0} \mapsto \ket{u}\ket{d_u},
\]
which can be simulated from $O_G$ alone.

The idea is that $O_G$ acts as the identity on $\ket{u}\ket{i}$ when $i\geq d_u$ is not a valid index, and changes the vertex register when $i<d_u$. So one application of $O_G$ reveals whether $i<d_u$, and binary search then finds $d_u$.
\begin{lemma}\label{lem:degree-oracle}
The degree oracle $O_{\deg}$ can be implemented using $O(\log n)$ calls to $O_G$ and $O(\log^2 n)$ additional elementary gates.
\end{lemma}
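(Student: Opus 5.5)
The plan is a coherent binary search for $d_u$ on the integer interval $\{0,\dots,n-1\}$, using $O_G$ as a comparator. The key observation is that for a basis state $\ket{u}\ket{i}\ket{0}$ with $\ket{0}\in\mathbb{C}^k$, $O_G$ maps it to $\ket{f_u(i)}\ket{j}U_{uf_u(i)}\ket{0}$ if $i<d_u$, and leaves it fixed if $i\geq d_u$. Since $G$ has no self-loops, $f_u(i)\neq u$, so in the first case the vertex register changes. Hence the bit $[i<d_u]$ can be computed reversibly as follows.

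\begin{enumerate}
\item Copy $u$ into a fresh register.
\item Apply $O_G$ to $\ket{u}\ket{i}\ket{0}$, with fresh ancillas for the index and $\mathbb{C}^k$ registers.
\item Compare the resulting vertex register with the copy of $u$, using $O(\log n)$ gates, and write $[\text{not equal}]$ into an answer qubit.
\item Apply $O_G^\dagger$ to restore $\ket{u}\ket{i}\ket{0}$.
\item Uncompute the copy of $u$.
\end{enumerate}

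This implements the comparator $C:\ket{u}\ket{i}\ket{b}\mapsto\ket{u}\ket{i}\ket{b\oplus[i<d_u]}$ with two calls to $O_G$ and $O(\log n)$ gates. The $\mathbb{C}^k$ ancilla stays untouched apart from the $O_G$, $O_G^\dagger$ pair, so it returns to $\ket{0}$. Note that $O_G^\dagger$ acts as the identity exactly where $O_G$ does, and it is a legitimate call: it can be counted as a call, or, using $O_G^2=I$ on valid edges, it equals $O_G$ itself.

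Since $d_u\in\{0,\dots,n-1\}$, write $d_u$ in binary with $m=\lceil\log_2 n\rceil$ bits and determine them from most significant to least. At step $\ell$ the most significant $\ell-1$ bits of $d_u$ are already held in the output register. Set $i := (\text{prefix})\cdot 2^{m-\ell+1}+2^{m-\ell}-1$, i.e.\ the prefix followed by a $0$ and then all ones, computed into a scratch register with $O(\log n)$ gates. The next bit of $d_u$ is then $[i<d_u]$, equivalently $[d_u\geq \text{prefix}\cdot 2^{m-\ell+1}+2^{m-\ell}]$; I will double-check the off-by-one so that comparing against the threshold with $C$ gives exactly that bit. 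Apply $C$ to write this bit into the output register's $\ell$-th position, then uncompute the scratch value of $i$, which depends only on already-written output bits. After $m$ rounds the output register holds $d_u$ and all scratch is clean.

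Each round uses two $O_G$ calls and $O(\log n)$ gates, for a total of $O(\log n)$ calls and $O(\log^2 n)$ gates over $m=O(\log n)$ rounds. On inputs $\ket{u}\ket{0}$ the circuit maps exactly to $\ket{u}\ket{d_u}$, and it is a unitary, so it implements $O_{\deg}$ (extended arbitrarily on other second-register values, for instance by XORing the answer in).

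The main point requiring care is that the comparator is exact and garbage-free. This relies on $f_u(i)\neq u$, which holds because there are no self-loops, and on $O_G$ acting as the identity on invalid indices, including index values up to $2^m-1\geq n-1$. It also relies on the index register being wide enough to hold $n-1$, which I will assume as part of the encoding.
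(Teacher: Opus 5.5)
Your proposal is correct and matches the paper's proof: a two-call comparator for $[i<d_u]$, built by checking whether $O_G$ moves the vertex register (relying on the absence of self-loops and on $O_G$ acting as the identity on invalid indices), followed by an $O(\log n)$-round coherent bitwise binary search. Your threshold $i=p\cdot 2^{m-\ell+1}+2^{m-\ell}-1$, with $p$ the already-determined prefix, is off-by-one correct, and writing the bits directly into the output register instead of the paper's compute--XOR--uncompute is fine, since $O_{\deg}$ is only specified on inputs $\ket{u}\ket{0}$.
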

\begin{proof}
For a vertex $u$ and an index $i$, we compute the bit $[i<d_u]$ coherently as follows: copy the vertex label $u$ into an ancilla register, apply $O_G$ to $\ket{u}\ket{i}\ket{\psi}$, and record in a fresh qubit whether the vertex register differs from the stored copy. If $i<d_u$, the vertex register contains $f_u(i)\neq u$, since we assume that the graph has no self-loops, whereas otherwise $O_G$ acts as the identity. Apply $O_G^\dagger=O_G$ and uncopy the ancilla to restore all other registers. Each check uses two oracle calls and $O(\log n)$ additional gates.

Since $[i<d_u]$ is monotone in $i$ and $d_u\leq n-1$, a coherent bitwise binary search over $i\in\{0,\ldots,n-1\}$ computes $d_u$ using $O(\log n)$ checks. XOR the result into the output register and reverse the computation to erase the workspace. This uses $O(\log n)$ calls to $O_G$ and $O(\log^2 n)$ additional elementary gates.
\end{proof}
We use the notation $x_{u,v}$ for the split vertex corresponding to the edge $\{u, v\}$ with $u < v$. In the actual encoding, this split vertex is stored as the canonical half edge $(u,i)$ of the vertex with the smaller name, where $f_u(i) = v$. Hence, from the encoding of $x_{u,v}$, we can read off $u$ and $i$, and, using oracle access to $G$, recover $v$ and $j$, where $f_v(j)=u$.
We distinguish the two vertex types by a flag bit, encoding $x_u$ as $(0,u,0)$ and $x_{u,v}$ as $(1,u,i)$, where $u<v$ and $f_u(i)=v$; both encodings use $O(\log n)$ bits.
\begin{lemma}\label{lem:MH_lookup}
The reversible operation
\[
\mathrm{LOOK}:
\ket{u}\ket{i}\ket{0}\ket{0}\ket{\psi}
\mapsto
\ket{u}\ket{i}\ket{v}\ket{j}\ket{\psi},
\]
where $v=f_u(i)$ and $f_v(j)=u$, can be implemented using two calls to $O_G$ and $\widetilde O(1)$ additional gates.
\end{lemma}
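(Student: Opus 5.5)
We implement $\mathrm{LOOK}$ by applying $O_G$ to a copy of the input, so that the original $\ket{u}\ket{i}$ stays in place and the copy ends up holding $\ket{v}\ket{j}$. The main subtlety is that $O_G$ also applies $U_{uv}$ to the $\mathbb{C}^k$ register, and $\mathrm{LOOK}$ must leave $\ket{\psi}$ untouched. The second oracle call is there to remove this unitary.

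The steps are as follows.
\begin{enumerate}
\item Copy $u$ and $i$ into the two target registers with CNOTs, using $O(\log n)$ gates. This gives $\ket{u}\ket{i}\ket{u}\ket{i}\ket{\psi}$.
\item Apply $O_G$ to the last three registers, obtaining $\ket{u}\ket{i}\ket{v}\ket{j}U_{uv}\ket{\psi}$ with $v=f_u(i)$ and $f_v(j)=u$. If $i$ is not a valid index, $O_G$ is the identity; in that case the copies remain and $\mathrm{LOOK}$ acts on invalid inputs in some fixed reversible way, which is harmless.
\item Undo $U_{uv}$. Apply $O_G$ to $\ket{u}\ket{i}\ket{\psi'}$, using the first two registers together with the $\mathbb{C}^k$ register; this sends them to $\ket{v}\ket{j}U_{uv}\ket{\psi'}$, which is the wrong direction. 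So instead apply $O_G$ to the \emph{third and fourth} registers $\ket{v}\ket{j}$ together with the $\mathbb{C}^k$ register. This maps $\ket{v}\ket{j}U_{uv}\ket{\psi}\mapsto\ket{u}\ket{i}U_{vu}U_{uv}\ket{\psi}=\ket{u}\ket{i}\ket{\psi}$.
\item Steps 2 and 3 together have restored the copies to $\ket{u}\ket{i}$, so the pair $(v,j)$ has to be saved before step 3 undoes it. Before step 3, copy $v$ and $j$ into fresh registers with CNOTs, at $O(\log n)$ cost. After step 3 the third and fourth registers hold $\ket{u}\ket{i}$ again, and these are erased by CNOTs from the first two registers.
\item Relabel the fresh registers holding $v,j$ as the output registers. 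This is a swap, or simply a choice of which registers count as output, costing $O(\log n)$ gates.
\end{enumerate}

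This uses exactly two calls to $O_G$ and $O(\log n)=\widetilde O(1)$ additional gates. Correctness for invalid indices follows because $O_G$ acts as the identity there. All operations are reversible, and the ancillas are returned to $\ket{0}$.

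The only real obstacle is the one identified at the start: cancelling the edge unitary on $\ket{\psi}$. The construction handles it through the identity $U_{vu}U_{uv}=I$, obtained by traversing the edge back from the $(v,j)$ side.
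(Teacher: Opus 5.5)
Your proof is correct and is essentially the paper's argument: apply $O_G$ once, copy out $(v,j)$, then apply $O_G$ again so that traversing the edge back cancels $U_{uv}$ via $U_{vu}=U_{uv}^\dagger$. The only difference is that you run the two oracle calls on a copy of $(u,i)$, which costs an extra pair of registers and a final swap; the paper instead applies $O_G$ directly to the input registers, which return to $\ket{u}\ket{i}$ after the second call.
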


\begin{proof}
Apply $O_G$ to the first two registers and to the internal register. This maps $\ket{u}\ket{i}\ket{\psi} \mapsto \ket{v}\ket{j}U_{uv}\ket{\psi}$. Copy the pair $(v,j)$ into the clean registers. Then apply $O_G$ again to the first two registers and to the internal register. Since $U_{vu} = U_{uv}^{\dagger}$, the second query returns the first two registers to $\ket{u} \ket{i}$ and cancels the operation on the internal register. Hence, the final state is $\ket{u}\ket{i}\ket{v}\ket{j}\ket{\psi}$.
\end{proof}
Next, we specify the neighbour indexing of $G'$. For a vertex $x_u$, we use the same neighbour index set as for $u$ in $G$. If $f_u(i) = v$, then the $i$-th neighbour of $x_u$ in $G'$ is the split vertex of the edge $\{u,v\}$, that is, $x_{u,v}$ if $u<v$, and $x_{v,u}$ if $u>v$. For a split vertex $x_{u,v}$ with $u < v$, the two neighbours are indexed by $f'_{x_{u,v}} (0) = x_u$ and $ f'_{x_{u,v}} (1) = x_v$.
Thus, for $u < v$, if $f_u(i)=v$ and $f_v(j)=u$, the edge between $x_u$ and $x_{u,v}$ is indexed by $i$ at $x_u$ and by $0$ at $x_{u,v}$. Similarly, the edge between $x_v$ and $x_{u,v}$ is indexed by $j$ at $x_v$ and by $1$ at $x_{u,v}$.

We can now implement $O_{G'}$. The input is either an original vertex $x_u$ with an index $i$ of $G$, or a split vertex with an index $\ell\in\{0,1\}$. In both cases the procedure is the same: look up the other end of the corresponding edge of $G$, write down the output vertex and index in $G'$, apply the label of the half-edge ($I$ or $U_{uv}$, the latter using one more call to $O_G$), and uncompute the workspace. The only difference between the cases is which output is written and when the label is non-trivial.
\begin{lemma}\label{lem:MH_edge_oracle}
The graph oracle $O_{G'}$ for the Metropolis-Hastings graph $G'$ can be implemented using $O(1)$ calls to $O_G$, $O(1)$ calls to $O_{\deg}$, and $\widetilde O(1)$ additional gates.
\end{lemma}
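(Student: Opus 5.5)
We implement $O_{G'}$ by a case split on the flag bit of the vertex register. Each branch is built as a reversible compute--act--uncompute circuit, and the two branches are then combined by controlling on the flag.

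\emph{Case 1: input $\ket{0,u,0}\ket{i}\ket{\psi}$, an original vertex $x_u$.}
\begin{enumerate}
\item Use $O_{\deg}$ to test whether $i<d_u$. If not, act as the identity.
\item Otherwise apply $\mathrm{LOOK}$ (\Cref{lem:MH_lookup}) to obtain $\ket{u}\ket{i}\ket{v}\ket{j}$, at the cost of two $O_G$ calls.
\item Compute the comparison bit $c=[u<v]$.
\item The target is as follows. If $c=1$, the edge $\{x_u,x_{u,v}\}$ has label $I$ and we must output $\ket{1,u,i}\ket{0}\ket{\psi}$. If $c=0$, the split vertex is $x_{v,u}$, encoded as $(1,v,j)$; the step $x_u\to x_{v,u}$ has label $U_{uv}$, and the index at the split vertex is $1$, so we must output $\ket{1,v,j}\ket{1}U_{uv}\ket{\psi}$.
\item The label in the $c=0$ branch is applied with one controlled call to $O_G$ on a copy $\ket{u}\ket{i}$ of the input, followed by uncomputing the vertex/index change with a $\mathrm{LOOK}$-style reversal.
\item The outputs are written into fresh registers, and the inputs $(u,i)$ are then erased. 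Erasure is possible because $(u,i)$ is recoverable from the output: in the $c=1$ case it is read off directly, and in the $c=0$ case it is obtained by $\mathrm{LOOK}$ applied to $(v,j)$. So we compute the other end from the output and XOR it away. The workspace $v,j,c$ is uncomputed the same way.
\end{enumerate}

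\emph{Case 2: input $\ket{1,u,i}\ket{\ell}\ket{\psi}$, a split vertex $x_{u,v}$.}
\begin{enumerate}
\item If $\ell\notin\{0,1\}$, act as the identity.
\item Apply $\mathrm{LOOK}$ to get $v,j$.
\item If $\ell=0$, output $\ket{0,u,0}\ket{i}\ket{\psi}$, with label $I$.
\item If $\ell=1$, output $\ket{0,v,0}\ket{j}U_{vu}\ket{\psi}$, the step $x_{u,v}\to x_v$ having label $U_{uv}$ by \Cref{def:Metropolis-Hastings}. Here I need to double-check the orientation: from $x_{u,v}$ to $x_v$ the label is $U_{uv}$, so we apply $U_{uv}$. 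This is done by one $O_G$ call on $\ket{u}\ket{i}\ket{\psi}$, which applies $U_{uv}$.
\end{enumerate}
These maps are exactly the inverses of the Case 1 maps, consistent with $O_{G'}$ being an involution, with $U_{vu}=U_{uv}^\dagger$ on the reverse steps. Uncomputation again uses $\mathrm{LOOK}$ on the output to recover the input.

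\emph{Checks.} We verify that every ordered pair (vertex, index) is mapped to the specified partner, with the indexing fixed before the lemma, and with the correct label. Basis states with invalid indices are left fixed. The result is unitary since it permutes edge encodings and applies unitaries on the $\mathbb{C}^k$ register. The cost is $O(1)$ calls to $O_G$ and $O_{\deg}$, plus $O(\log n)$-gate comparisons, copies and flag controls, which is $\widetilde O(1)$ in total.

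\emph{Main obstacle.} The delicate point is clean uncomputation: the label unitary must be applied exactly once, without entangling garbage. I will handle this by applying the label via $O_G$ on a scratch copy of the half-edge $(u,i)$, then restoring the scratch vertex/index registers by XOR with $\mathrm{LOOK}$ output. This works because $O_G$ acts on $\ket{\psi}$ only through $U_{uv}$.
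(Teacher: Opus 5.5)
Your construction is essentially the paper's. Both use the same case split on vertex type, $\mathrm{LOOK}$ to obtain $(v,j)$, and the comparison $[u<v]$ to choose the output pair. Both apply the label with one $O_G$ call on a scratch copy of the half-edge $(u,i)$, cleaned against the lookup registers, and both erase the input by recomputing it from the output pair.

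\textbf{Gap: the split-vertex validity check is missing.} In Case 2 you only check $\ell\in\{0,1\}$. The paper also checks that $(1,u,i)$ is a \emph{valid} split-vertex encoding, i.e.\ $i<d_u$ and $u<f_u(i)$. Without this, your map is not a permutation of basis states, contrary to your ``Checks'' paragraph.

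Take adjacent $v<u$ with $f_u(i)=v$ and $f_v(j)=u$. Your Case 2 sends the non-canonical input $\ket{1,u,i}\ket{0}$ to $\ket{0,u,0}\ket{i}$. But $\ket{0,u,0}\ket{i}$ is already the image of the valid input $\ket{1,v,j}\ket{1}$. Similarly, $\ket{1,u,i}\ket{0}$ with $i\geq d_u$ is sent, via the trivial $\mathrm{LOOK}$, to $\ket{0,u,0}\ket{i}$, which Case 1 leaves fixed. So the described map is not injective and no circuit can realize it. Concretely, your recompute-and-XOR erasure leaves garbage on these inputs, so $O_{G'}$ does not act as the identity on invalid encodings as the oracle definition requires. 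This does not hurt the transducer, which never leaves valid edge encodings, but it does matter for the lemma as stated.

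\textbf{Fix.} Add the check exactly as the paper does, at $O(1)$ extra cost: use $O_{\deg}$ for $i<d_u$, and compare the $\mathrm{LOOK}$ output $v$ against $u$. If either fails, act as the identity. With this added, your argument goes through.

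\textbf{Minor.} The label displayed in your Case 2, $\ell=1$ branch should be $U_{uv}$, not $U_{vu}$. You correct this in the next sentence, and the involution check with Case 1 confirms $U_{uv}$.
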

\begin{proof}
The oracle $O_{G'}$ is required to act as
\[
    O_{G'} : \ket{y}\ket{\ell}\ket{\psi} \mapsto \ket{y'}\ket{\ell'}U_{yy'}\ket{\psi},
\]
where $y' = f'_y(\ell)$ and $f'_{y'}(\ell') = y$, and as the identity on all other inputs.

First, suppose the input is of the form $\ket{x_u}\ket{i}$. We check that $x_u$ is a valid original vertex and that $i<d_u$ using $O(1)$ queries to $O_{\deg}$ and $O_G$, and $\widetilde{O}(1)$ elementary gates. If this check fails, we uncompute and do nothing. If it succeeds, the desired action is
\[
\ket{x_u}\ket{i}\ket{\psi} \mapsto
\begin{cases}
    \ket{x_{u,v}}\ket{0}\ket{\psi} & \text{if } u < v,\\
    \ket{x_{v,u}}\ket{1}U_{uv}\ket{\psi} & \text{if } u > v,
\end{cases}
\]
where $f_u(i) = v$ and $f_v(j) = u$. Indeed, if $u < v$, then the edge from $x_u$ to the split vertex $x_{u,v}$ has label $I$. If $u > v$, then the split vertex is $x_{v,u}$, and the edge from $x_u$ to $x_{v,u}$ has label $U_{uv}$.

Let the clean work registers be initialized to $\ket{0}\ket{0}\ket{0}\ket{0}$, where the first two work registers store the lookup values $(v,j)$ and the last two work registers store the output vertex-index pair. Starting from
\[
\ket{x_u}\ket{i}\ket{0}\ket{0}\ket{0}\ket{0}\ket{\psi},
\]
we implement the map as follows:
\begin{align*}
\ket{x_u}\ket{i}\ket{0}\ket{0}\ket{0}\ket{0}\ket{\psi} &\xmapsto{\;1\;} \ket{x_u}\ket{i}\ket{v}\ket{j}\ket{0}\ket{0}\ket{\psi}\\
&\xmapsto{\;2\;}
\begin{cases}
    \ket{x_u}\ket{i}\ket{v}\ket{j}\ket{x_{u,v}}\ket{0}\ket{\psi} & \text{if } u < v,\\
    \ket{x_u}\ket{i}\ket{v}\ket{j}\ket{x_{v,u}}\ket{1}\ket{\psi} & \text{if } u > v,
\end{cases}\\
&\xmapsto{\;3\;}
\begin{cases}
    \ket{x_u}\ket{i}\ket{v}\ket{j}\ket{x_{u,v}}\ket{0}\ket{\psi} & \text{if } u < v,\\
    \ket{x_u}\ket{i}\ket{v}\ket{j}\ket{x_{v,u}}\ket{1}U_{uv}\ket{\psi} & \text{if } u > v,
\end{cases}\\
&\xmapsto{\;4\;}
\begin{cases}
    \ket{x_{u,v}}\ket{0}\ket{v}\ket{j}\ket{x_u}\ket{i}\ket{\psi} & \text{if } u < v,\\
    \ket{x_{v,u}}\ket{1}\ket{v}\ket{j}\ket{x_u}\ket{i}U_{uv}\ket{\psi} & \text{if } u > v,
\end{cases}\\
&\xmapsto{\;5\;}
\begin{cases}
    \ket{x_{u,v}}\ket{0}\ket{0}\ket{0}\ket{0}\ket{0}\ket{\psi} & \text{if } u < v,\\
    \ket{x_{v,u}}\ket{1}\ket{0}\ket{0}\ket{0}\ket{0}U_{uv}\ket{\psi} & \text{if } u > v.
\end{cases}
\end{align*}
The steps are as follows.
\begin{enumerate}
    \item We use \lem{MH_lookup} to compute $v=f_u(i)$ and $j$ such that $f_v(j) = u$.

    \item Using the registers containing $u,i,v,j$, we compare $u$ and $v$ and write the output vertex-index pair into the clean output registers. If $u < v$, the output pair is $\ket{x_{u,v}}\ket{0}$. If $u > v$, the output pair is $\ket{x_{v,u}}\ket{1}$.

    \item We apply the edge label. If $u < v$, the label is $I$, so we do nothing. If $u > v$, the label from $x_u$ to $x_{v,u}$ is $U_{uv}$, so we apply $U_{uv}$ to the internal register. This is done by applying $O_G$ to the pair $(u,i)$, copied to an auxiliary register, and the internal register, and cleaning the auxiliary register.
    \item We swap the input vertex-index registers with the output vertex-index registers. After this step, the first two registers already contain the desired output.

    \item We clear the work registers by reversibly reconstructing the old input and lookup values from the output pair using \lem{MH_lookup}, XORing these values into the work registers, and then uncomputing the reconstruction.
\end{enumerate}
Thus, ignoring the clean work registers, the action is
\[
\ket{x_u}\ket{i}\ket{\psi} \mapsto
\begin{cases}
    \ket{x_{u,v}}\ket{0}\ket{\psi} & \text{if } u < v,\\
    \ket{x_{v,u}}\ket{1}U_{uv}\ket{\psi} & \text{if } u > v,
\end{cases}
\]
as required.

Next, suppose the input is of the form $\ket{x_{u,v}}\ket{\ell}$, where $u < v$. We check that $\ell \in \{0,1\}$ and that $x_{u,v}$ is a valid split vertex using $O(1)$ queries to $O_{\deg}$ and $O_G$ and $\widetilde{O}(1)$ elementary gates. If this check fails, we uncompute and do nothing. If it succeeds, let $i$ and $j$ be such that $f_u(i) = v$ and $f_v(j) = u$. The desired action is
\[
\ket{x_{u,v}}\ket{\ell}\ket{\psi} \mapsto
\begin{cases}
    \ket{x_u}\ket{i}\ket{\psi} & \text{if } \ell = 0,\\
    \ket{x_v}\ket{j}U_{uv}\ket{\psi} & \text{if } \ell = 1.
\end{cases}
\]
Indeed, the edge from $x_{u,v}$ to $x_u$ has label $I$, while the edge from $x_{u,v}$ to $x_v$ has label $U_{uv}$.

Let the clean work registers be initialized to $\ket{0}\ket{0}\ket{0}\ket{0}$, where the first two work registers store the lookup values $(v,j)$ and the last two work registers store the output vertex-index pair. Starting from
\[
\ket{x_{u,v}}\ket{\ell}\ket{0}\ket{0}\ket{0}\ket{0}\ket{\psi},
\]
we implement the map as follows:
\begin{align*}
\ket{x_{u,v}}\ket{\ell}\ket{0}\ket{0}\ket{0}\ket{0}\ket{\psi} &\xmapsto{\;1\;} \ket{x_{u,v}}\ket{\ell}\ket{v}\ket{j}\ket{0}\ket{0}\ket{\psi}\\
&\xmapsto{\;2\;}
\begin{cases}
    \ket{x_{u,v}}\ket{0}\ket{v}\ket{j}\ket{x_u}\ket{i}\ket{\psi} & \text{if } \ell = 0,\\
    \ket{x_{u,v}}\ket{1}\ket{v}\ket{j}\ket{x_v}\ket{j}\ket{\psi} & \text{if } \ell = 1,
\end{cases}\\
&\xmapsto{\;3\;}
\begin{cases}
    \ket{x_{u,v}}\ket{0}\ket{v}\ket{j}\ket{x_u}\ket{i}\ket{\psi} & \text{if } \ell = 0,\\
    \ket{x_{u,v}}\ket{1}\ket{v}\ket{j}\ket{x_v}\ket{j}U_{uv}\ket{\psi} & \text{if } \ell = 1,
\end{cases}\\
&\xmapsto{\;4\;}
\begin{cases}
    \ket{x_u}\ket{i}\ket{v}\ket{j}\ket{x_{u,v}}\ket{0}\ket{\psi} & \text{if } \ell = 0,\\
    \ket{x_v}\ket{j}\ket{v}\ket{j}\ket{x_{u,v}}\ket{1}U_{uv}\ket{\psi} & \text{if } \ell = 1,
\end{cases}\\
&\xmapsto{\;5\;}
\begin{cases}
    \ket{x_u}\ket{i}\ket{0}\ket{0}\ket{0}\ket{0}\ket{\psi} & \text{if } \ell = 0,\\
    \ket{x_v}\ket{j}\ket{0}\ket{0}\ket{0}\ket{0}U_{uv}\ket{\psi} & \text{if } \ell = 1.
\end{cases}
\end{align*}
The steps are as follows.
\begin{enumerate}
    \item Since the split vertex $x_{u,v}$ is encoded by the canonical half-edge $(u,i)$, we use \lem{MH_lookup} to compute $v = f_u(i)$ and $j$ such that $f_v(j)=u$.

    \item Using the registers containing $u,i,v,j$ and the value of $\ell$, we write the output vertex-index pair into the clean output registers. If $\ell = 0$, the output pair is $\ket{x_u}\ket{i}$. If $\ell = 1$, the output pair is $\ket{x_v}\ket{j}$.

    \item We apply the edge label. If $\ell = 0$, the label is $I$, so we do nothing. If $\ell = 1$, the label from $x_{u,v}$ to $x_v$ is $U_{uv}$, so we apply $U_{uv}$ to the internal register. This is done by applying $O_G$ to a copy of the half-edge $(u,i)$ and the internal register, and cleaning the copy, which then holds $(v,j)$, using the lookup registers.

    \item We swap the input vertex-index registers with the output vertex-index registers. After this step, the first two registers already contain the desired output.

   \item As in the first case, we reversibly reconstruct the old input and lookup values from the output pair using \lem{MH_lookup}, XOR these values into the work registers to clear them, and then uncompute the reconstruction.
\end{enumerate}
Thus, ignoring the clean work registers, the action is
\[
\ket{x_{u,v}}\ket{\ell}\ket{\psi} \mapsto
\begin{cases}
    \ket{x_u}\ket{i}\ket{\psi} & \text{if } \ell = 0,\\
    \ket{x_v}\ket{j}U_{uv}\ket{\psi} & \text{if } \ell = 1,
\end{cases}
\]
as required.

All validity checks are computed coherently into work registers and are uncomputed at the end. If any check fails, the circuit skips the computation of the output pair, skips the label application, skips the swap, and uncomputes the checks, so the operation is the identity on invalid inputs. The number of calls to \lem{MH_lookup} is constant, and each call uses two calls to $O_G$. The label application uses at most one additional call to $O_G$, and the validity checks use $O(1)$ calls to $O_{\deg}$. All other operations are reversible comparisons, controlled copies, controlled swaps, and standard reversible computation on $O(\log n)$-bit registers. Therefore $O_{G'}$ can be implemented using $O(1)$ calls to $O_G$, $O(1)$ calls to $O_{\deg}$, and $\widetilde O(1)$ additional gates.
\end{proof}
The weights of $G'$ were chosen so that $O_{w'}$ is simple. At a non-isolated original vertex $x_u$, all $d_u$ incident edges have the same weight $1/d_u$, so $w'_{x_u}=1$, and $O_{w'}$ only has to prepare a uniform superposition over the indices. A split vertex has just two neighbours, so there $O_{w'}$ is a single-qubit rotation whose angle depends on the degrees of the two endpoints.
\begin{lemma}\label{lem:MH_weight_oracle}
The weighted neighbour oracle $O_{w'}$ for the Metropolis-Hastings graph $G'$ can be implemented using $O(1)$ calls to $O_G$, $O(1)$ calls to $O_{\deg}$, and $\widetilde O(1)$ additional gates, up to the precision used for standard reversible arithmetic and controlled rotations.
\end{lemma}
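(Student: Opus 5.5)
The plan is to split by vertex type, using the flag bit in the encoding of $G'$ vertices. $O_{w'}$ must map $\ket{y}\ket{0}$ to $\frac{1}{\sqrt{w'_y}}\sum_\ell \sqrt{w'_{y,f'_y(\ell)}}\ket{y}\ket{\ell}$. First we compute the flag into an ancilla, which is free since it is a bit of the encoding. Controlled on the flag, we then apply one of two circuits. Afterwards we uncompute every ancilla, so the overall map is unitary and fixes the vertex register.

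\emph{Original vertices $x_u$.} Call $O_{\deg}$ to write $d_u$ into a clean register. If $d_u=0$, the vertex is isolated and we do nothing. Otherwise, all incident edges of $x_u$ have weight $1/d_u$. Hence $w'_{x_u}=1$, and the target state is the uniform superposition $\frac{1}{\sqrt{d_u}}\sum_{i<d_u}\ket{i}$ in the index register. We prepare it with a standard controlled uniform-superposition circuit for a register-held modulus $d_u$, using $\widetilde O(1)$ gates. This can be done exactly, for example via a reversible comparator and one amplitude-amplification round with a precomputed angle. Alternatively it can be done to the arithmetic precision allowed in the statement, via a sequence of controlled rotations on the binary expansion of $d_u$. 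We then uncompute $d_u$ with a second call to $O_{\deg}$.

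\emph{Split vertices $x_{u,v}$, encoded as $(1,u,i)$ with $u<v$.} Use \Cref{lem:MH_lookup} (two calls to $O_G$) to obtain $v$ and $j$. Call $O_{\deg}$ twice to obtain $d_u$ and $d_v$. The two incident weights are $1/d_u$ for index $0$ and $1/d_v$ for index $1$. The target state is therefore
\[
\cos\theta\ket{0}+\sin\theta\ket{1},\qquad \tan^2\theta = d_u/d_v .
\]
Reversible arithmetic computes $\theta=\arctan\sqrt{d_u/d_v}$ to $O(\log(nk/\eps))$ bits into an ancilla, with $\widetilde O(1)$ gates. A rotation controlled bitwise on this angle register then prepares the state on the index qubit. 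This needs one controlled single-qubit rotation per bit of $\theta$. Finally we uncompute $\theta$, $d_u$, $d_v$, and $(v,j)$ by reversing the earlier steps, which doubles the oracle counts but keeps them $O(1)$.

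The main obstacle is clean uncomputation, ensuring the work registers are disentangled from the index register. In the split-vertex branch the rotation acts only on the index qubit, while the ancillas $(v,j,d_u,d_v,\theta)$ depend only on the vertex register $(u,i)$. They are therefore identical across both branches of the superposition and can be uncomputed safely. The same holds for $d_u$ in the original-vertex branch.

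A secondary issue is validity. On inputs not encoding a vertex of $G'$, or with a nonzero index register, we define $O_{w'}$ as the identity, just as in \Cref{lem:MH_edge_oracle}. This requires coherent validity checks using $O(1)$ calls to $O_G$ and $O_{\deg}$. The only inexactness in the construction is the finite-precision angle, which is the caveat already stated in the lemma.
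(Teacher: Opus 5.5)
Your proposal is correct and follows the paper's proof: split on vertex type, use $w'_{x_u}=1$ to reduce the original-vertex case to a uniform superposition over $d_u$ indices, and at a split vertex use \Cref{lem:MH_lookup} plus two degree queries to drive one single-qubit rotation with amplitudes $\sqrt{d_v/(d_u+d_v)}$ and $\sqrt{d_u/(d_u+d_v)}$, followed by coherent validity checks and uncomputation. Your extra detail (computing $\theta=\arctan\sqrt{d_u/d_v}$ into an angle register, and noting that the ancillas depend only on the vertex register so they uncompute cleanly) is consistent with the paper and slightly more explicit than it.
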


\begin{proof}
Recall that for every edge $\{u,v\}\in E$ with $u < v$, the weights in $G'$ are
\[
w'_{x_u,x_{u,v}} = \frac{1}{d_u},
\qquad
w'_{x_{u,v},x_v} = \frac{1}{d_v}.
\]

If $d_u = 0$, we let $O_{w'}$ act as the identity at $x_u$. Otherwise, the following construction applies. First, suppose the input vertex is of the form $x_u$. Then the neighbours of $x_u$ in $G'$ are indexed by the same indices as the neighbours of $u$ in $G$. For every $i\in\{0,\ldots,d_u-1\}$, the corresponding edge has weight $1/d_u$. Hence the weighted degree of $x_u$ is
\[
w'_{x_u} = \sum_{i=0}^{d_u-1} \frac{1}{d_u} = 1.
\]
Therefore, the desired action is
\[
O_{w'}\ket{x_u}\ket{0} = \ket{x_u}\frac{1}{\sqrt{d_u}}\sum_{i=0}^{d_u-1}\ket{i}.
\]
This can be implemented by querying $d_u$ using $O_{\deg}$, preparing the uniform superposition over $\{0,\ldots,d_u-1\}$, and then uncomputing the degree register. This uses $O(1)$ calls to $O_{\deg}$ and $\widetilde O(1)$ additional gates.

Now suppose the input vertex is a split vertex $x_{u,v}$, where $u < v$. This vertex has exactly two neighbours, namely $x_u$ and $x_v$. With our indexing,
\[
f'_{x_{u,v}}(0) = x_u,
\qquad
f'_{x_{u,v}}(1) = x_v.
\]
The weighted degree of $x_{u,v}$ is
\[
w'_{x_{u,v}} = \frac{1}{d_u} + \frac{1}{d_v}.
\]
Thus the desired action is
\begin{align*}
O_{w'}\ket{x_{u,v}}\ket{0}
&=
\ket{x_{u,v}}
\left(
\sqrt{\frac{1/d_u}{1/d_u+1/d_v}}\ket{0}
+
\sqrt{\frac{1/d_v}{1/d_u+1/d_v}}\ket{1}
\right)\\
&=
\ket{x_{u,v}}
\left(
\sqrt{\frac{d_v}{d_u+d_v}}\ket{0}
+
\sqrt{\frac{d_u}{d_u+d_v}}\ket{1}
\right).
\end{align*}
To implement this, we query $d_u$ and $d_v$ using $O_{\deg}$, perform the degree-controlled one-qubit rotation
\[
\ket{0} \mapsto \sqrt{\frac{d_v}{d_u+d_v}}\ket{0} + \sqrt{\frac{d_u}{d_u+d_v}}\ket{1},
\]
and uncompute the degree registers.

If the split vertex $x_{u,v}$ is encoded by the canonical half-edge $(u,i)$, where $f_u(i)=v$ and $u<v$, then $u$ and $i$ are already available from the encoding, but $v$ may not be. In that case, we use \lem{MH_lookup} to compute $v=f_u(i)$ and $j$ such that $f_v(j)=u$. We then query $d_u$ and $d_v$, perform the controlled rotation above, and uncompute all work registers. The operation \lem{MH_lookup} uses two calls to $O_G$ and returns the internal register unchanged.

All validity checks are computed coherently and uncomputed at the end. If a check fails, the circuit skips the rotation and acts as the identity. All remaining operations are reversible comparisons, controlled rotations, and standard reversible arithmetic on $O(\log n)$-bit registers. Hence $O_{w'}$ can be implemented using $O(1)$ calls to $O_G$, $O(1)$ calls to $O_{\deg}$, and $\widetilde O(1)$ additional gates, up to the chosen arithmetic precision.
\end{proof}

\begin{corollary}\label{cor:MH_oracles}
Assume access to the graph oracle $O_G$ for the original graph $G$. Then the graph oracle $O_{G'}$ and the weighted neighbour oracle $O_{w'}$ for the Metropolis-Hastings graph $G'$ can both be implemented using $O(\log n)$ calls to $O_G$ and $\widetilde O(1)$ additional gates.
\end{corollary}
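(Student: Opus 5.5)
This corollary follows by composing \Cref{lem:MH_edge_oracle} and \Cref{lem:MH_weight_oracle} with \Cref{lem:degree-oracle}. Those two lemmas express $O_{G'}$ and $O_{w'}$ in terms of $O(1)$ calls to $O_G$, $O(1)$ calls to $O_{\deg}$, and $\widetilde O(1)$ additional gates. \Cref{lem:degree-oracle} in turn implements $O_{\deg}$ with $O(\log n)$ calls to $O_G$ and $O(\log^2 n)=\widetilde O(1)$ gates. The plan is to substitute this implementation for every occurrence of $O_{\deg}$ and then add up the costs.

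The count is direct. For $O_{G'}$, the $O(1)$ direct calls to $O_G$ together with $O(1)\cdot O(\log n)$ calls coming from the simulated degree oracle give $O(\log n)$ calls to $O_G$. The gate count is $\widetilde O(1)+O(1)\cdot O(\log^2 n)=\widetilde O(1)$. The argument for $O_{w'}$ is identical. For a split vertex encoded as $(u,i)$, the call to \Cref{lem:MH_lookup} needed to obtain $v$ is already included in the $O(1)$ direct $O_G$ calls of \Cref{lem:MH_weight_oracle}.

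The one point needing care is correctness of the substitution. The simulated $O_{\deg}$ must act exactly as $\ket{u}\ket{0}\mapsto\ket{u}\ket{d_u}$, with all workspace returned clean and the internal $\mathbb{C}^k$ register left unchanged, so that it can be used coherently and controlled inside the larger circuits. This holds because the check in \Cref{lem:degree-oracle} applies $O_G$ and then $O_G^\dagger=O_G$. That pair cancels both the vertex/index change and the unitary $U_{uv}$, and the binary-search workspace is uncomputed. Since the degree is needed only as a classical value, I would run the degree computation on a fresh ancilla copy of the internal register initialized to $\ket{0}$, or rely on the exact cancellation just described. Either way the internal register is not disturbed.

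I do not expect a real obstacle; the corollary is bookkeeping. The only subtlety is the qualifier in \Cref{lem:MH_weight_oracle} about the precision of the controlled rotations. I would carry it over as stated and absorb the required polylogarithmic precision into the $\widetilde O(1)$ gate count.
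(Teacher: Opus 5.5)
Your proposal is correct and matches the paper's proof, which simply combines \Cref{lem:MH_edge_oracle} and \Cref{lem:MH_weight_oracle} with \Cref{lem:degree-oracle} and adds up the costs, exactly as you do. Your additional checks, namely that the simulated $O_{\deg}$ returns its workspace clean and leaves the internal register untouched because $O_G^2=I$, and that the precision caveat of \Cref{lem:MH_weight_oracle} carries over into the $\widetilde O(1)$ gate count, are sound and only make explicit what the paper leaves implicit.
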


\begin{proof}
The statement follows directly from \lem{MH_edge_oracle}, \lem{MH_weight_oracle}, and \lem{degree-oracle}.
\end{proof}

\section[Linear quantum algorithm for st-transport]{Linear quantum algorithm for $st$-transport}\label{sec:linear-alg}

We now combine the results of \Cref{sec:gauge-state-preparation-transducer,sec:MH-connection} to obtain our main upper bound, and then analyze the space complexity of the resulting algorithm.

\begin{theorem}\label{thm:linear-alg}
There is a quantum algorithm that solves \Cref{prob:gauge_problem_estimate} on unweighted graphs with bounded error, meaning that, with probability at least $2/3$, it correctly
reports that $s$ and $t$ are disconnected, or outputs an estimate $\tilde p$ such that
$\abs{\,\abs{\bra{\psi_t}U_s(t)\ket{\psi_s}}^2-\tilde p\,}\leq \eps$, using
$\widetilde{O} \left( n / \eps \right)$ calls to $O_G$, $O_s$ and $O_t$, and
$\widetilde{O}\left( n / \eps \right)$ additional
elementary operations.
\end{theorem}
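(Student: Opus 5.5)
The plan is to run the algorithm of \Cref{cor:gauge-state-overlap-algorithm} on the Metropolis-Hastings unitary-labeled graph $G'$ of \defin{Metropolis-Hastings}, with terminals $x_s,x_t$, instead of on $G$. First dispose of the trivial case $s=t$: then $U_s(t)=I$ and $s,t$ are connected, so we only need to estimate $|\braket{\psi_t}{\psi_s}|^2$ to additive error $\eps$. Standard amplitude estimation with $O(1/\eps)$ calls to $O_s,O_t$ (e.g.\ \Cref{lem:amp-amp-comp} with the trivial transducer $S_O=(I\otimes O_t^\dagger O_s)(2\ket{0}\bra{0}-I)(I\otimes O_s^\dagger O_t)$ and zero catalyst) does this. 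Assume $s\neq t$ henceforth.

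By \Cref{lem:MH_flatness}, $G'$ is flat, $x_s$ and $x_t$ are connected in $G'$ if and only if $s$ and $t$ are connected in $G$, and $U'_{x_s}(x_t)=U_s(t)$. The boundary states and $O_s,O_t$ are inherited unchanged. Hence a correct answer to \Cref{prob:gauge_problem_estimate} on $(G',x_s,x_t)$ is a correct answer on $(G,s,t)$.

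Next, supply the known parameters required by \Cref{prob:gauge_problem_estimate}. Take $\mathbf W=2n$ and $\mathbf R=18n$, which are valid by \Cref{cor:MH_WR} (the resistance bound is only needed in the connected case). This gives $\sqrt{\mathbf{WR}}=6n$. The weighted degrees satisfy $w'_{x_s}=w'_{x_t}=1$ when $d_s,d_t>0$, and are $0$ otherwise. These values can be computed classically in advance with $O(\log n)$ queries (\Cref{lem:degree-oracle}), or handled by the controlled construction in \Cref{lem:Ref_B_transducer}. If $d_s=0$ or $d_t=0$, the algorithm already reports disconnected correctly, since in that case the transduction action is $-I$. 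Recognizing $x_s,x_t$ and preparing $\ket{\hat x_s},\ket{\hat x_t}$ in $O(1)$ gates is immediate, because their encodings $(0,s,0)$ and $(0,t,0)$ are fixed classical strings.

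Then invoke \Cref{cor:gauge-state-overlap-algorithm} on $G'$, realizing each call to $O_{G'}$ and $O_{w'}$ via \Cref{cor:MH_oracles}, at a cost of $O(\log n)$ calls to $O_G$ and $\widetilde O(1)$ gates each. The algorithm makes $\widetilde O(\sqrt{\mathbf{WR}}/\eps)=\widetilde O(n/\eps)$ calls to these oracles and to $O_s,O_t$, and uses $\widetilde O(n/\eps)$ other gates. Multiplying by the $O(\log n)$ overhead stays within $\widetilde O(n/\eps)$, and the bounded-error guarantee transfers verbatim.

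The main obstacle is that \Cref{lem:MH_weight_oracle} is exact only up to arithmetic precision in the controlled rotations. The fix I would try first is to compute the rotation angles to $O(\log(n/\eps))$ bits, so that each simulated $O_{w'}$ has operator-norm error $\delta=O(\eps/n)$ with small enough constant. Over the $K=O(n/\eps)$ coherent calls, the total error is then at most a small constant, which the $2/3$ success probability absorbs. This costs only polylogarithmic gates per call.
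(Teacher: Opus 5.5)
Your proposal is correct and follows essentially the same route as the paper. Both arguments dispose of $s=t$ by amplitude estimation, run the algorithm of \Cref{cor:gauge-state-overlap-algorithm} on the Metropolis-Hastings connection graph $G'$ with $\mathbf W'=2n$, $\mathbf R'=18n$ and $w'_{x_s}=w'_{x_t}=1$, simulate $O_{G'}$ and $O_{w'}$ via \Cref{cor:MH_oracles}, and use $O(\log(n/\eps))$ bits of rotation precision. The only minor difference is in handling isolated terminals: the paper checks for them explicitly and outputs ``disconnected'' up front, whereas you rely on the transducer's $-I$ action in the disconnected case (with $w_b=0$ handled as in \Cref{lem:Ref_B_transducer}), which also works.
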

\begin{proof}
    If $s = t$, we estimate $\abs{\bra{0}O_t^\dagger O_s\ket{0}}^2$ by amplitude estimation using $O(1/\eps)$ oracle calls. Otherwise, if either terminal is isolated, we output
``disconnected''. Henceforth assume $s\ne t$ and $d_s,d_t>0$. Given an input unitary-labeled graph $G$, consider the Metropolis-Hastings unitary-labeled graph $G'$ from \defin{Metropolis-Hastings}.
    By \cor{MH_oracles}, each call to $O_{G'}$ or $O_{w'}$ can be implemented using $O(\log n)$ calls to $O_G$, together with $\widetilde O(1)$ additional elementary operations.
    The boundary states are unchanged by the construction, so the state-preparation oracles $O_s$ and $O_t$ can be used without modification.
    By \lem{MH_flatness}, $G'$ is flat. Moreover, $x_s$ and $x_t$ are connected in $G'$ if and only if $s$ and $t$ are connected in $G$, and in that case
$U'_{x_s}(x_t)\ket{\psi_s}=U_s(t)\ket{\psi_s}$, again by \lem{MH_flatness}.
Consequently, solving $st$-transport on $G'$ with boundary vertices $x_s,x_t$ also solves the original problem on $G$.
The quantities that \Cref{cor:gauge-state-overlap-algorithm} assumes to be known are all determined by $n$: the weighted degrees are $w'_{x_s}=w'_{x_t}=1$, and by \cor{MH_WR} we may take $\mathbf{W}'=2n$ and $\mathbf{R}'=18n$.
Applying \cor{gauge-state-overlap-algorithm} to $G'$ gives a bounded-error quantum algorithm using
$\widetilde O\left( \sqrt{\mathbf W'\mathbf R'} / \eps \right)=\widetilde O(n/\eps)$
calls to $O_{G'}$, $O_{w'}$, $O_s$, and $O_t$, and the same number of additional elementary operations.
Finally, replacing each call to $O_{G'}$ and $O_{w'}$ by the implementations above gives an overall complexity of $\widetilde O(n/\eps)$ calls to $O_G$, $O_s$, and $O_t$, and $\widetilde O(n/\eps)$ additional elementary operations.

The implementation of $O_{w'}$ in \lem{MH_weight_oracle} is exact up to the chosen arithmetic precision -- call this $\delta$. The algorithm makes $T=\widetilde O(n/\eps)$ calls to $O_{w'}$ in total, so a sufficiently precise $\delta\approx \eps/n$ gives $b\approx\log\frac{n}{\eps}$ bits of precision, for a polylog$(n/\eps)=\tO(1)$ overhead.
\end{proof}

\begin{proposition}[Space complexity]\label{prop:linear-alg-space}
For $0<\eps<1$, the algorithm of \Cref{thm:linear-alg} can be implemented using $O\bigl(\log n+\log k+\log(1/\eps)\bigr)$ qubits.
\end{proposition}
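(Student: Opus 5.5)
The plan is to tally the registers used by the algorithm of \Cref{thm:linear-alg}, layer by layer, and check that none of its layers keeps a workspace that grows with the number $K$ of sequential calls. The algorithm is a sequence of three stages: the disconnectedness test, the composed amplitude-estimation transducer $S_{\rm fin}$, and the transducer-to-algorithm conversion of \Cref{thm:transducer-implementation}. At every moment the live registers are the following.

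First, the graph registers of $G'$. A vertex of $G'$ is encoded as $(0,u,0)$ or $(1,u,i)$ and an index is either a value in $\{0,\dots,n-1\}$ or $\bot$, so these take $O(\log n)$ qubits. The internal register $\mathbb{C}^k$ takes $\lceil\log k\rceil$ qubits.

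Second, the control registers of the composed transducers. There is the trit distinguishing ${\cal H},{\cal L}_0,{\cal L}_1$ from \Cref{claim:above}. The flags for $S_{\rm fin}$ add $O(1)$ more. The registers ${\cal S},{\cal T}$ of \Cref{lem:amp-amp-comp} have dimension $M=\lceil 8\pi/\eps\rceil$, so they take $O(\log(1/\eps))$ qubits.

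Third, the workspace of the oracle simulations. \Cref{lem:degree-oracle}, \Cref{lem:MH_lookup}, \Cref{lem:MH_edge_oracle} and \Cref{lem:MH_weight_oracle} use a constant number of $O(\log n)$-bit registers, together with arithmetic and rotation ancillas of $b=O(\log(n/\eps))$ bits. All of these are uncomputed after each call, so they can be reused.

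The key step is to confirm that the implementation ${\cal A}_K(S)$ of \Cref{thm:transducer-implementation} needs only $O(\log K)$ extra qubits beyond those of $S$. Here $K=\widetilde O(n/\eps)$, so $O(\log K)=O(\log n+\log(1/\eps))$. I would argue this from the structure of the construction in \cite{belovs2023LasVegasTime}. It runs $S$ $K$ times under control of a counter (or a phase-estimation-style register) of $\lceil\log K\rceil$ qubits, together with the privacy qubit. This is the main obstacle: the statement of \Cref{thm:transducer-implementation} as quoted here gives no space bound. I would therefore either cite the explicit circuit, which uses a uniform superposition over $\{0,\dots,K-1\}$ and controlled powers realized by incrementing, or re-derive a variant with a $\log K$ counter. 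The same check applies to the Fourier transforms $F_M^{\pm1}$, which act in place on ${\cal T}$.

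Finally, the disconnectedness stage is run first and its registers are measured before the estimation stage, so the two stages reuse space. The case $s=t$ is plain amplitude estimation with $O(\log(1/\eps)+\log k)$ qubits. Summing all contributions gives $O(\log n+\log k+\log(1/\eps))$ qubits in total.
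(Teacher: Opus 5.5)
Your proposal is correct and follows the paper's proof: both count the $G'$ vertex/index/internal registers, the $O(1)$ sector qubits for ${\cal H}\oplus{\cal L}_0\oplus{\cal L}_1\oplus{\cal L}$, the $\log M$-qubit registers ${\cal S},{\cal T}$, and the reusable $O(\log n+\log k+\log(1/\eps))$ oracle and arithmetic workspace. Both then add $O(\log K)$ for the transducer-to-algorithm conversion, which the paper justifies as you anticipated by citing the proof of Theorem~5.5 and Lemma~4.6 of \cite{belovs2023LasVegasTime}. Your explicit handling of the disconnectedness stage and the $s=t$ case is a harmless addition the paper leaves implicit, and note that the paper takes $M$ to be a power of two so that $F_M^{\pm1}$ genuinely acts in place on ${\cal T}$.
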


\begin{proof}
We use the transducer $S_{\rm fin}$ from \Cref{claim:S-fin}, applied to the Metropolis-Hastings connection graph $G'$. Choose powers of two $M=\Theta(1/\eps)$ and $K=\Theta(n/\eps)$ large enough for amplitude estimation and the conversion of $S_{\rm fin}$ into an algorithm.

Let $Q_{\rm fin}$ be the space needed to implement one controlled call to $S_{\rm fin}$, including all auxiliary registers.
The conversion in \Cref{thm:transducer-implementation} adds $O(\log K)$ qubits \cite[proof of Theorem~5.5 and Lemma~4.6]{belovs2023LasVegasTime}. Thus,
\[
Q_{\rm algo}\leq Q_{\rm fin}+O(\log K),
\]
where $Q_{\rm algo}$ is the total space used by the algorithm. We count the space $Q_{\rm fin}$ in four parts: the clock register $\mathcal S$, the amplitude-estimation register $\mathcal T$, the register for $\mathcal H\oplus\mathcal L_0\oplus\mathcal L_1\oplus\mathcal L$, and the additional workspace needed to implement the transducer as a circuit.

The clock register $\mathcal S$ has dimension $M$ and uses $\log M$ qubits. The amplitude-estimation register $\mathcal T$ also uses $\log M$ qubits. The Fourier transforms act on this same register.

The register for $\mathcal H\oplus\mathcal L_0\oplus\mathcal L_1\oplus\mathcal L$ uses $O(\log n+\log k)$ qubits. Indeed, $\mathcal H\oplus\mathcal L$ is encoded by a vertex of $G'$, a neighbour index, and an internal state in $\mathbb C^k$. These require $O(\log n)$, $O(\log n)$, and $\log k$ qubits, respectively. By \Cref{claim:above}, $\mathcal L_0$ and $\mathcal L_1$ are copies of $\mathcal L$. The dimension of the full direct sum is therefore at most three times that of $\mathcal H\oplus\mathcal L$, so these additional sectors require only $O(1)$ extra qubits.

It remains to bound the workspace of the circuit implementing $S_{\rm fin}$. We use the implementation from \Cref{claim:S-fin}: controlled calls to $U_{\mathcal{AB}}^{\pm1}$, $O_s^{\pm1}$, and $O_t^{\pm1}$, together with Fourier transforms, phase flips, and clock and sector controls. We implement $U_{\mathcal{AB}}$ using the two reflections from \Cref{sec:implementation}, and implement their calls to $O_{G'}$ and $O_{w'}$ using the constructions in \Cref{sec:MH-oracles}.
These implementations use auxiliary registers for intermediate calculations. The auxiliary registers are initialized to zero and returned to zero after use, but must be included in the space bound. We count their space below, starting with the oracle implementations.

The implementation of $O_{G'}$ in \Cref{lem:MH_edge_oracle} uses $O(\log n+\log k+\log(1/\eps))$ qubits, including its argument registers and workspace. It uses a constant number of vertex, index, and degree registers, together with the internal-state register in $\mathbb C^k$.

The implementation of $O_{w'}$ in \Cref{lem:MH_weight_oracle} uses $O(\log n+\log k+\log(1/\eps))$ qubits as well. This includes the internal-state register used by its calls to $O_G$, the degree registers, and the workspace for the degree-controlled rotations and uniform neighbor preparation at the required accuracy.

The uniform-state preparations and degree-controlled rotations use standard reversible arithmetic with $O(\log n+\log(1/\varepsilon))$ workspace, with temporary registers uncomputed and reused. We choose the precision sufficiently high that the accumulated implementation error over $K=O(n/\varepsilon)$ calls is a sufficiently small constant.

The reflections in \Cref{sec:implementation} use these oracles and $O(\log n+\log k)$ further workspace, so they satisfy the same space bound. The construction of $S_{\rm fin}$ in \Cref{claim:S-fin,lem:amp-amp-comp} uses a constant number of calls to these operations. Its Fourier transforms act on $\mathcal T$, and its clock operations and remaining controls use $O(\log M+\log k)$ additional workspace. All temporary registers are uncomputed and reused. Since $M=\Theta(1/\eps)$, the additional workspace is bounded by $O(\log n+\log k+\log(1/\eps))$.
\end{proof}
\section{Lower bound}\label{sec:gauge-lower-bound}

In this section, we show that $st$-transport has a linear quantum query lower bound even under the promise that $s$ and $t$ are connected. We achieve that by reducing the parity problem below to $st$-transport.

\begin{problem}[Parity]\label{prob:parity} Compute $\bigoplus_{i=0}^{n-1}x_i$,
given oracle access to a string $x \in \{0,1\}^n$ via
\[
O_x:\ket{i}\ket{b} \mapsto \ket{i}\ket{b\oplus x_i}.
\]
\end{problem}

\begin{lemma}[\cite{beals2001QLowerBoundPoly,farhi1998parity}]\label{lem:parity_complexity}
The bounded error quantum query complexity of the parity problem (\probl{parity}) is $\Omega(n)$.
\end{lemma}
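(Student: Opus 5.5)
This lemma is the classical $\Omega(n)$ bound for parity, and I would not reprove it from scratch. I would recall the polynomial method of \cite{beals2001QLowerBoundPoly}.

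The first step is to show that after $T$ queries to $O_x$, every amplitude of the final state is a multilinear polynomial in $x_0,\dots,x_{n-1}$ of degree at most $T$. This goes by induction on the number of queries. A query $O_x$ maps $\ket{i}\ket{b}$ to $\ket{i}\ket{b\oplus x_i}$, so each new amplitude is an affine combination, with coefficients $x_i$ and $1-x_i$, of old amplitudes. That raises the degree by at most one. The input-independent unitaries between queries are linear, so they do not raise the degree. Multilinearity is preserved because $x_i^2=x_i$ on Boolean inputs.

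Consequently, the acceptance probability, a sum of squared moduli of amplitudes, is a real multilinear polynomial $p(x)$ of degree at most $2T$.

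The second step uses bounded error. The polynomial $p$ satisfies $|p(x)-\mathrm{PARITY}(x)|\le 1/3$ for all $x\in\{0,1\}^n$. I then symmetrize by averaging $p$ over all permutations of the input coordinates. This gives a univariate polynomial $q(|x|)$ with $\deg q\le 2T$ that approximates the parity of the Hamming weight $|x|$ for each $|x|\in\{0,\dots,n\}$, by Minsky--Papert.

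Next, the function $q(j)-1/2$ alternates in sign at $j=0,1,\dots,n$, so it has at least $n$ real roots. Hence $2T\ge \deg q\ge n$, giving $T\ge n/2=\Omega(n)$. This matches the Farhi et al.\ bound \cite{farhi1998parity}.

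The only delicate point is the symmetrization step. Averaging over permutations and then writing the result as a polynomial in $|x|$ must not increase the degree. This holds because every symmetric multilinear polynomial is a linear combination of the elementary symmetric polynomials $\binom{|x|}{k}$ with $k\le\deg p$.

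Since the statement is a cited standard result, in the paper I would simply invoke \cite{beals2001QLowerBoundPoly,farhi1998parity}, with the sketch above as justification.
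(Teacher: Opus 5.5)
Your proposal is correct and matches the paper, which also gives no proof and simply cites \cite{beals2001QLowerBoundPoly,farhi1998parity}. Your sketch is an accurate rendering of the Beals et al.\ polynomial method: amplitudes of degree at most $T$, an acceptance polynomial of degree at most $2T$, Minsky--Papert symmetrization, and $n$ sign changes of $q-1/2$, which together give $T\ge n/2=\Omega(n)$.
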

We establish a linear lower bound for $st$-transport using \lem{parity_complexity}. The following lower bound holds even for instances with $k=2$ in which $s$ and $t$ are promised to be connected.

\begin{figure}
\centering
\includegraphics[width=0.5\textwidth]{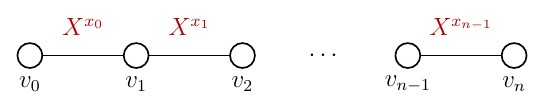}
\caption{The $st$-transport instance used in the lower bound. The graph is the unweighted path $v_0,v_1,\ldots,v_n$, with $s = v_0$ and $t = v_n$. For every $i \in \{0,\ldots,n - 1\}$, the edge $\{ v_i,v_{i+1} \}$ is labeled by the unitary $X^{x_i}$. Thus, transporting the state $\ket{0}$ from $s$ to $t$ applies $X^{\bigoplus_{i = 0}^{n - 1} x_i}$.}\label{fig:lower_bound_new}
\end{figure}
\begin{theorem}\label{thm:gauge-query-lower-bound}
The bounded-error quantum query complexity of \Cref{prob:gauge_problem_estimate}, measured in calls to $O_G$, is $\Omega(n)$ for any $\eps \in (0,1/2)$.
\end{theorem}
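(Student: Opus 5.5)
We reduce Parity (\probl{parity}) to $st$-transport on the path instance of \fig{lower_bound_new}. Given $x\in\{0,1\}^n$, let $G_x$ be the unweighted path $v_0,v_1,\dots,v_n$ with $s=v_0$, $t=v_n$, $k=2$, and edge $\{v_i,v_{i+1}\}$ labeled $U_{v_iv_{i+1}}=X^{x_i}$. Since $X^{x_i}$ is Hermitian, $U_{v_{i+1}v_i}=U_{v_iv_{i+1}}^\dagger$ holds. A path has no cycles, so $G_x$ is flat. The graph is connected, so $s$ and $t$ are promised connected. Transport gives $U_s(t)=X^{x_{n-1}}\cdots X^{x_0}=X^{\bigoplus_i x_i}$.

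Take $\ket{\psi_s}=\ket{\psi_t}=\ket{0}$. Then $\abs{\bra{\psi_t}U_s(t)\ket{\psi_s}}^2$ equals $1$ if the parity is $0$ and $0$ if it is $1$. For $\eps<1/2$, any estimate within additive error $\eps$ determines the parity, by thresholding at $1/2$. Here $O_s=O_t=I$ cost nothing. The graph has $n+1$ vertices and known fixed structure, so $O_w$ is input-independent and implementable with no queries to $x$. The total weight and effective resistance bounds $\mathbf{W}=2n$ and $\mathbf{R}=n$ are likewise known in advance.

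The key step is simulating $O_G$ with $O(1)$ queries to $O_x$. Fix the neighbour indexing independently of $x$: at $v_i$ with $0<i<n$, index $0$ points to $v_{i-1}$ and index $1$ to $v_{i+1}$; at $v_0$, index $0$ points to $v_1$; at $v_n$, index $0$ points to $v_{n-1}$. The classical rotation map $\ket{u,i}\mapsto\ket{v,j}$ is then a fixed permutation computable without queries. The remaining work is applying $X^{x_e}$ to the internal qubit, where $e$ is the edge index, which is $\min(u,v)$ as a function of $(u,i)$. We do this by computing $e$ into an ancilla and querying $O_x$ to write $x_e$ into a fresh qubit. We then apply CNOT from that qubit onto the internal register, query $O_x$ again to uncompute, uncompute $e$, and apply the rotation permutation. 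On invalid indices we act as the identity. This uses two queries per call to $O_G$, and it is exactly $O_G$, including its action on superpositions and its controlled versions (controlling the CNOT and the permutation suffices).

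Therefore a $T$-query algorithm for $st$-transport with bounded error yields a $2T$-query algorithm for parity on $n$ bits. \lem{parity_complexity} then gives $T=\Omega(n)$. The only delicate point is checking that the simulation is a coherent unitary with clean ancillas; the index convention above makes this routine. Since the graph size is $n+1=\Theta(n)$, the bound is $\Omega(n)$ in the paper's parameter.
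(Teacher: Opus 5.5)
Your proposal is correct and follows the paper's proof closely: the same path instance with labels $X^{x_i}$, $\ket{\psi_s}=\ket{\psi_t}=\ket{0}$, the same $x$-independent neighbour indexing, and the same reduction from Parity via \lem{parity_complexity}. The only difference is in simulating $O_G$. You compute $x_e$ into an ancilla, CNOT it onto the internal qubit, and uncompute, which costs two queries. The paper instead uses one query, applying $O_x$ with the internal qubit itself as the target register. This difference only affects the constant, although your version does make the simulation of controlled-$O_G$ slightly more transparent.
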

\begin{proof}
We reduce the parity problem to $st$-transport. Let $x\in\{0,1\}^n$ be an input to the parity problem. We construct an $st$-transport instance (see \fig{lower_bound_new}) on the unweighted path $G=(V,E)$, where
\[
V = \{v_i : i \in \{0,\ldots,n\}\},
\qquad
E = \{\{v_i,v_{i + 1}\} : i \in \{0,\ldots,n - 1\}\}.
\]
We set $s = v_0,t = v_n$, and assign weight $w_{v_i v_{i+1}} = 1$ to every edge. In particular, $s$ and $t$ are connected for every $x$. We take the internal dimension to be $k = 2$ and set
\[
\ket{\psi_s} = \ket{\psi_t} = \ket{0}.
\]
For every $i \in \{0,\ldots,n-1\}$, define
\[
U_{v_i v_{i+1}} = U_{v_{i+1}v_i} = X^{x_i} .
\]
The flatness condition holds, since the graph contains no cycles. Transporting $\ket{\psi_s}$ from $s$ to $t$ gives
\[
U_s(t)\ket{\psi_s} = X^{x_{n - 1}} \cdots X^{x_1}X^{x_0}\ket{0} = X^{\bigoplus_{i = 0}^{n - 1} x_i} \ket{0}.
\]
Therefore,
\[
\bra{\psi_t}U_s(t)\ket{\psi_s}
=
\begin{cases}
1 & \text{if } \displaystyle\bigoplus_{i=0}^{n-1}x_i=0,\\
0 & \text{if } \displaystyle\bigoplus_{i=0}^{n-1}x_i=1.
\end{cases}
\]
Thus, an estimate of the overlap to additive error strictly smaller than $1/2$ determines the parity of $x$.

It remains to show that the oracle access to this $st$-transport instance can be implemented using a constant number of queries to $O_x$ per query to $O_G$. We use the following local neighbor indexing:
\[
f_{v_0}(0) = v_1,
\qquad
f_{v_n}(0) = v_{n-1},
\]
and, for every $i \in \{1,\ldots,n-1\}$,
\[
f_{v_i}(0) = v_{i-1},
\qquad
f_{v_i}(1) = v_{i+1}.
\]
With this indexing, the graph oracle acts on valid inputs as
\begin{align*}
O_G\ket{v_0}\ket{0}\ket{\psi} &= \ket{v_1}\ket{0}X^{x_0}\ket{\psi},\\
O_G\ket{v_1}\ket{0}\ket{\psi} &= \ket{v_0}\ket{0}X^{x_0}\ket{\psi},\\
O_G\ket{v_i}\ket{0}\ket{\psi} &= \ket{v_{i-1}}\ket{1}X^{x_{i-1}}\ket{\psi} \qquad\text{for }i\in\{2,\ldots,n\},\\
O_G\ket{v_i}\ket{1}\ket{\psi} &= \ket{v_{i+1}}\ket{0}X^{x_i}\ket{\psi} \qquad\text{for }i\in\{1,\ldots,n-1\}.
\end{align*}
To implement this oracle, for every valid pair $(v_i,b)$, where $v_i \in V$ and $b \in \{ 0,1 \}$, we reversibly compute the corresponding edge index, that is the corresponding index in the input string $x$,
\[
x(v_i,b)=
\begin{cases}
0 & \text{if } i = 0, b = 0,\\
i-1 & \text{if } i \neq 0, b = 0,\\
i & \text{if } i \neq 0, i \neq n, b = 1.
\end{cases}
\]
We then apply $O_x$ using the internal qubit as its target register:
\[
O_x \ket{x(v_i,b)}\ket{\psi} = \ket{x(v_i,b)}X^{x_{x(v_i,b)}}\ket{\psi}.
\]
To see the resulting action, write an arbitrary internal state as
\[
\ket{\psi} = \alpha\ket{0} + \beta\ket{1}.
\]
Then, by the definition of $O_x$,
\begin{align*}
O_x\ket{i}\ket{\psi} &= O_x\ket{i}\left(\alpha\ket{0} + \beta\ket{1}\right)\\
&= \alpha\ket{i}\ket{x_i} + \beta\ket{i}\ket{1\oplus x_i}\\
&= \ket{i}X^{x_i}\left(\alpha\ket{0} + \beta\ket{1}\right)\\
&= \ket{i}X^{x_i}\ket{\psi}.
\end{align*}
Thus, one call to $O_x$, with the internal qubit used as its target register, applies exactly the unitary label $X^{x_i}$ to the internal state.
We uncompute the edge index and apply the fixed mapping of the vertex and neighbour-index registers shown above. This implements $O_G$ using one query to $O_x$ and reversible classical computation. Invalid inputs can be detected without querying $O_x$, and the circuit acts as the identity on them.

Next, consider the weighted-neighbour oracle. Since the graph is unweighted,
\[
O_w\ket{v_i}\ket{0} =
\begin{cases}
\ket{v_i}\ket{0}
& \text{if }i \in \{0,n\},\\
\displaystyle
\ket{v_i} \frac{1}{\sqrt{2}} \left( \ket{0} + \ket{1} \right)
& \text{if } i \in \{ 1,\ldots,n - 1 \}.
\end{cases}
\]
The oracle $O_w$ is independent of $x$ and can therefore be implemented without any queries to $O_x$. Similarly, if access to a degree oracle is included, it is independent of $x$, since the endpoints have degree $1$ and all internal vertices have degree $2$. Finally, the boundary state-preparation unitaries can be chosen as $O_s = O_t = I$, and hence they also require no queries to $O_x$.

Consequently, any quantum algorithm that solves $st$-transport using $T$ queries to $O_G$ can be used to solve the parity problem using at most $T$ queries to $O_x$. By \lem{parity_complexity}, this requires $T=\Omega(n)$.
\end{proof}

\bibliographystyle{alpha}
\bibliography{Bibliography}

\end{document}